\def\csname ver@subfig.sty\endcsname{}
\renewenvironment{enumerate}[1]{\begin{compactenum}#1}{\end{compactenum}}
\crefname{ineq}{Inequality}{Inequalities}
\newcommand{\ignore}[1]{}
\newcommand{\kibitz}[2]{\ifnum\Comments=1\textcolor{#1}{#2}\fi}
\newcommand{\shuran}[1]{\kibitz{cyan}{[Shuran: #1]}}
\newcommand{\jiahao}[1]{\kibitz{teal}{[Jiahao: #1]}}
\newtheorem{definition}{Definition}[section]
\newtheorem{lemma}{Lemma}[section]
\newtheorem{theorem}{Theorem}[section]
\newtheorem{proposition}{Proposition}[section]
\newtheorem{example}{Example}[section]
\newtheorem{remark}{Remark}[section]
\newtheorem{assumption}{Assumption}[section]
\newtheorem{fact}{Fact}[section]
\newtheorem{corollary}{Corollary}[section]
\def\+#1{\mathcal{#1}}
\def\-#1{\mathbb{#1}}
\newcommand{\notshow}[1]{{}}
\newcommand{\AutoAdjust}[3]{{ \mathchoice{ \left #1 #2  \right #3}{#1 #2 #3}{#1 #2 #3}{#1 #2 #3} }}
\newcommand{\Xcomment}[1]{{}}
\newcommand{\InBrackets}[1]{\AutoAdjust{[}{#1}{]}}
\newcommand{\InAbs}[1]{\AutoAdjust{|}{#1}{|}}
\renewcommand{\part}[2]{\frac{\partial #1}{\partial #2}}
\DeclareMathOperator*{\argmax}{arg\,max}
\title{Ex-post Individually Rational Bayesian Persuasion}
\author[1]{Jiahao Zhang}
\author[2]{Shuran Zheng}
\author[3]{Renato Paes Leme}
\author[1]{Zhiwei Steven Wu}
\affil[1]{Carnegie Mellon University}
\affil[2]{Tsinghua University}
\affil[3]{Google Research}
\begin{document}
\maketitle
\begin{abstract}
In the Bayesian persuasion model, a sender can convince a receiver to choose an alternative action to the one originally preferred by the receiver. A crucial assumption in this model is the sender's commitment to a predetermined information disclosure policy (signaling scheme) and the receiver's trust in this commitment.  However, in practice, it is difficult to monitor whether the sender adheres to the disclosure policy, and the receiver may refuse to follow the persuasion due to a lack of trust. Trust becomes particularly strained when the receiver knows that the sender will incur obvious losses when truthfully following the protocol.  In this work, we propose the notion of \emph{ex-post individually rational (ex-post IR)} Bayesian persuasion: after observing the state, the sender is never asked to send a signal that is less preferred than no information disclosure. An ex-post IR Bayesian persuasion policy is more likely to be truthfully followed by the sender, thereby providing stronger incentives for the receiver to trust the sender. Our contributions are threefold. First, we demonstrate that the optimal ex-post IR persuasion policy can be efficiently computed through a linear program, while also offering its geometric characterization. Second, we show that surprisingly, for non-trivial classes of games, the requirement of ex-post IR constraints does not incur any cost to the sender's utility. Finally, we compare ex-post IR Bayesian persuasion to other information disclosure models that ensure different notions of credibility.
\end{abstract}

\section{Introduction}
In the wake of the seminal work by \citet{rayo2010optimal} and \citet{kamenica2011bayesian},
there has been a significant surge in research interests over the past decade concerning the design of optimal information disclosure policies, especially Bayesian persuasion. In the standard formulation of Bayesian persuasion, a sender strategically discloses information to influence a receiver's belief and chosen action. A crucial assumption in this model is that the sender will commit to a signaling policy and the receiver will trust this commitment. However, in most practical scenarios, this commitment is difficult to monitor, which may lead to the receiver refusing to follow the persuasion due to a lack of trust. The receiver's trust is especially fragile when there are clear deviations that could benefit the sender in certain situations. Consider the following scenario: 

\begin{example}[Lending]\label{ex:lending}
    There is a credit reporting agency (the sender) recommending borrowers to a lender (the receiver). The lender makes loaning decisions based on the borrowers' trustworthiness, i.e., whether the loan will be repaid. Suppose that the lender will reject the loan application if the probability of repayment is below $0.3$. The lender will issue a small loan if the probability is in $[0.3,1)$ and grant a huge loan if the probability is $1$. Suppose the lender's prior about a borrower's repayment probability is $0.5$, which means that without any recommendation from the sender, the lender will issue a small loan. Suppose the sender knows whether a borrower will repay the loan based on their credit history, and the sender can send a signal to the lender. The sender's goal is to persuade the lender to approve loans. The sender's utility is $0$ if the loan is rejected, $1$ if a small loan is approved, and $10$ if a huge loan is approved.  Then the optimal Bayesian persuasion policy that maximizes the sender's expected utility is to fully reveal whether the borrower will repay the loan, which gives an expected utility of $5$ supposing the prior probability of repayment is $0.5$.
\end{example}

The optimal information disclosure policy in Example \ref{ex:lending} has its Achilles' heel. The sender may hesitate to truthfully reveal low-credit borrowers who will not repay the loans, as it will yield a lower utility than no information disclosure, which guarantees a small loan for the borrower. Consequently, the receiver might cast doubt on the reliability of the entire protocol, as the sender incurs a clear loss when truthfully sending the signal. In the absence of trust, this policy may fail to effectively persuade the receiver. Motivated by this tension between incentives and trust, our work considers the notion of \emph{ex-post individually rational } (ex-post IR) Bayesian persuasion: after observing the state (the borrower's trustworthiness), the sender never sends a signal that yields a lower payoff for the sender than no information disclosure. We provide three types of results.




\begin{enumerate}

\item First, we study how to compute the optimal signaling scheme for ex-post IR Bayesian persuasion. We show that the optimal signaling scheme can still be solved by linear programming even if we add the ex-post IR constraint. In addition, we provide a geometric characterization of the optimal ex-post IR signaling scheme by extending the concave closure representation by \citet{kamenica2011bayesian} to incorporate the ex-post IR constraint. In the binary state setting, our geometric result enables a clean visualization of the optimal ex-post IR signaling policy.


\item Second, we establish conditions where the ex-post IR constraint imposes no cost to the sender. While there are instances in which the ex-post IR constraint incurs a cost to the optimal sender utility, a natural question is when ex-post IR incurs no cost. Building on our aforementioned geometric result, we give a complete geometric characterization in the binary state case (\cref{thm:n*2}) of when the sender achieves the same optimal utilities whether or not ex-post IR constraint is required. Our proof technique links the optimal ex-post IR signaling scheme with the \emph{quasiconcave closure} of the sender's expected utility function. As a corollary, the geometric characterization enables a more computationally efficient algorithm than simply running linear programming to determine whether the ex-post IR constraint incurs a sender utility gap. 

However, our geometric approach for the binary state setting does not extend to the general state space setting. Going beyond binary states, we focus on characterizing the utility gap for two families of Bayesian persuasion problems that are motivated by real-world applications.




\begin{itemize}
    \item We define a broad class of problems called \emph{trading games} (\cref{def:trading}), which include \emph{bilateral trade} and first-price auction with reserve price as special cases. We show that the optimal signaling schemes of all trading games are ex-post IR. 
    \item We identify a general class of games that meet two conditions called \emph{cyclical monotonicity} and \emph{weak logarithmic supermodularity}, with the \emph{credence goods} market as a notable special case. We introduce a new notion of the \emph{greedy signaling scheme} and show that it is ex-post IR for this class. Then in the special case of the credence goods market, we show that there exists a greedy signaling scheme is optimal, which implies that ex-post IR constraint incurs no cost to the sender's utility.   
    
\end{itemize}

\item Third, we compare the sender's optimal expected utility under varying degrees of commitment power. At one extreme is the standard Bayesian persuasion model, where the sender has full commitment power, i.e., they can commit any Bayes-plausible signaling scheme as they like. At the other extreme is the cheap talk model, where the sender has no commitment power at all. As a result, the sender's optimal expected utility is always higher in Bayesian persuasion than in cheap talk (\cite{lipnowski2020equivalence}).  We show that in the ex-post IR Bayesian persuasion model, the sender's optimal utility is always between the optimal utilities of Bayesian persuasion and cheap talk \emph{if and only if} the sender has an ordered preference over actions (see \cref{sorted}). Finally, we compare with a model called \emph{credible persuasion} studied by \cite{lin2022credible}, which also studied the credibility of the sender. We show that the sender's optimal utility in credible persuasion is also between the optimal utilities of Bayesian persuasion and cheap talk. However, we provide two examples showing that the utility of the optimal ex-post IR signaling scheme could be higher or lower than that of optimal credible persuasion.
\end{enumerate}

The rest of the paper is organized as follows: \cref{section:model} introduces the model of Bayesian persuasion and the definition of ex-post IR Bayesian persuasion. \cref{section:optimality} provides results about the optimality of ex-post IR Bayesian persuasion in both algebraic and geometric views, which is similar to \cite{kamenica2011bayesian}. \cref{section:gap} provides results of when there is no sender utility gap between the Bayesian persuasion and the ex-post IR Bayesian persuasion. \cref{section:compare} compares the optimal sender's utilities under different information transmission models that ensure different notions of credibility. The related literature and all omitted proofs are in the appendix. 

\section{Model}\label{section:model}
We introduce the standard Bayesian persuasion problem with finite actions and states, where a sender chooses a signal to reveal information to a receiver who then takes an action. The sender's utility $v(a, \theta)$ and the receiver's utility $u(a, \theta)$ depend on the receiver's action $a\in A$ \shuran{Why do we need the subscript $i$ here? It seems to me $a\in A$ is fine.} and the state of world $\theta\in\Theta$. Throughout this paper, we impose two standard assumptions. First, the sender has\shuran{Reworded a bit here.} a preference over the actions regardless of the state.  See \cref{sorted} for the formal description. Second, there is no dominated action for the receiver, i.e. for any action $a$, there exists a posterior belief $\mu$ that $a$\shuran{Again, it might be better to just use $a$} is the best action for the receiver under $\mu$. This is without loss of generality because we can drop dominated actions.


\begin{assumption}\label{sorted}
    Let $\InAbs{A}=n$. The sender's preference over action is $a_1\ge a_2\ge...\ge a_n$. In other words, for any $\theta\in\Theta$ and any $i,j\in[n]$ that $i\le j$, we have
$v(a_i,\theta)\ge v(a_j,\theta)$.
\end{assumption}
The sender and receiver share a common prior $\mu_0\in\Delta(\Theta)$ about the state. The sender commits to a \emph{signaling scheme} $\phi$, which is a randomized mapping from the state to signals. Formally, the signaling scheme is defined as $\phi: \Theta\mapsto\Delta(S)$, where $S$ is the set of all possible signals. For convenience, let $\phi_{\theta} \in \Delta(S)$ be the conditional signal distribution when the state is $\theta\in\Theta$\shuran{Reworded a bit here} and we denote by $\phi_{\theta}(s)$ the probability of sending signal $s\in S$ when the state is $\theta$. For ease of notation, we use $\phi$ to also denote the distribution over signals induced by the signaling scheme $\phi$ and the prior $\mu_0$ and denote $\phi(s)$ the probability of sending signal $s\in S$.


The interaction between the sender and the receiver goes on as follows:
\begin{enumerate}
    \item the sender commits to a signaling scheme $\phi$ and the receiver observes it;
    \item the sender observes a realized state of nature $\theta\sim\mu_0$;
    \item the sender draws a signal $s\in S$ according to the distribution $\phi_{\theta}$ and sends it to the receiver;
    \item the receiver observes the signal $s$ and updates her belief about the state according to the \emph{Bayes rule};
    \item the receiver chooses an action that maximizes her expected utility.
\end{enumerate}

The receiver's posterior belief after seeing $s\in S$ is denoted by $\mu_s$. We have $\mu_s(\theta)=\frac{\mu_0(\theta)\phi_\theta(s)}{\phi(s)}$.

For any posterior belief $\mu_s$, denote the receiver's best response $a^{*}(\mu_s)\in\argmax_{a\in A}\sum_{\theta\in\Theta}u(a,\theta)\mu_s(\theta)$. For convenience, let the index of the best response under prior $a^*(\mu_0)$ be $k^*$, i.e. $a^*(\mu_0)=a_{k^*}$. As is customary in the literature, we assume that the receiver breaks ties in favor of the sender. The sender's expected utility when sending $s$ will be equal to $\hat{v}(\mu_s):=\sum_{\theta\in\Theta}v\big(a^*(\mu_s),\theta\big)\mu_s(\theta)$.  An outcome is a pair of the realized world state and the receiver's action. Denote $\pi\in \Delta( A\times\Theta)$\shuran{Changed the expression a bit here} the probability distribution of outcomes given the signaling scheme is $\phi$ and the prior $\mu_0$. We have
\begin{eqnarray*}
  \pi(a,\theta)=\sum_{s\in S:a^*(\mu_s)=a}\sum_{\theta\in\Theta}\phi_{\theta}(s)\mu_0(\theta).  
\end{eqnarray*}.

Sometimes we use $\pi$ to represent a signaling scheme. Then, we introduce the definition of the ex-post individually rational (IR) signaling scheme.
\begin{definition}[Ex-post individual rationality]
A signaling scheme $\phi$ is ex-post IR for the sender if for any state $\theta\in\Theta$ and any signal $s\in S$, $v(a^*(\mu_s),\theta) \ge v(a^*(\mu_0),\theta)$.
 
\end{definition}
The definition of ex-post IR  requires that compared to no communication, the sender will not regret sending the signal $s$ in any case. It can also mean that after the sender sees the state and draws a signal in Step 3, he will not hesitate to send the signal. Next, we give the definition of the Bayesian persuasion instance or \emph{game} for short.{\jiahao{added the definition of game}}
\begin{definition}[game]
    A game (Bayesian persuasion instance) is a quadruple $(A,\Theta,v,u)$ where A is the action set, $\Theta$ is the state set, $v$ is the sender's utility function and $u$ is the receiver's utility function.
\end{definition}
\section{Warmup: optimal ex-post IR signaling schemes\shuran{Changed the title}}\label{section:optimality}
It is well known that the optimal signaling scheme of a standard Bayesian persuasion problem with a finite number of actions and states can be solved by linear programming. In addition, the optimal expected Sender utility can be characterized by the concave closure of the expected Sender utility function $\hat{v}:\Delta(\theta)\mapsto\-R$, which maps a posterior belief to the expected Sender utility. In this section, we extend these two standard results to the setting of ex-post IR Bayesian persuasion. We first show that the ex-post IR constraints are linear, which implies that the optimal ex-post IR signaling scheme can also be solved by linear programming. Then we 
provide a concavefication characterization of the optimal ex-post IR signaling scheme.

\subsection{Linear programming}
The linear programming for Bayesian persuasion is 
\[
    \max_{\pi}\sum_{a\in A}\sum_{\theta\in\Theta}v(a,\theta)\pi(a,\theta)
\]
    subject to
    \begin{equation}\label{eq: incentive}
        \sum_{\theta\in\Theta}(u(a,\theta)-u(a^\prime,\theta))\pi(a,\theta)\le0\quad\forall a,a^\prime\in A
    \end{equation}
    \begin{equation}\label{eq: feasible}
        \sum_{a\in A}\pi(a,\theta)=\mu_0(\theta)\quad\forall \theta\in\Theta,
    \end{equation}
where \cref{eq: incentive} is the incentive compatible constraints for the Receiver and \cref{eq: feasible} is to make sure the signaling scheme is \emph{Bayes-plausible}. To solve the optimal ex-post IR signaling scheme, we only need to add the following ex-post IR constraints
\begin{equation}
    \pi(a,\theta)=0 \quad\forall a\in A,\theta\in\Theta, u(a,\theta)<u(a^*(\mu_0),\theta).
\end{equation}

\subsection{Concave closure}

Let $V$ be the concave closure of the expected utility function $\hat{v}$.
\begin{fact}
    The value of the optimal signaling scheme is $V(\mu_0)$. The sender benefits from persuasion if and only if $V(\mu_0)>\hat{v}(\mu_0)$
\end{fact}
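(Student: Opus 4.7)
The plan is to reduce this to the classical equivalence between signaling schemes and Bayes-plausible distributions over posteriors, and then invoke the definition of the concave closure. First I would observe that any signaling scheme $\phi$ induces a distribution $\tau_\phi \in \Delta(\Delta(\Theta))$ over posterior beliefs by setting $\tau_\phi(\mu) = \sum_{s : \mu_s = \mu} \phi(s)$. The Bayes rule then gives $\mathbb{E}_{\mu \sim \tau_\phi}[\mu] = \mu_0$ (Bayes-plausibility): for each $\theta$, the marginal is $\sum_s \phi(s) \mu_s(\theta) = \sum_s \mu_0(\theta)\phi_\theta(s) = \mu_0(\theta)$. Conversely, any finitely supported distribution $\tau$ over $\Delta(\Theta)$ with mean $\mu_0$ can be implemented by a signaling scheme whose signal set is the support of $\tau$; assigning signal $s_\mu$ with probability $\tau(\mu)\mu(\theta)/\mu_0(\theta)$ in state $\theta$ recovers $\mu$ as the posterior upon observing $s_\mu$.

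Next I would rewrite the sender's expected utility in terms of $\tau_\phi$. Since the receiver plays $a^*(\mu_s)$ after seeing $s$, the sender's expected utility under $\phi$ equals $\sum_s \phi(s)\hat v(\mu_s) = \mathbb{E}_{\mu \sim \tau_\phi}[\hat v(\mu)]$. Combined with the correspondence above, the optimal sender utility becomes
\[
\sup_{\tau \in \Delta(\Delta(\Theta)) : \mathbb{E}_\tau[\mu] = \mu_0} \mathbb{E}_{\mu \sim \tau}[\hat v(\mu)],
\]
which is exactly the value of the concave closure $V(\mu_0)$ by its definition as the pointwise smallest concave function dominating $\hat v$. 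The supremum is attained by a distribution supported on at most $|\Theta|$ posteriors via Carath\'eodory's theorem applied to the convex hull of the graph of $\hat v$.

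For the second statement, the ``no persuasion'' signaling scheme (sending a constant signal) induces the degenerate distribution $\delta_{\mu_0}$ and attains utility $\hat v(\mu_0)$, so the sender always has a feasible strategy achieving $\hat v(\mu_0)$; hence $V(\mu_0) \ge \hat v(\mu_0)$ always. The sender strictly benefits from persuasion precisely when some Bayes-plausible $\tau \neq \delta_{\mu_0}$ yields a strictly larger expected utility, i.e., when $V(\mu_0) > \hat v(\mu_0)$.

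The proof is essentially bookkeeping once the two-way correspondence between signaling schemes and Bayes-plausible posterior distributions is established; the only mildly subtle point is justifying that restricting to finitely supported $\tau$ does not lose generality, which follows from Carath\'eodory because $\Delta(\Theta)$ is $(|\Theta|-1)$-dimensional and $\hat v$ takes only finitely many distinct values (one per induced receiver action).
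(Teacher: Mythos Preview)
Your proposal is correct and reproduces the standard Kamenica--Gentzkow argument. The paper does not supply its own proof of this statement; it is labeled a \emph{Fact} and treated as a known result from \cite{kamenica2011bayesian}, so there is nothing to compare against beyond noting that your argument is the canonical one.

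One small inaccuracy worth fixing: in your final parenthetical you write that ``$\hat v$ takes only finitely many distinct values (one per induced receiver action).'' That is not true; $\hat v(\mu)=\sum_\theta v(a^*(\mu),\theta)\mu(\theta)$ is piecewise \emph{linear} in $\mu$, not piecewise constant, so it takes a continuum of values. This does not damage your Carath\'eodory step, however, since what matters is only that the graph of $\hat v$ sits in a space of dimension $|\Theta|$, which already suffices to restrict attention to finitely supported $\tau$.
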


Now we consider Bayesian persuasion with ex-post IR constraints. Instead of computing the concave closure on the whole domain of posterior belief, we define a closure only on the posterior belief $\mu$ that $a^*(\mu)\ge a^*(\mu_0)$. Let $A^+$ be the set of actions preferred by Sender over $a^*(\mu_0)$. Formally we have
\begin{equation}\label{eq:closure}
   V_{\textsc{ex-post}}(\mu_0)=\max_{\tau\in\Delta(\Delta(\Theta))}\left\{\sum_{\mu\in\mathrm{supp}(\tau)}\tau(\mu)\hat{v}(\mu): \sum_{\mu\in\mathrm{supp}(\tau)}\tau(\mu)\mu=\mu_0,\quad a^*(\mu)\in A^+,\forall\mu\in\mathrm{supp}(\tau)\right\}.
\end{equation}

\begin{theorem}
       The value of the optimal ex-post IR signaling scheme is $V_{\textsc{ex-post}}(\mu_0)$. The sender benefits from ex-post IR persuasion if and only if $V_{\textsc{ex-post}}(\mu_0)>\hat{v}(\mu_0)$.
\end{theorem}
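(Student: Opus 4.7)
The plan is to mimic the Kamenica–Gentzkow concavification argument, showing that ex-post IR constraints translate cleanly into a restriction on which posteriors may appear in the support of the induced distribution over posteriors. The first step is to observe, using \cref{sorted}, that the ex-post IR condition $v(a^*(\mu_s),\theta)\ge v(a^*(\mu_0),\theta)$ for every $\theta$ and every realized signal $s$ is \emph{equivalent} to the condition $a^*(\mu_s)\in A^+$ for every $s$. The ``if'' direction follows immediately from the monotonicity $v(a_i,\theta)\ge v(a_j,\theta)$ whenever $i\le j$; the ``only if'' direction uses the same ordering since, taking expectations over $\theta\sim\mu_s$, the ex-post IR inequality forces $\hat v(\mu_s)\ge\hat v(\mu_0)$ in a way inconsistent with $a^*(\mu_s)$ being strictly less preferred than $a^*(\mu_0)$.

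Next, I would invoke the standard splitting lemma: a distribution $\tau\in\Delta(\Delta(\Theta))$ arises as the distribution of posteriors induced by some signaling scheme $\phi$ together with prior $\mu_0$ if and only if $\tau$ is Bayes-plausible, i.e. $\sum_{\mu\in\mathrm{supp}(\tau)}\tau(\mu)\mu=\mu_0$. Under tie-breaking in the sender's favor, the receiver plays $a^*(\mu)$ upon seeing the signal associated with posterior $\mu$, and the sender's expected utility equals $\sum_{\mu}\tau(\mu)\hat v(\mu)$. Combining this with the first step, the set of ex-post IR signaling schemes is in bijection (modulo irrelevant relabeling of signals) with the set of Bayes-plausible $\tau$ whose support lies in $\{\mu:a^*(\mu)\in A^+\}$, and the sender's value equals the objective appearing inside the maximum in \eqref{eq:closure}. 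Taking the supremum over all such $\tau$ yields $V_{\textsc{ex-post}}(\mu_0)$ as the optimal value, proving the first assertion.

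For the second assertion, note that the trivial signaling scheme that sends a constant signal induces $\tau=\delta_{\mu_0}$, whose support is $\{\mu_0\}$ and which trivially has $a^*(\mu_0)\in A^+$; so this $\tau$ is always feasible in \eqref{eq:closure}, giving $V_{\textsc{ex-post}}(\mu_0)\ge \hat v(\mu_0)$. The sender strictly benefits from ex-post IR persuasion precisely when some other feasible $\tau$ achieves a strictly higher value, i.e., when $V_{\textsc{ex-post}}(\mu_0)>\hat v(\mu_0)$. I do not expect any serious obstacle: the only subtlety is the equivalence in the first step, and there one must be careful that ex-post IR is stated pointwise in $\theta$ (not just in expectation), which is exactly why \cref{sorted} is needed to turn it into a clean condition on $a^*(\mu_s)$.
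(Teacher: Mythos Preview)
Your proposal is correct and matches the paper's (omitted) argument, which simply invokes \eqref{eq:closure} together with the Kamenica--Gentzkow splitting lemma. One minor correction: your ``only if'' justification via averaging over $\theta\sim\mu_s$ does not actually produce $\hat v(\mu_0)$ on the right-hand side, but no averaging is needed---under \cref{sorted} the pointwise condition $v(a^*(\mu_s),\theta)\ge v(a^*(\mu_0),\theta)$ for all $\theta$ is \emph{by definition} the statement that $a^*(\mu_s)$ is weakly sender-preferred to $a^*(\mu_0)$, i.e., $a^*(\mu_s)\in A^+$.
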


The proof directly follows from the \cref{eq:closure} and \cite{kamenica2011bayesian}, which is straightforward and omitted. Recall \cref{ex:lending} in the introduction, \cref{fig:ex-post closure} visualize the concave closure $V$ and the concave closure with ex-post IR constraints $V_{\textsc{ex-post}}$ in this example.

\begin{figure}
    \centering
    \includegraphics[width=0.33\textwidth]{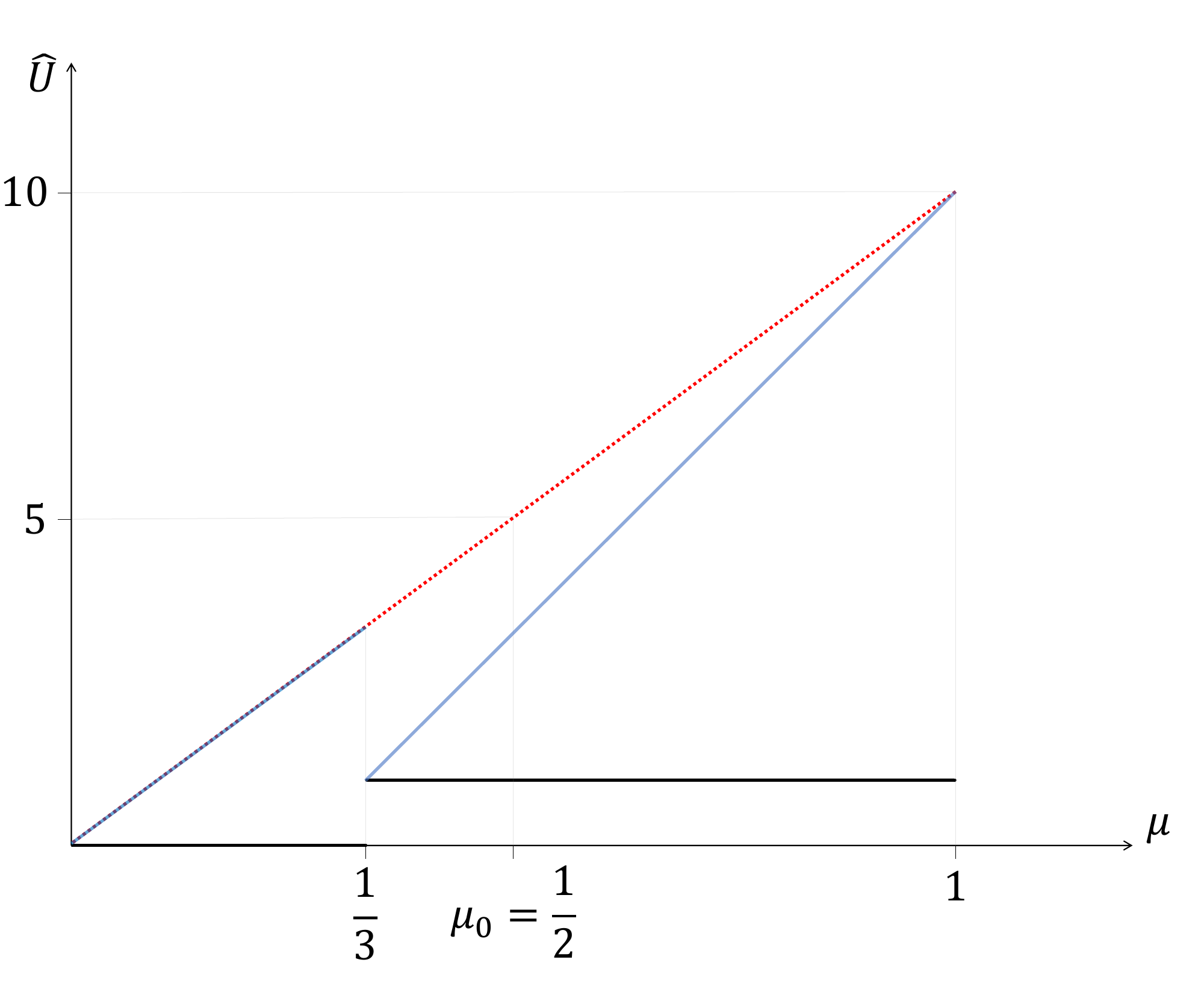}
    \caption{The black line is the utility curve of the sender's expected utility function $\hat{v}$ in \cref{ex:lending}. The red line is the concave closure $V$. The blue line is the concave closure with ex-post IR constraints $V_{\textsc{ex-post}}$.}
    \label{fig:ex-post closure}
\end{figure}
\section{The effect of the ex-post IR constraint\shuran{Changed the title}}
\label{section:gap}
Now that we know how to compute the optimal ex-post IR signaling scheme and its concave closure representation. %
In this section, we compare the sender's optimal signaling schemes with and without the ex-post IR constraint, and examine how the ex-post IR requirement affects the sender's expected utility and provide more explicit results. We focus on the following question: under what conditions does the ex-post IR constraint incurs no cost to the sender's utility?



Firstly, in \cref{subsection:binary}, we characterize the impact of the ex-post IR constraint by a simple function: \emph{smoothed quasiconcave closure}. We show that the ex-post IR constraint imposes no costs if and only if the smoothed quasiconcave closure is concave.
This function is simple: it is basically a refined \emph{quasiconcave closure} of the sender's expected utility function, smoothed by connecting the adjacent endpoints. 
See \cref{thm:n*2} for details.
For the general cases, our results are driven by applications. We identify two non-trivial classes of games with real-world applications, demonstrating that ex-post IR constraint incurs no costs in these games. The first class of games is \emph{trading game} in \cref{subsection:bilateral}. The second class of games is in \cref{subsection:credence} with credence goods market as its application.

\shuran{Moved the detailed discussions to later sections.}

\subsection{Geometric characterization in the binary state case}\label{subsection:binary}
In this section, we consider a binary state set $\Theta=\{\theta_1,\theta_2\}$ and a finite action set $A=\{a_1,a_2,...a_n\}$. In this case, it's known that the optimal signaling scheme can be visualized from the concave closure of the sender's expected utility function. In our paper, instead of considering the concave closure, we show how to determine whether there is a gap between the optimal signaling scheme and optimal ex-post IR signaling scheme through the quasiconcave closure of the sender's expected utility function. For convenience, we first define the \emph{partition} of $[0,1]$ and the thresholds of a partition.
\begin{definition}[partition]
    A family of intervals $P$ is a partition of $[0,1]$ if and only if
    \begin{enumerate}
        \item the family $P$ does not contain $\emptyset$,
        \item the union of intervals in $P$ is equal to $[0,1]$,
        \item the intersection of any two distinct intervals is $\emptyset$.
    \end{enumerate}
For a partition $\{I_i\}_{i=1}^n$ of $[0,1]$, let $\alpha_0=0$ and $\alpha_i$ be the right endpoint of interval $I_i$. We can always sort $\{I_i\}_{i=1}^n$ such that $0=\alpha_0<\alpha_1<\cdots<\alpha_n=1$. The threshold of $I_i$ and $I_{i+1}$ is $\alpha_i$ for any $1\le i<n$. Putting it all together, we denote a partition $P$ with its thresholds as $P=(\{I_i\}_{i=1}^n,\{\alpha_i\}_{i=0}^n)$.
Sometimes we will omit thresholds $\{\alpha_i\}_{i=0}^n$ if they are not needed.
\end{definition}

{\jiahao{In this section, uniformly use "intervals" to represent the subdomains of piecewise function.}}In the binary state case, say $\mu(\theta_1)$ is the independent variable of the function $\hat{v}$, the sender's expected utility function is a piecewise linear function that divides the domain $[0,1]$ into $n$ disjoint \emph{intervals} $I_1, I_2,\cdots,I_n$. Each interval $I_i$ corresponds to a action $a^*(I_i)$ that the receiver's best response under $\mu$ is $a^*(I_i)$ for any posteriors $\mu$ that $\mu(\theta_1)\in I_i$. In particular, let the interval corresponding to the sender's favorite action $a_1$ be $I_{i^*}$. Therefore, the sender's expected utility function $\hat{v}$ corresponds to a partition $\hat{P}=(\{I_i\}_{i=1}^n,\{\alpha\}_{i=0}^n)$ of $[0,1]$. When $\mu(\theta_1)\in(\alpha_{i-1},\alpha_i)$ for all $1\le i\le n$, the best response for the receiver is $a^*(I_i)$. When $\mu(\theta_1)=\alpha_i$ for some $i$, the receiver breaks the tie in favor of the sender. We then present the definition of \emph{quasiconcave function} and \emph{quasiconcave closure}.

\begin{definition}[quasiconcave function]
A function $f:[0,1]\rightarrow\mathbb{R}$ defined on [0,1] is quasiconcave if for all $x,y\in [0,1]$ and $\lambda\in[0,1]$, we have $f(\lambda x+ (1-\lambda)y)\ge\min\{f(x),f(y)\}$.
\end{definition}

\begin{definition}[quasiconcave closure]
The quasiconcave closure of a function $f:[0,1]\rightarrow\mathbb{R}$ defined on $[0,1]$ is the pointwise lowest quasiconcave and upper semicontinuous function that is no less than $f$ at every point on $[0,1]$. \shuran{What does majorize mean? It seems to be an uncommon math jargon. Maybe give an simpler explanation before the formal definition, such as the quasiconcave closure is the minimal quasiconcave function that is no less than f at every point.}
\end{definition}
{\jiahao{added a figure for quasiconcave closure}}
\begin{figure}[!h]
     \centering
     \begin{subfigure}[b]{0.33\textwidth}
         \centering
         \includegraphics[width=\textwidth]{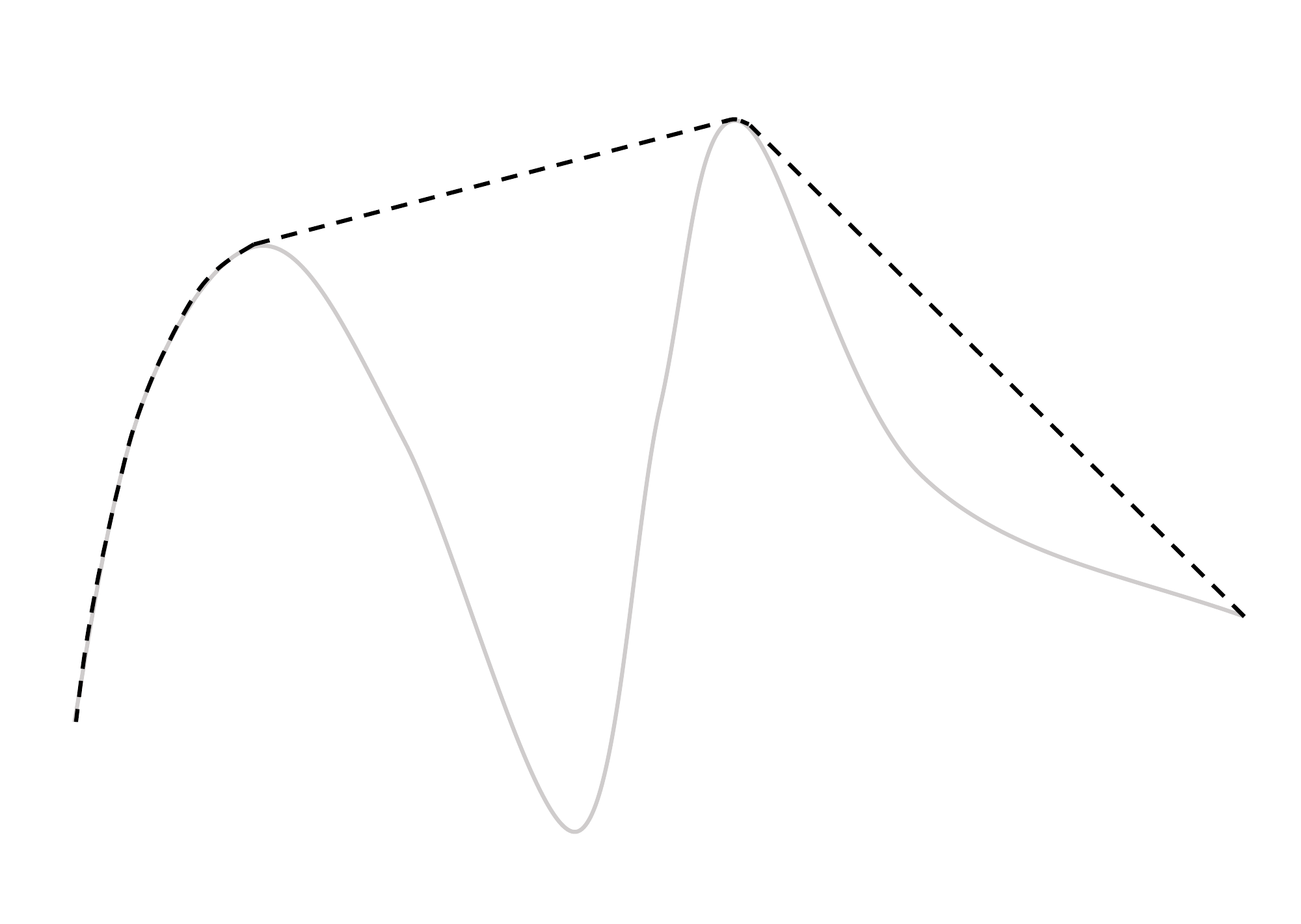}
     \end{subfigure}
     \begin{subfigure}[b]{0.33\textwidth}
         \centering
         \includegraphics[width=\textwidth]{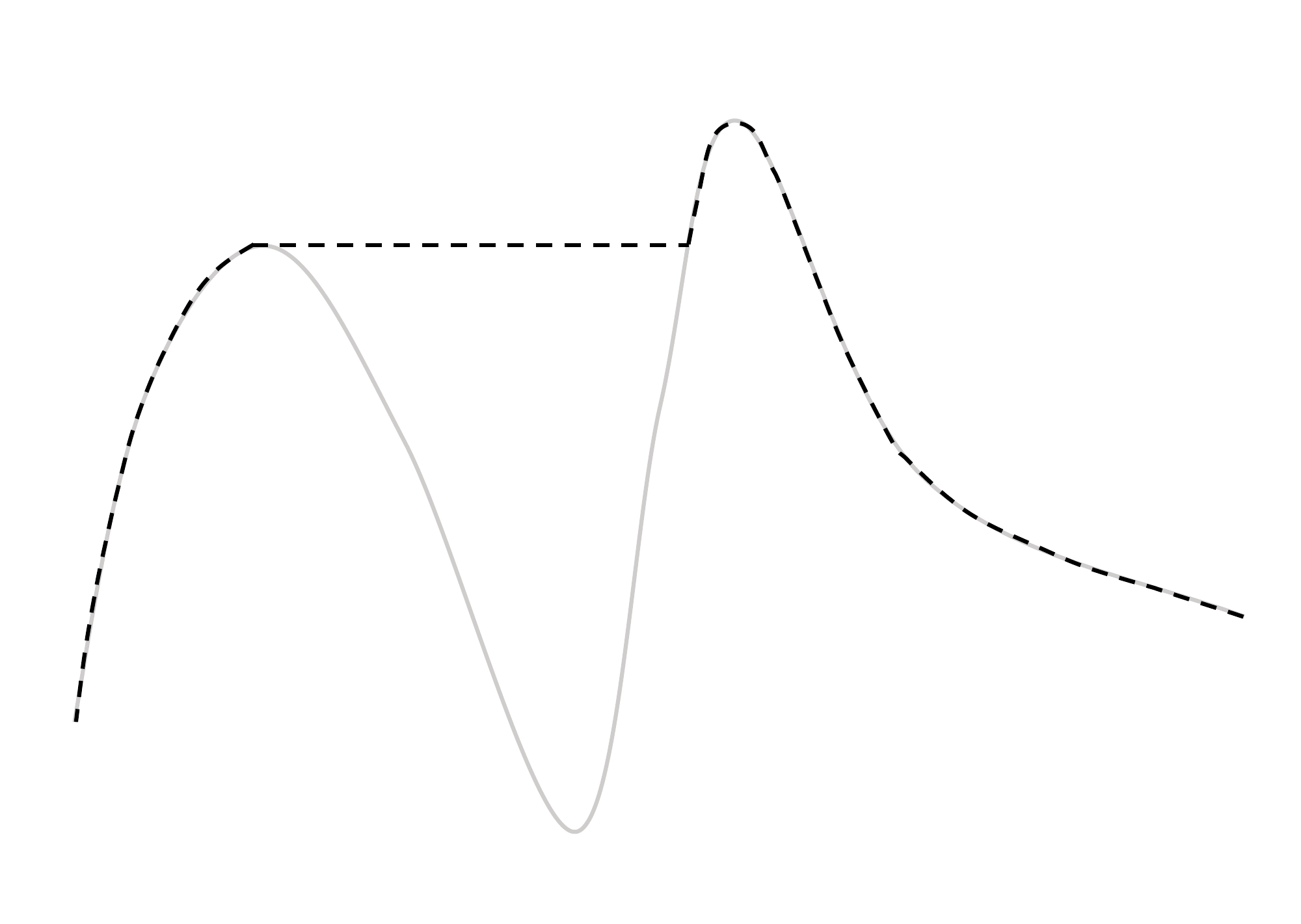}
     \end{subfigure}
     \caption{A function with its concave (left) and quasiconcave (right) closures.}
     \label{fig: quasi-example}
\end{figure}

Next, we present a useful lemma of the quasiconcave closure of the sender's expected utility function. It says the quasiconcave closure of $\hat{v}$ is also a piecewise function and shows the monotonicity of the quasiconcave closure. Above we say the function $\hat{v}$ corresponds to the partition $\hat{P}$. It is actually true that any piecewise function $f$ on $[0,1]$ corresponds to a partition $P$ of [0,1] consisting of all subdomains of $f$. The proof is simple and omitted. Now we use this fact to formally define the monotonicity of the piecewise function with respect to intervals.

{\jiahao{formal math for the following definitions, lemma and theorem}}
\begin{definition}[monotone with respect to intervals]\label{def: mono_piece}
    Consider a one-dimensional piecewise function $f:[0,1]\rightarrow\mathbb{R}$ corresponding to a partition $\{I_i\}_{i=1}^n$ of $[0,1]$. $f$ is increasing (decreasing) w.r.t intervals if for all $i,j\in[n]$ that $i<j$ and for all $x,y$ that $x\in I_i,y\in I_j$, $
        f(x)\le(\ge) f(y)$.
\end{definition}

Then we formally present our lemma of the quasiconcave closure as follows.
\begin{lemma}\label{lem:quasiconcave}
      Let $\overline{V}$ be the quasiconcave closure of the sender's utility function $\hat{v}$. Then we have
      \begin{enumerate}
          \item The quasiconcave closure $\overline{V}$ is a piecewise function corresponding to a partition $\overline{P}=(\{J_j\}_{j=1}^m,\{\beta_j\}_{j=0}^m)$ of $[0,1]$ such that $\overline{V}$ is continuous on each interval $J_j$ for all $j\in[m]$ and $\beta_j$ is a discontinuity for $1\le j<m$.
          \item There exists $i^*\in[n]$ and $j^*\in[m]$ such that $I_{i^*}\subset J_{j^*}$, i.e. there exists an interval in $\overline{P}$ such that  it contains the interval $I_{i^*}$ corresponding the sender's favorite action $a_1$.
          \item For all $j<j^*$, $\overline{V}$ is increasing w.r.t intervals and for all $j>j^*$, $\overline{V}$ is decreasing w.r.t intervals.
      \end{enumerate}
\end{lemma}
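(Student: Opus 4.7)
The plan is to work with the explicit one-dimensional representation of the quasiconcave closure as a ``sandwich'':
\[
\overline V(x) \;=\; \sup_{y_1\le x\le y_2}\min\bigl(\hat v(y_1),\hat v(y_2)\bigr) \;=\; \min\bigl(L(x),R(x)\bigr),
\]
where $L(x)=\sup_{y\le x}\hat v(y)$ and $R(x)=\sup_{y\ge x}\hat v(y)$ are the left/right running maxima of $\hat v$. To justify this I would check that the sandwich function is quasiconcave (its $c$--super-level set is the convex hull of $\{\hat v\ge c\}$, an interval in $[0,1]$), upper semicontinuous (convex hulls of closed subsets of $[0,1]$ are closed), pointwise $\ge\hat v$, and minimal with those properties (since any usc quasiconcave $g\ge\hat v$ satisfies $g(x)\ge\min(g(y_1),g(y_2))\ge\min(\hat v(y_1),\hat v(y_2))$). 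The second equality then follows from maximizing $\hat v(y_1)$ and $\hat v(y_2)$ separately.

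For Part~1 I would use that $\hat v$ is piecewise linear on the finite partition $\hat P$ and upper semicontinuous (the receiver breaks ties in the sender's favor). Therefore $L$ is non-decreasing and $R$ non-increasing, each with only finitely many jump discontinuities, all located in the breakpoint set $\{\alpha_0,\dots,\alpha_n\}$. Consequently $\overline V=\min(L,R)$ is continuous off a finite set, and cutting $[0,1]$ at those discontinuities produces the required partition $\overline P$ with the $\beta_j$ being exactly the discontinuities of $\overline V$.

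For Part~2, let $h(x)=v(a_1,\theta_1)\,x+v(a_1,\theta_2)\,(1-x)$ be the ``hypothetical $a_1$'' sender utility. By \cref{sorted}, $\hat v(y)\le h(y)$ for every $y\in[0,1]$, with equality on $I_{i^*}$. Because $h$ is linear and hence monotone, $\sup_{y_1\le x\le y_2}\min(h(y_1),h(y_2))=h(x)$, which upper-bounds $\overline V(x)$ through $\hat v\le h$. Combined with $\overline V(x)\ge\hat v(x)=h(x)$ for $x\in I_{i^*}$, this forces $\overline V\equiv h$ on $I_{i^*}$. Since $h$ is continuous, no discontinuity of $\overline V$ lies inside $I_{i^*}$, and the whole interval sits in a single continuity piece $J_{j^*}$.

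For Part~3 I would invoke quasiconcavity: pick a global maximizer $x^*$ of $\hat v$, so $\overline V(x^*)=\hat v(x^*)=\max\hat v$. On $[0,x^*]$ one has $R\equiv\max\hat v\ge L$, hence $\overline V=L$ is non-decreasing; on $[x^*,1]$ we have $\overline V=R$ which is non-increasing. Any discontinuity of a non-decreasing function is an upward jump, so for $j_1<j_2<j^*$ the supremum of $\overline V$ on $J_{j_1}$ is strictly below its infimum on $J_{j_2}$, yielding ``increasing with respect to intervals''. The symmetric argument on $[x^*,1]$ handles the decreasing case for $j>j^*$.

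The main obstacle is the tacit step in Part~3 that the peak $x^*$ lies in the same continuity piece $J_{j^*}$ as $I_{i^*}$, so that the partition genuinely splits into a left arm $\{J_j:j<j^*\}$ and a right arm $\{J_j:j>j^*\}$. Establishing this requires a second use of the bound $\hat v\le h$: on the segment joining $I_{i^*}$ to $x^*$, any upward jump of $L$ or downward jump of $R$ corresponds to a breakpoint $\alpha_k$ at which $\hat v$ outruns its prior envelope, and one must argue via \cref{sorted} and the linearity of $h$ that such a jump cannot occur strictly between $I_{i^*}$ and $x^*$. Once this containment is in hand, the monotonicity claim is immediate from the sign of the jumps on the two arms.
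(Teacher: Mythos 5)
Your Parts 1 and 2 are sound and take essentially the paper's route: your sandwich identity $\overline V=\min(L,R)$ is exactly the paper's formula $\overline{V}(\mu)=\min\{\max_{\mu_1\le\mu}\hat v(\mu_1),\max_{\mu_2\ge\mu}\hat v(\mu_2)\}$, and your Part 2 argument via the linear envelope $h$ is in fact tighter than the paper's, which instead asserts outright that $\hat v$ is strictly larger on $I_{i^*}$ than everywhere else.

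The genuine gap is precisely the step you flag at the end and leave open: that the maximizer $x^*$ of $\hat v$ lies in the same continuity piece $J_{j^*}$ as $I_{i^*}$. This is not a routine verification to be supplied later --- under the stated hypotheses it is false, so no argument ``via \cref{sorted} and the linearity of $h$'' can close it. \cref{sorted} only yields $\hat v\le h$ with equality on $I_{i^*}$; when $h$ is non-constant, $\hat v$ can exceed $\max_{I_{i^*}}\hat v$ far from $I_{i^*}$, with a jump of $L$ strictly in between. Concretely, take $\Theta=\{\theta_1,\theta_2\}$ and five actions with $\bigl(v(a_1,\theta_1),v(a_1,\theta_2)\bigr)=(10,0)$, $v(a_2,\cdot)=(9,0)$, $v(a_3,\cdot)=(5,0)$, $v(a_4,\cdot)=v(a_5,\cdot)=(0,0)$, so that \cref{sorted} holds, and choose receiver utilities making the best-response intervals, left to right, $a_4,a_1,a_5,a_3,a_2$ with thresholds $0.1,0.2,0.5,0.75$ (any spatial ordering of non-dominated actions is realizable in the binary-state case). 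Writing $x=\mu(\theta_1)$, the function $\hat v$ equals $0$, $10x$, $0$, $5x$, $9x$ on the successive intervals; its global maximum is $9$ at $x=1$, so $R\equiv 9$ and $\overline V=L$ on all of $[0,1]$, with upward jumps at $0.1$, $0.5$, and $0.75$. Hence $I_{i^*}=[0.1,0.2]\subset J_2=[0.1,0.5)$ and $j^*=2$, yet $\overline V$ takes values in $[2.5,3.75)$ on $J_3=[0.5,0.75)$ and in $[6.75,9]$ on $J_4=[0.75,1]$: it is increasing, not decreasing, across the intervals to the right of $J_{j^*}$, so Part 3 fails. (The same example defeats the paper's own proof, whose key step asserts that $\max_{\mu_2>\mu}\hat v(\mu_2)=\max\{\hat v(\beta_{j^*-1}),\hat v(\beta_{j^*})\}$ for $\mu$ to the left of $J_{j^*}$.)

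The containment you need does hold, and your two-arm monotonicity argument then goes through verbatim, if the sender's utility is state-independent: then $h$ is constant, $\hat v(\mu)=v(a^*(\mu))\le v(a_1)$ with equality exactly on $I_{i^*}$, so the peak $x^*$ lies in $I_{i^*}\subset J_{j^*}$. That is the regime of all of the paper's worked examples. So the right repair is not to hunt for the missing jump-exclusion argument but to add a hypothesis (state-independence of $v$, or more generally that $\hat v$ attains its maximum on $I_{i^*}$) under which $x^*\in J_{j^*}$ is immediate.
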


Now we are ready to describe our main result in this section. It shows how to construct a smoothed quasiconcave closure directly from the quasiconcave closure and there is no gap between the optimal signaling scheme and optimal ex-post IR signaling scheme if and only if such a smoothed quasiconcave closure is concave. 
\shuran{Move the definition out of the theorem and formally define the smoothed quasiconcave closure. Add pictures to illustrate the definitions.}

\begin{definition}[smoothed quasiconcave closure]\label{def: SQC}
Let the quasiconcave closure $\overline{V}$ corresponds to the partition $\overline{P}=(\{J_j\}_{j=1}^m,\{\beta_j\}_{j=0}^m)$ of $[0,1]$ mentioned in \cref{lem:quasiconcave}. The smoothed quasiconcave closure $\Gamma$ is a function defined on [0,1] that for all $1\le j\le m$, if $x\in[\beta_{j-1},\beta_j]$,
    \[
        \Gamma(x)=\frac{\overline{V}(\beta_j)-\overline{V}(\beta_{j-1})}{\beta_j-\beta_{j-1}}(x-\beta_{j-1})+\overline{V}(\beta_{j-1}).
    \]                                                                     
\end{definition}
That is, we smoothen the quasiconcave closure by connecting adjacent endpoints $(\beta_{j-1},\overline{V}(\beta_{j-1}))$ and $(\beta_j,\overline{V}(\beta_j))$ for all $1\le j\le m$. We provide a simple example in \cref{fig: SQC_example} to illustrate the definition of smoothed quasiconcave closure.

\begin{figure}[!h]
     \centering
     \begin{subfigure}[b]{0.37\textwidth}
         \centering
         \includegraphics[width=\textwidth]{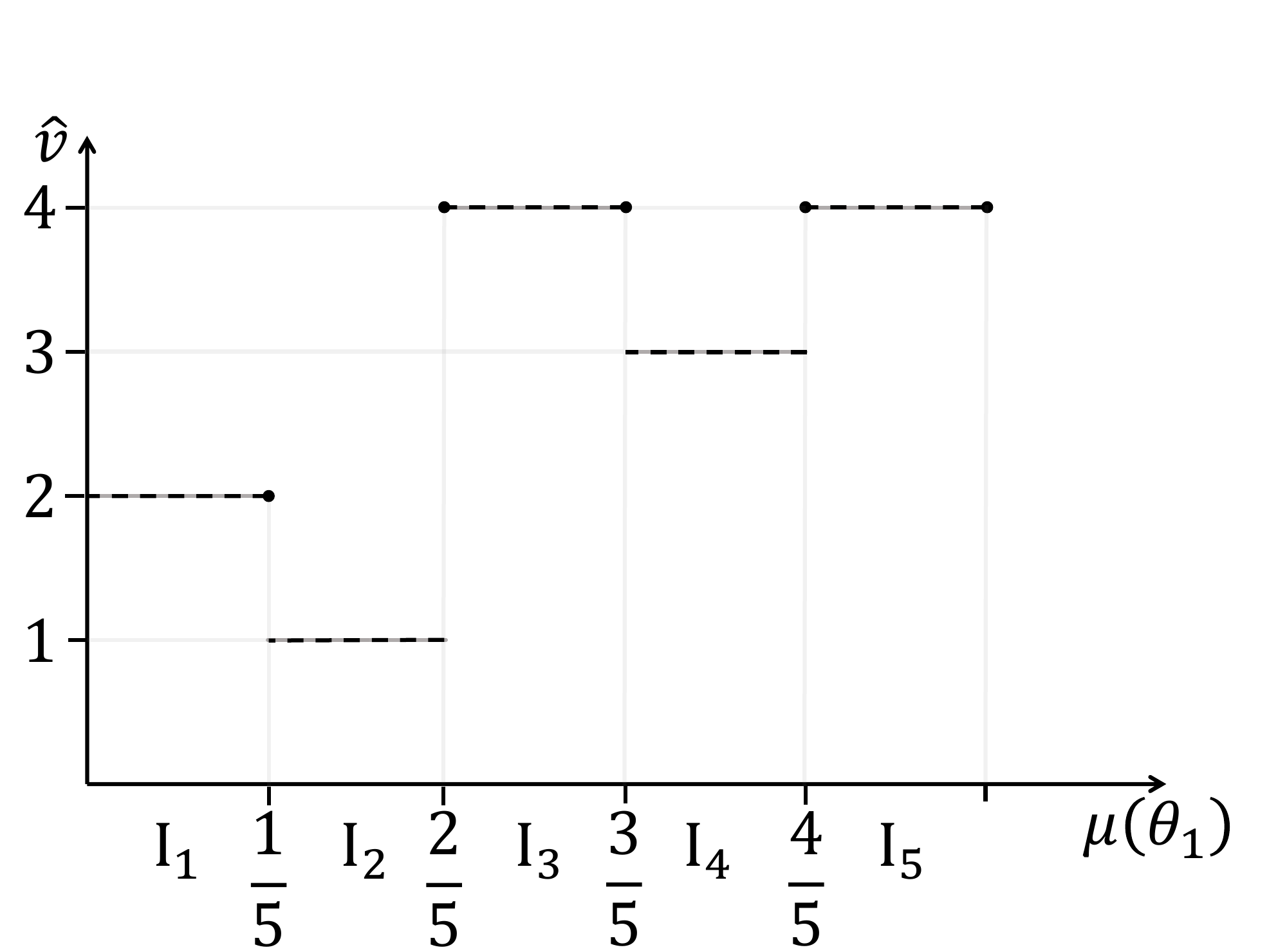}
         \caption{sender's utility function}
         \label{fig: smoothed-concave}
     \end{subfigure}
     \begin{subfigure}[b]{0.37\textwidth}
         \centering
         \includegraphics[width=\textwidth]{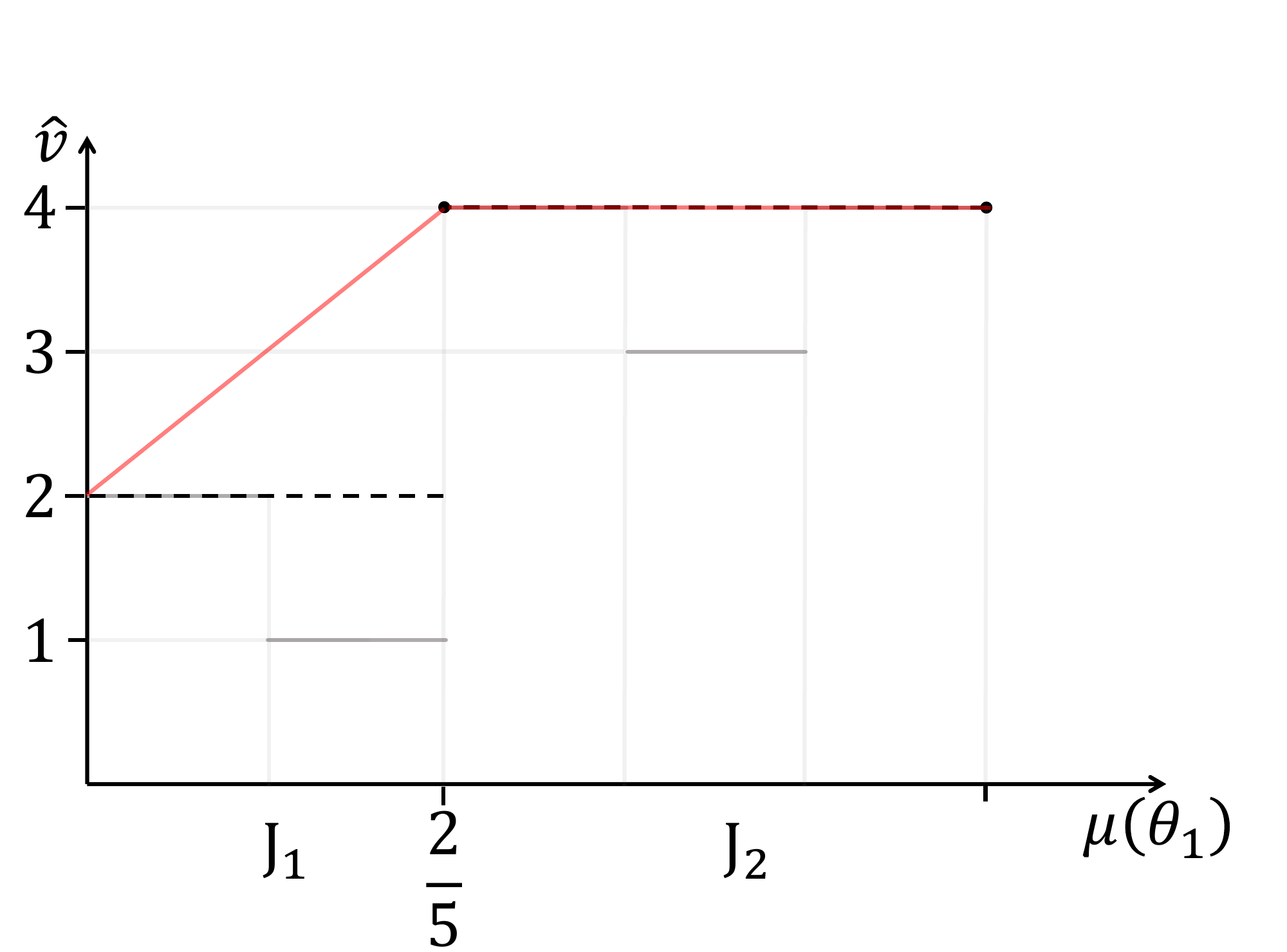}
         \caption{(smoothed) quasiconcave closure}
         \label{fig: smoothed-quasiconcave}
     \end{subfigure}

        \caption{A simple example of smoothed quasiconcave closure. Left: the curve of the sender's expected utility function $\hat{v}$ corresponding to a partition $\hat{P}=(\{I_i\}_{i=1}^5,\{\alpha_i\}_{i=0}^5)$ of $[0,1]$ that $\alpha_i=\frac{i}{5}$ for all $0\le i\le5$. Right: the dashed line is the quasiconcave closure $\overline{V}$ corresponding to a partition $\overline{P}=(\{J_1,J_2\},\{\beta_0=0,\beta_1=\frac{2}{5},\beta_2=1\})$ and the red line is the smoothed quasiconcave closure.}
        \label{fig: SQC_example}
\end{figure}

\begin{theorem}\label{thm:n*2}
The optimal signaling scheme is ex-post IR if and only if $\Gamma$ is concave.
\end{theorem}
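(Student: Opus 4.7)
The plan is to establish two facts that together yield the theorem. First, I would prove the key identity
\[
V_{\textsc{ex-post}}(\mu_0) = \Gamma(\mu_0).
\]
Combined with the trivial inequality $V(\mu_0) \ge V_{\textsc{ex-post}}(\mu_0)$, this reduces ``the optimal signaling scheme is ex-post IR'' to the equality $V(\mu_0) = \Gamma(\mu_0)$. Second, I would show this equality holds precisely when $\Gamma$ is concave.

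For the first fact, I would use that with binary states any signaling scheme is, without loss, supported on at most two posteriors by a standard Carath\'eodory-type argument. An ex-post IR scheme must place both posteriors in intervals $I_i$ with $a^*(I_i) \in A^+$. Using part~3 of \cref{lem:quasiconcave}, the value $\overline{V}(\beta_{j-1})$ (resp.\ $\overline{V}(\beta_j)$) is the largest value achievable by a single-posterior signal restricted to ex-post IR feasible intervals lying weakly below $\beta_{j-1}$ (resp.\ weakly above $\beta_j$), when $\mu_0 \in [\beta_{j-1},\beta_j]$. Hence the optimal two-posterior ex-post IR split places one posterior at $\beta_{j-1}$ and the other at $\beta_j$, giving value equal to the linear interpolation $\Gamma(\mu_0)$. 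The monotonicity of $\overline{V}$ with respect to intervals rules out any better split within the feasible region, since pulling a posterior deeper into the allowed region either violates ex-post IR or produces a dominated value. A subtle point is the handling of the discontinuities of $\overline{V}$ at interior $\beta_j$: the tie-breaking rule in favor of the sender ensures $\hat v(\beta_j)$ equals the higher of the two one-sided limits of $\overline{V}$ at $\beta_j$, which is exactly the value $\overline{V}(\beta_j)$ used to build $\Gamma$.

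For the second fact, I would first check that $\Gamma \ge \hat v$ on $[0,1]$, using $\Gamma(\beta_j) = \overline{V}(\beta_j) \ge \hat v(\beta_j)$ together with the monotonicity of $\overline{V}$ on each $J_j$ from \cref{lem:quasiconcave}; this ensures $\hat v$ on $[\beta_{j-1},\beta_j]$ is dominated by the chord defining $\Gamma$ there. Then if $\Gamma$ is concave, it is a concave majorant of $\hat v$, so $\Gamma \ge V$, and combined with $V \ge V_{\textsc{ex-post}} = \Gamma$ we get $V(\mu_0) = V_{\textsc{ex-post}}(\mu_0)$. Conversely, if $\Gamma$ is not concave, there is a breakpoint $\beta_{j_0}$ where a chord of $\Gamma$ strictly lies above $\Gamma(\beta_{j_0})$; lifting this chord to the graph of $\hat v$ (via the convex-combination representation of $\overline{V}$ at its endpoints) produces an unrestricted Bayesian persuasion scheme whose value at $\mu_0$ strictly exceeds $\Gamma(\mu_0) = V_{\textsc{ex-post}}(\mu_0)$, so the optimum cannot be attained by any ex-post IR scheme.

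The hard part is Step~1, establishing $V_{\textsc{ex-post}}(\mu_0) = \Gamma(\mu_0)$. This requires carefully aligning the $\overline{P}$-partition with the union of $A^+$-inducing intervals and leveraging the subtle interval-wise monotonicity of \cref{lem:quasiconcave} to rule out alternative ex-post IR splits. The rest of the argument is a clean application of the minimality of the concave closure and the convex-combination characterization of $V$.
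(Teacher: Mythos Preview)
Your central claim in Step~1, that $V_{\textsc{ex-post}}(\mu_0) = \Gamma(\mu_0)$ for every prior, is false. Take the paper's first example (\cref{tab: quasi}, left sender): $\hat v$ equals $2,1,3,4$ on the four intervals $[0,\tfrac14],(\tfrac14,\tfrac12),[\tfrac12,\tfrac34),[\tfrac34,1]$, so $\overline V$ has breakpoints $\beta_0=0,\beta_1=\tfrac12,\beta_2=\tfrac34,\beta_3=1$ with values $2,3,4,4$, and hence $\Gamma(0.3)=2+2\cdot 0.3=2.6$. But at $\mu_0=0.3$ the receiver's default action is $a_4$, the sender's least preferred, so $A^+=A$ and every posterior is ex-post IR feasible; thus $V_{\textsc{ex-post}}(0.3)=V(0.3)=2+\tfrac{8}{3}\cdot 0.3=2.8>\Gamma(0.3)$. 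The error in your argument is the sentence ``the value $\overline V(\beta_{j-1})$ is the largest value achievable by a single-posterior signal restricted to ex-post IR feasible intervals lying weakly below $\beta_{j-1}$'': the set of $A^+$-inducing intervals depends on $a^*(\mu_0)$ and has no a~priori relation to the partition $\overline P$ of the quasiconcave closure. When $\mu_0$ sits in an interval whose action is low in the sender's ranking, the ex-post IR feasible region can spill far past $\beta_j$ on either side, so the optimal ex-post IR split need not use $\beta_{j-1}$ and $\beta_j$ as posteriors.

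The paper does not try to identify $V_{\textsc{ex-post}}$ with $\Gamma$. Instead it first proves a structural lemma about the unconstrained concave closure (\cref{lem:concave}): the vertices of $V$ to the left of the top segment $S_{i^*}$ are all \emph{left} endpoints $L_{i_1},\dots,L_{i_{k-1}}$ of pieces of $\hat v$, and symmetrically on the right. It then argues that the optimal scheme is ex-post IR for \emph{every} prior exactly when each successive $L_{i_{j+1}}$ is the nearest left endpoint to the right of $L_{i_j}$ whose action strictly beats $a^*(S_{i_j})$; under that condition one shows inductively that this list of $L_{i_j}$'s coincides with the breakpoints $\beta_j$ of $\overline V$ on the left side, whence $V=\Gamma$ and $\Gamma$ is concave. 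Conversely, if this nearest-better-endpoint property fails at some $j$, one exhibits a prior in the gap where the optimal scheme forces a posterior into a worse-action interval, breaking ex-post IR. So the correct proof compares the vertex set of $V$ to the breakpoint set of $\overline V$ directly, rather than going through an identity $V_{\textsc{ex-post}}=\Gamma$ that does not hold.
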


\begin{remark}
    Given $\InAbs{A}=n$, \cref{thm:n*2} provides an algorithm to determine if the optimal signaling scheme is ex-post IR with the $O(n\log n)$ complexity, which is better than simply running a linear program to solve the optimal signaling scheme and check whether it is ex-post IR. In detail, we can sort each row of the receiver's utility matrix according to the first entry of each row $u(a_i,\theta_1)$ for all $i\in[n]$ with complexity $O(n \log(n))$. Then we can solve the sender's expected utility function $\hat{v}$ with complexity $O(n)$. Because when the row of the receiver's utility matrix is sorted, we can compute the threshold of each piece of $\hat{v}$ by inequality $\mu(\theta_1)u(a_i,\theta_1)+\mu(\theta_2)u(a_i,\theta_2)\le\mu(\theta_1)u(a_{i+1},\theta_1)+\mu(\theta_2)u(a_{i+1},\theta_2)$ for all $1\le i<n$. At last, we can construct $\Gamma$ and check whether it is concave in $O(n)$ complexity according to \cref{thm:n*2}.
\end{remark}

\shuran{This paragraph is difficult to understand. Could you explain it in a more intuitive way? Using figures might help. Otherwise, it is not necessary to have this paragraph.} 

The intuition behind \cref{thm:n*2} lies in the two fundamental properties of the quasiconcave closure $\overline{V}$: (1) $\overline{V}(\mu_2)\ge\min\{\overline{V}(\mu_1),\overline{V}(\mu_3)\}$ for all $\mu_1\le\mu_2\le\mu_3\in[0,1]$ (quasiconcavity); (2) $\overline{V}$ is increasing (decreasing) w.r.t intervals on the left (right) of Interval $J_{j^*}$ (monotonicity). On one hand, if $\Gamma$ is concave, it is the concave closure of the expected utility function. Then for any prior, the optimal signaling scheme will induce better actions for the sender than the best response under prior because of the monotonicity, which means the optimal signaling scheme is ex-post IR. On the other hand, if the optimal signaling scheme is ex-post IR, we can prove that the concave closure is actually the smoothed quasiconcave closure by the definition of quasiconcavity. Hence the smoothed quasiconcave closure is concave.

{\jiahao{changed examples and figures.}}

Finally, we provide a toy example for that can help us better interpret the result of \cref{thm:n*2} through visualization.
In this example, a fixed receiver plays Bayesian persuasion separately with two different senders. We show that the smoothed quasiconcave closure of the first sender is not concave, and thus, the optimal signaling scheme is not ex-post IR under some prior. In contrast, the smoothed quasiconcave closure of the second sender is concave, and thus, the optimal signaling scheme is ex-post IR for any prior. Let the receiver's utility matrix and two senders' utility matrices be as follows.
\begin{table}[!ht]
\centering
 \scalebox{0.95}{\begin{tabular}{|c|c|c|}
     \hline $v$ & $\theta_1 = 0$ & $\theta_2 =1$ \\
     \hline $a_1$ & $4$ & $4$ \\
     \hline $a_2$ & $3$ & $3$ \\
     \hline $a_3$ & $2$  & $2$ \\
     \hline $a_4$ & $1$  & $1$ \\
     \hline
\end{tabular} 
~\qquad
 \begin{tabular}{|c|c|c|}
     \hline $v$ & $\theta_1 = 0$ & $\theta_2 =1$ \\
     \hline $a_1$ & $4$ & $4$ \\
     \hline $a_2$ & $7/2$ & $7/2$ \\
     \hline $a_3$ & $2$  & $2$ \\
     \hline $a_4$ & $1$  & $1$ \\
     \hline
\end{tabular} 
~\qquad
 \begin{tabular}{|c|c|c|}
     \hline $v$ & $\theta_1 = 0$ & $\theta_2 =1$ \\
     \hline $a_1$ & $8$ & $0$ \\
     \hline $a_2$ & $7$ & $3$ \\
     \hline $a_3$ & $0$  & $8$ \\
     \hline $a_4$ & $3$  & $7$ \\
     \hline
\end{tabular}}
\caption{From left to right: the utility of the first sender, the utility of the second sender, the utility of the receiver.}
\label{tab: quasi}
\end{table}

The utility functions of these senders are both state independent. The first sender's utility is $4,3,2,1$ when the receiver's best response is $a_1,a_2,a_3,a_4$ respectively. The second sender's utility is $4,7/2,2,1$ when the receiver's best response is $a_1,a_2,a_3,a_4$ respectively. The receiver takes action $a_3$ when the posterior probability of state being zero $\mu_s(0)\le\frac{1}{4}$, takes action $a_4$ when $\frac{1}{4}\le\mu_s(0)\le\frac{1}{2}$, takes action $a_2$ when $\frac{1}{2}\le\mu_s(0)\le\frac{3}{4}$, and takes action $a_1$ otherwise. The figures of the senders' expected utility functions are \cref{fig: NIR-utility,fig: IR-utility}. The optimal signaling schemes can be visualized from \cref{fig: NIR-concave,fig: IR-concave}. In the first case, when the prior is $\frac{1}{2}\le\mu_s(0)\le\frac{3}{4}$, the optimal disclosure policy is divided the prior into two posterior $s_1,s_2$, $\mu_{s_1}(0)=0$ and $\mu_{s_2}(0)=\frac{3}{4}$ with the best response being $a_3$ and $a_1$ respectively. But the best response under the prior is $a_2>a_3$. So the sender is worse off under $s_1$, which means the optimal signaling scheme is not ex-post IR. In contrast, the sender is never worse off under any signal of the optimal disclosure policy. Meanwhile, the figures of (smoothed) quasiconcave closure $\Gamma$ are in \cref{fig: NIR-quasiconcave,fig: IR-quasiconcave}. The difference is that the smoothed quasiconcave closure is not concave in the first case while is concave in the second case.

\begin{figure}[!h]
     \centering
     \begin{subfigure}[b]{0.32\textwidth}
         \centering
         \includegraphics[width=\textwidth]{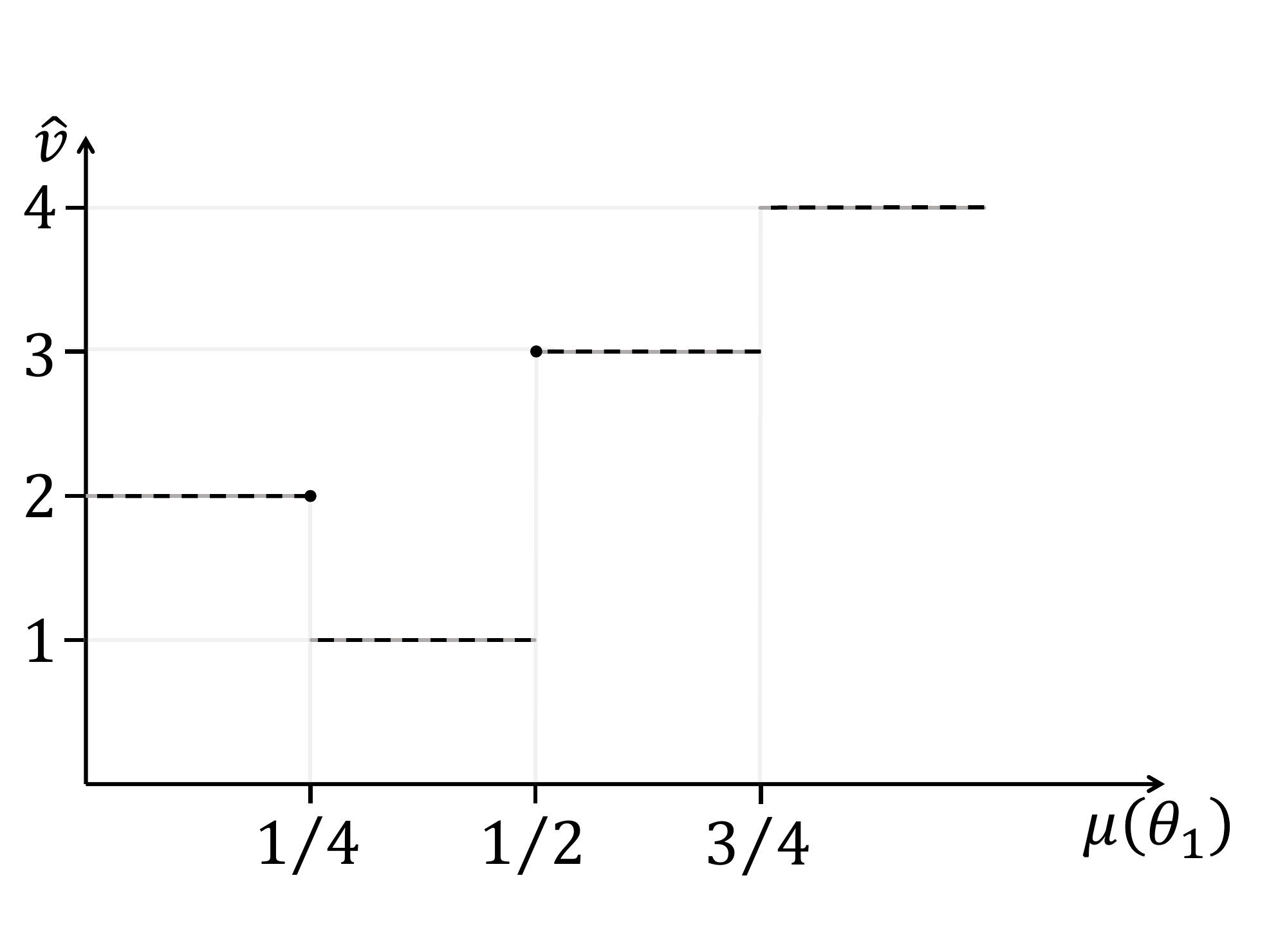}
         \caption{sender's utility function}
         \label{fig: NIR-utility}
     \end{subfigure}
     \begin{subfigure}[b]{0.32\textwidth}
         \centering
         \includegraphics[width=\textwidth]{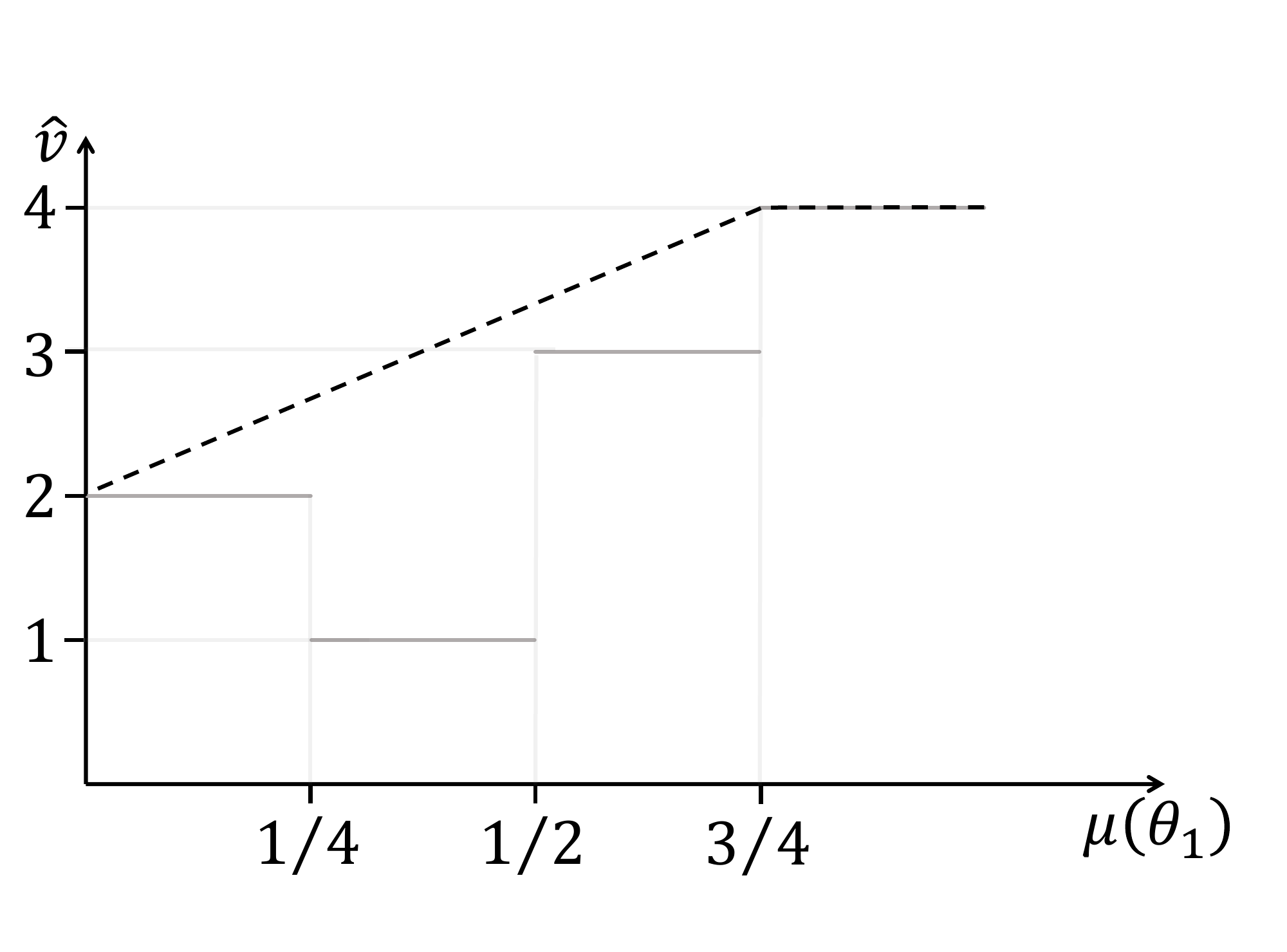}
         \caption{concave closure}
         \label{fig: NIR-concave}
     \end{subfigure}
     \begin{subfigure}[b]{0.32\textwidth}
         \centering
         \includegraphics[width=\textwidth]{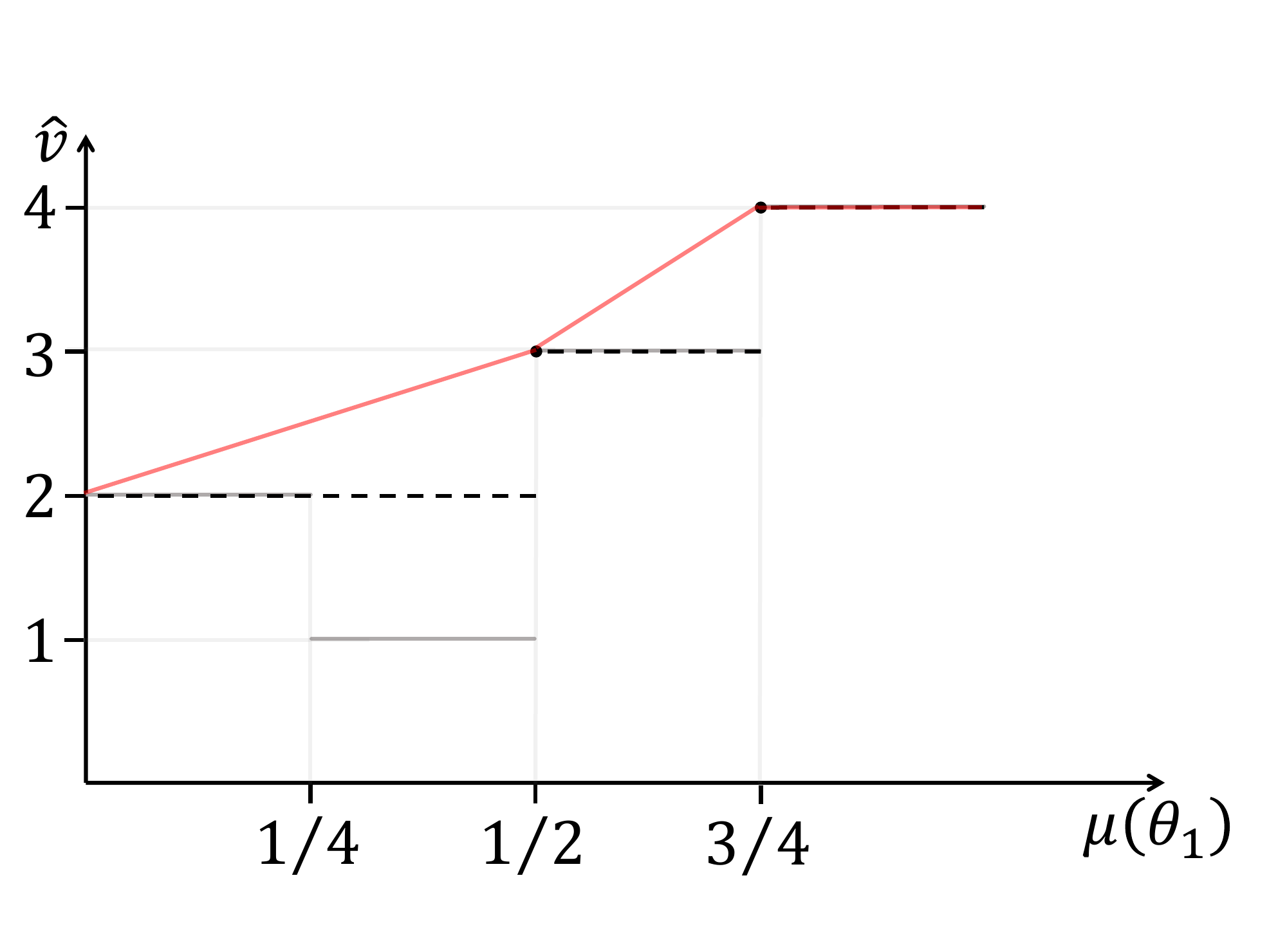}
         \caption{(smoothed) quasiconcave}
         \label{fig: NIR-quasiconcave}
     \end{subfigure}
        \caption{The example corresponding to the first sender (\cref{tab: quasi}) that the optimal signaling scheme is not ex-post IR. From left to right: the curve of the first sender's utility function (dashed line in \cref{fig: NIR-utility}), the concave closure of the first sender's utility function (dashed line in \cref{fig: NIR-concave}), the quasiconcave closure (dashed line in \cref{fig: NIR-quasiconcave}) and the smoothed quasiconcave closure (red line in \cref{fig: NIR-quasiconcave}) of the first sender's expected utility function. The smoothed quasiconcave closure is not concave. \shuran{Do we need to plot the concave closure here? Would it be better to plot the quasiconcave closure?}} 
        \label{fig: NIR}
\end{figure}

\begin{figure}[!h]
     \centering
     \begin{subfigure}[b]{0.32\textwidth}
         \centering
         \includegraphics[width=\textwidth]{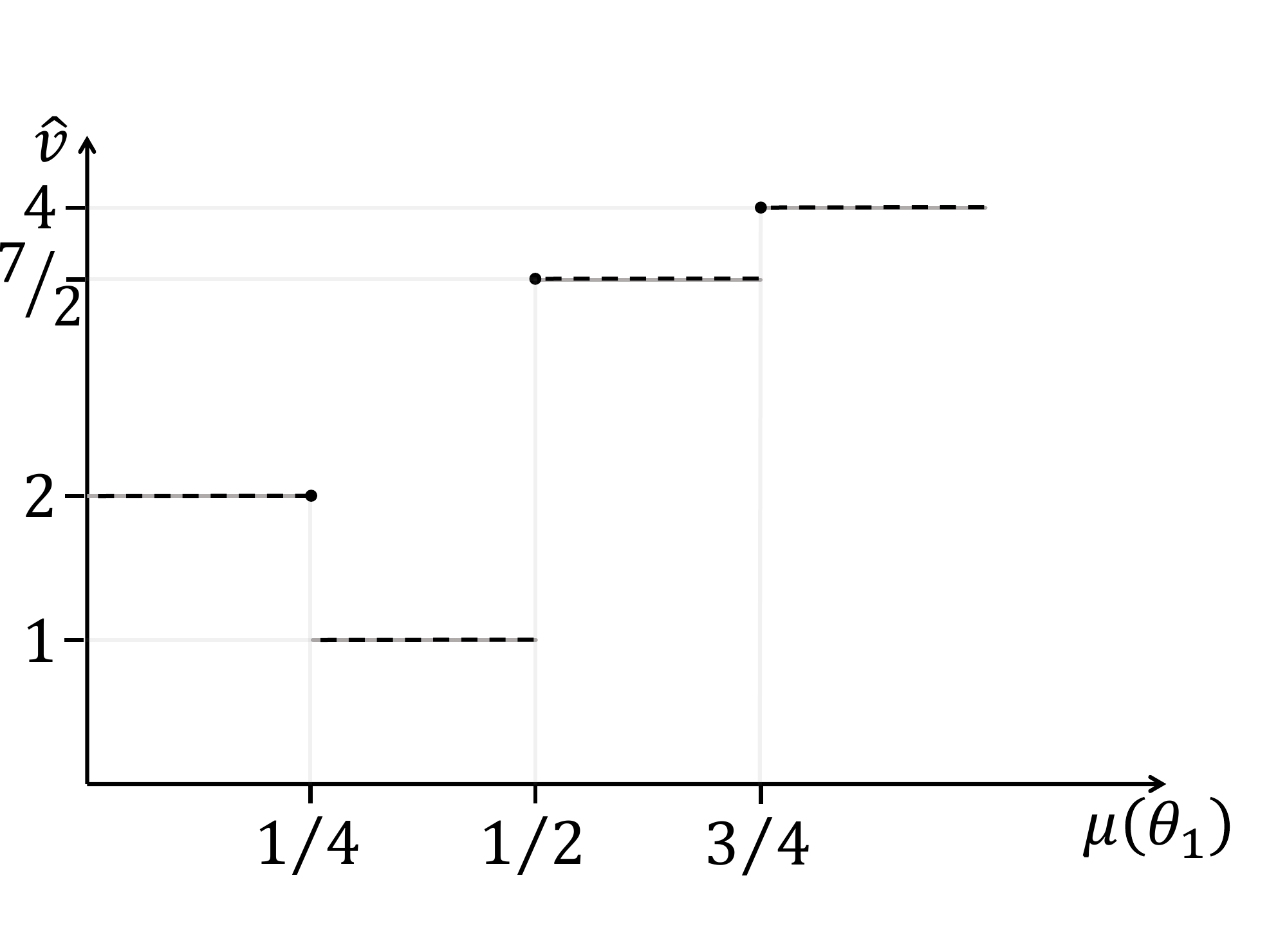}
         \caption{sender's utility function}
         \label{fig: IR-utility}
     \end{subfigure}
     \begin{subfigure}[b]{0.32\textwidth}
         \centering
         \includegraphics[width=\textwidth]{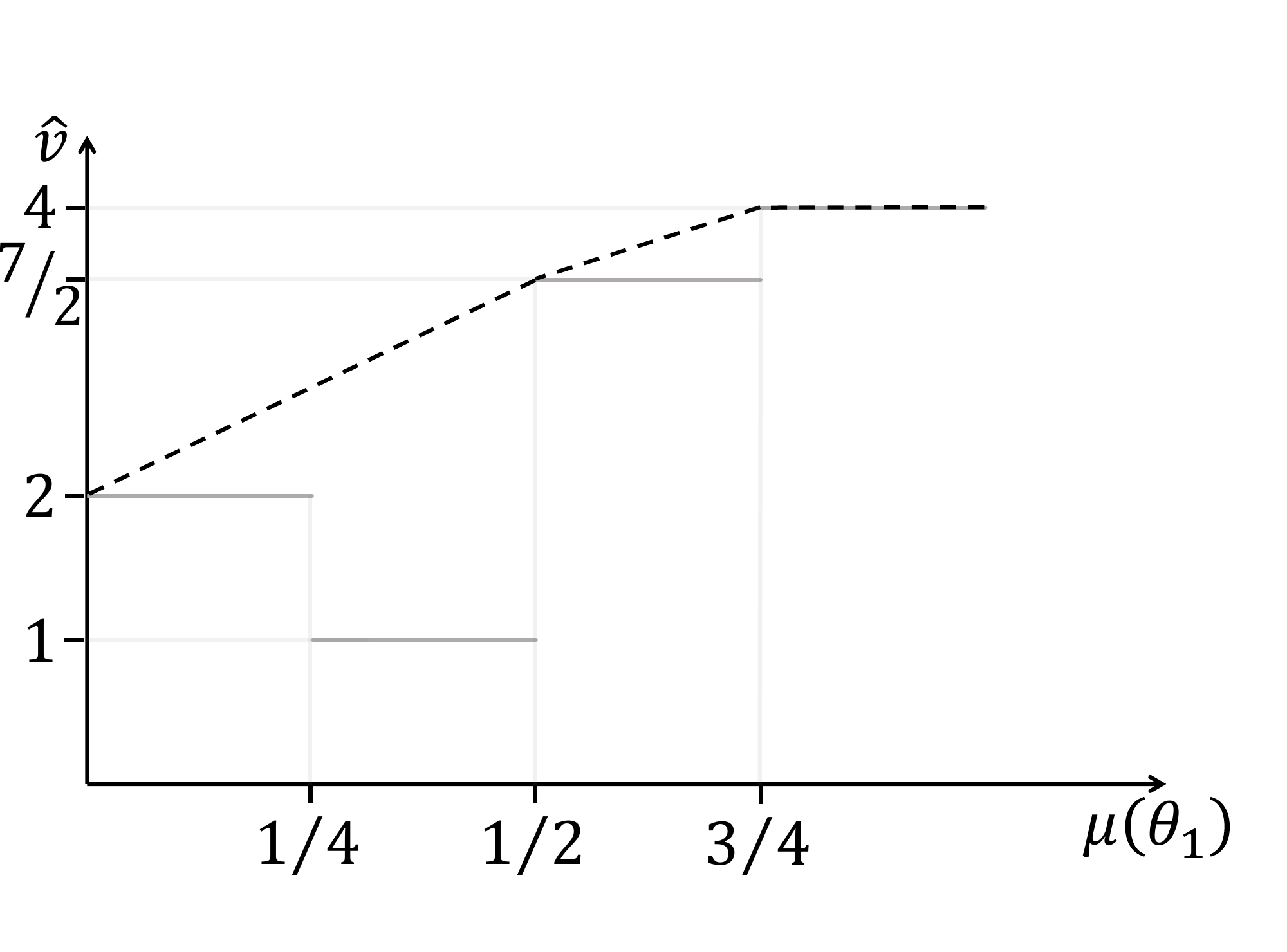}
         \caption{concave closure}
         \label{fig: IR-concave}
     \end{subfigure}
     \begin{subfigure}[b]{0.32\textwidth}
         \centering
         \includegraphics[width=\textwidth]{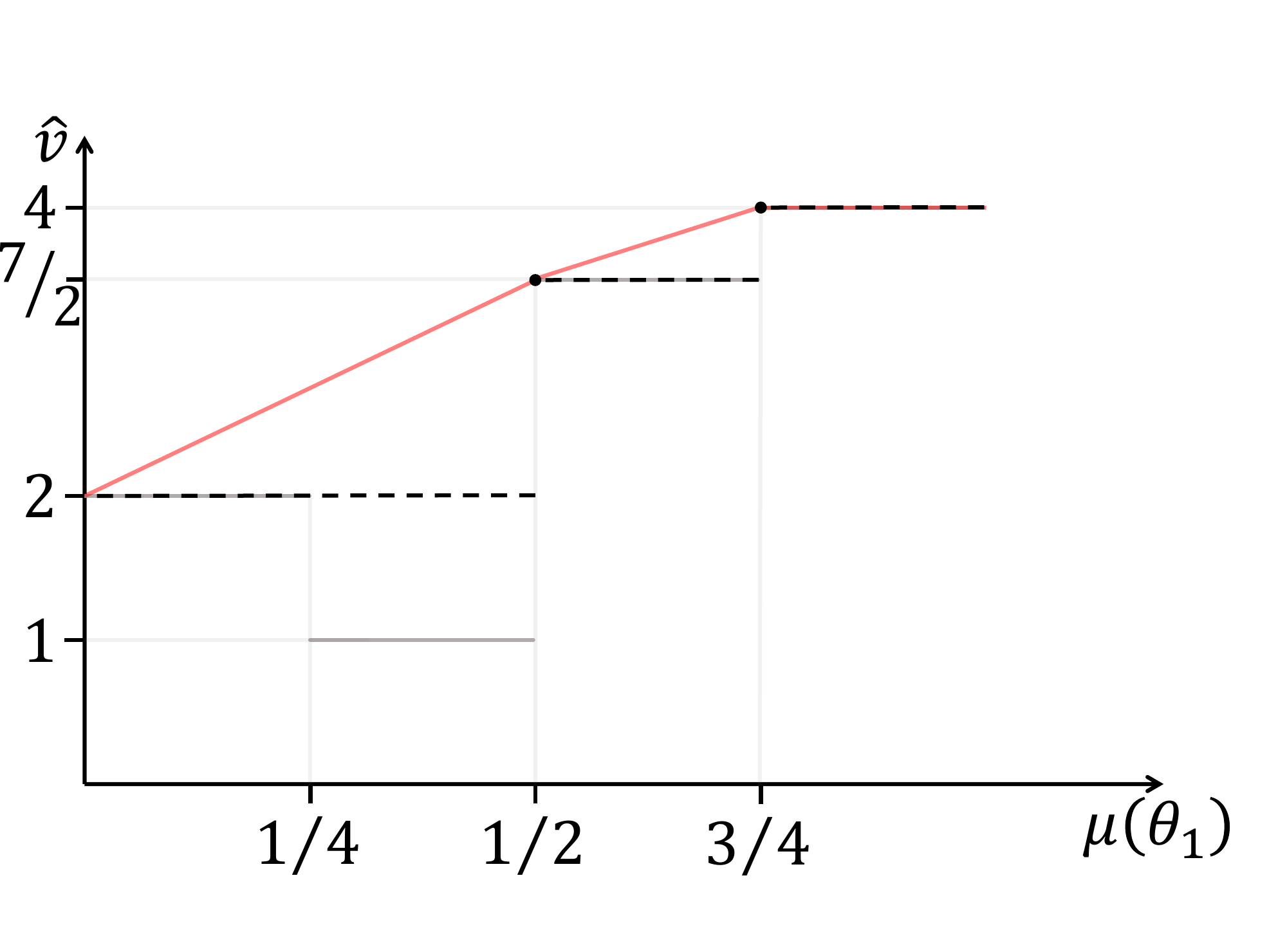}
         \caption{(smoothed) quasiconcave}
         \label{fig: IR-quasiconcave}
     \end{subfigure}
        \caption{The example corresponding to the second sender (\cref{tab: quasi}) that the optimal signaling scheme is ex-post IR. From left to right: the curve of the second sender's expected utility function (dashed line in \cref{fig: IR-utility}), the concave closure of the second sender's utility function (dashed line in \cref{fig: IR-concave}), the quasiconcave closure (dashed line in \cref{fig: NIR-quasiconcave}) and the smoothed quasiconcave closure (red line in \cref{fig: NIR-quasiconcave}) of the second sender's utility function. The smoothed quasiconcave closure is concave.} 
        \label{fig: IR}
\end{figure}

\subsection{No gap in the bilateral trade market}\label{subsection:bilateral}
{\color{black} In this section, we introduce a class of Bayesian persuasion problems in which the sender will not be affected by the ex-post IR constraint, which we term as \emph{trading games}. The discovery of trading games is inspired by the bilateral trade persuasion problem studied in \cite{bergemann2015limits,glode2018voluntary}. We first observe that the bilateral trade persuasion problem always has an optimal signaling scheme that is ex-post IR. Drawing from this insight, we identify three key properties that guarantee the ex-post IR of the optimal signaling scheme. These three properties yield a broader class of trading games that preserve the ex-post individual rationality of the optimal signaling scheme.}  We first introduce the bilateral trade pricing problem studied in \cite{bergemann2015limits,glode2018voluntary} and show that there always exists an optimal signaling scheme that is ex-post IR in this problem.

\paragraph{Bilateral trade.} A monopolist sells an item to a consumer, who has $n$ possible values $\theta\in\Theta$: $0<\theta_1<\theta_2<...<\theta_n$. The monopolist sets a price $a\in A$ and the trade only happens when the consumer's value is higher than the price. The monopolist will not set a price between any two closest values $\theta_i$ and $\theta_{i+1}$ for some $i$ because the monopolist can increase the price to $\theta_{i+1}$ without decreasing the probability of making a deal. Therefore, we can let $A=\Theta$ without loss of generality. The monopolist doesn't know the consumer's value but shares a prior $\mu_0$ of the value with the consumer. Before the trade, the consumer commits to a signaling scheme $\phi$. \shuran{It might be good to mathematically define the signaling scheme and the posteriors.} Then the monopolist sets the price according to the posteriors $\mu_s$ for any realized signal $s\in S$. The utility of the consumer is  $v(a,\theta)=(\theta-a)\cdot\mathbbm{1}_{\theta\ge a}$ and the utility of the monopolist is $u(a,\theta)=a\cdot\mathbbm{1}_{\theta\ge a}$.

\begin{table}[!h]
    \centering
\scalebox{0.95}{\begin{tabular}{|c|c|c|c|c|}
     \hline $v$ & $\theta_1$ & $\theta_2$ &   $ \cdots$ & $\theta_n$   \\
     \hline $a_1(=\theta_1) $ & $\theta_1$  & $ \theta_1$ &  $ \cdots$ & $ \theta_1$  \\
     \hline $a_2$  & $0$ & $\theta_2$ &  $ \cdots$ & $ \theta_2$ \\
     \hline $ \cdots$   & $ \cdots$  & $ \cdots$  &  $\cdots$ & $\theta_{n-1}$  \\ 
     \hline  $a_n$  & $0$ & $0$ &  $ \cdots$  & $\theta_n$  \\
     \hline
\end{tabular}}
    \caption{The utility of receiver in the bilateral trade}
    \label{bilateral_recv}
\end{table}

\begin{proposition}\label{bilateral_nogap}
     In the bilateral trade persuasion problem defined in~\cite{bergemann2015limits}, {\color{black} there exists an optimal signaling scheme that is ex-post IR.}
\end{proposition}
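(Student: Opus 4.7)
The plan is to show that every optimal signaling scheme in the bilateral trade is automatically ex-post IR. I would start by recalling from \citet{bergemann2015limits} that the maximum sender utility (consumer surplus) equals $w^\star - u^\star$, where $w^\star = \sum_\theta \mu_0(\theta)\theta$ is the full-trade welfare and $u^\star = \theta_{k^*}\sum_{\ell \geq k^*}\mu_0(\theta_\ell)$ is the uniform monopoly revenue. Since consumer surplus equals total welfare minus monopoly revenue, and since welfare is bounded above by $w^\star$ (with equality iff every state trades in every segment it belongs to) while monopoly revenue is bounded below by $u^\star$ (via convexity of the monopoly revenue function), any optimal signaling scheme must simultaneously saturate both bounds.

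The key claim I would then prove is that in any segmentation with total monopoly revenue equal to $u^\star$, every segment $s$ satisfies $\min\,\mathrm{supp}(\mu_s) \leq \theta_{k^*}$. The monopoly revenue function $r(\mu) = \max_a a\,\Pr_\mu[\theta \geq a]$ is convex and piecewise linear, and one of its affine pieces, $r^{\mathrm{aff}}(\mu) = \theta_{k^*}\Pr_\mu[\theta \geq \theta_{k^*}]$, contains $\mu_0$. By linearity of $r^{\mathrm{aff}}$ and Bayes plausibility, $\sum_s q_s\, r^{\mathrm{aff}}(\mu_s) = r^{\mathrm{aff}}(\mu_0) = u^\star$, so $\sum_s q_s\, r(\mu_s) = u^\star$ forces $r(\mu_s) = r^{\mathrm{aff}}(\mu_s)$ for every segment. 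But if $\min\,\mathrm{supp}(\mu_s) = \theta_j > \theta_{k^*}$, then $\Pr_{\mu_s}[\theta \geq \theta_{k^*}] = 1$ while monopoly pricing at $\theta_j$ itself yields $r(\mu_s) \geq \theta_j > \theta_{k^*} = r^{\mathrm{aff}}(\mu_s)$, a contradiction. Hence every segment in a consumer-optimal segmentation has minimum support at most $\theta_{k^*}$.

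Combining the key claim with full efficiency (which forces $a^*(\mu_s) = \min\,\mathrm{supp}(\mu_s)$ in every segment), I conclude that $a^*(\mu_s) \leq \theta_{k^*}$ in every segment of any optimal signaling scheme. Ex-post IR then follows directly: for any state $\theta \geq \theta_{k^*}$ in segment $s$, $v(a^*(\mu_s),\theta) = \theta - a^*(\mu_s) \geq \theta - \theta_{k^*} = v(a^*(\mu_0),\theta)$; and for $\theta < \theta_{k^*}$, $v(a^*(\mu_0),\theta) = 0 \leq v(a^*(\mu_s),\theta)$. The approach has no major obstacle; the only delicate point is the convexity/affine-piece analysis in the key claim, which shows that hitting the minimum revenue $u^\star$ forces every segment's posterior to keep $\theta_{k^*}$ as a (tied) monopoly-optimal price, and this in turn forces the min-support property that drives ex-post IR.
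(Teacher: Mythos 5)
Your proposal is correct, and it actually establishes more than the proposition asks for: \emph{every} optimal signaling scheme is ex-post IR, not merely that one exists. The route is genuinely different from the paper's. The paper proves \cref{bilateral_nogap} as a special case of \cref{extendBBM} by an explicit construction: using \cref{lem:upper} it builds an extremal segmentation out of indifference posteriors on nested supports, observes that the index $k^*$ survives in every support so that $a_{k^*}$ remains a tied best response at every posterior (hence, with sender-favorable tie-breaking, the realized price never exceeds $\theta_{k^*}$, giving ex-post IR), and only then verifies optimality by checking that the scheme is fully efficient while holding the receiver to her no-information payoff. You argue in the reverse direction: taking the Bergemann--Brooks--Morris value $w^\star-u^\star$ as given, you show that optimality \emph{forces} full efficiency and producer revenue exactly $u^\star$, and that equality in the convex lower bound $\sum_s q_s\, r(\mu_s)\ge \sum_s q_s\, r^{\mathrm{aff}}(\mu_s)=u^\star$ pins $r(\mu_s)=r^{\mathrm{aff}}(\mu_s)$ segment by segment, which rules out $\min\mathrm{supp}(\mu_s)>\theta_{k^*}$; combined with efficiency this caps every charged price at $\theta_{k^*}$, and the two-case check of ex-post IR goes through (indeed for all pairs $(\theta,s)$, not only those with positive probability, since $a^*(\mu_s)\le\theta_{k^*}$). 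What each approach buys: yours is shorter, yields the stronger ``every optimal scheme'' conclusion, and cleanly isolates the mechanism (complementary slackness in the revenue bound) behind why optimality and ex-post IR coincide here; but it is not self-contained, since it leans on BBM's achievability of the consumer-optimal vertex of the surplus triangle --- essentially the content the paper re-derives --- and it does not transfer to the paper's broader class of trading games, where the receiver's payoff is no longer monopoly revenue and the affine-piece argument would have to be reworked, whereas the paper's construction generalizes verbatim.
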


We then extend the bilateral trade game to a general class of games called  \emph{trading games}, in which the ex-post IR constraint does not affect the sender's expected utility. 
We first make a few observations about the bilateral trade problem: (1) the utility of the receiver is upper-triangular (2) the non-zero part (see \cref{bilateral_recv}) corresponding columns of the receiver utility is increasing from top to bottom (3) the sum of the non-zero part corresponding columns of the sender utility and the receiver utility is a constant (see formal definition in \cref{def:trading}). In the context of bilateral trade games, the first feature means that the buyer won't buy the item if the price is higher than its value; the second feature means that when the price is higher, the buyer's utility is lower; the third feature means that once the trade happens, the social welfare is a constant equals to the buyer's value. We extend the bilateral trade game to a general class of games called \emph{trading games}. As long as a game satisfies the aforementioned properties, we call it a trading game (see \cref{def:trading}). 

\begin{definition}\label{def:trading}(Trading game)
    A game $(A,\Theta,v, u)$ is a trading game if $\InAbs{A}=\InAbs{\Theta}=n$ for any integer $n$ and the following conditions hold
    \begin{enumerate}
        \item u is upper-triangular and non-negative: $u(a_i,\theta_k)\ge0$ for all $i,k\in[n]$ and $u(a_i,\theta_k)=0$ if $i>k$.
        \item the non-zero part corresponding columns of the receiver utility is increasing: $u(a_i,\theta_k)\le u(a_j,\theta_k)$ for all $i,j,k\in[n]$ that $i<j\le k$.
        \item $v(a_i,\theta_k)+u(a_i,\theta_k)=c_k$ where $c_k$ is a positive constant for all $i,k\in[n]$ that $i\le k$.
    \end{enumerate}
\end{definition}

These three conditions generally hold in trading problems with a buyer and a seller that do not have to be a bilateral trade problem. The following example is about the Bayesian persuasion problem in the first price auction (FPA) with a reserve price, which we show is also a trading game.
 
 \shuran{It might be better to define the auction problem here and show that there exist trading games that are not bilateral trade games.} 

 \begin{example}[First-price auction]
Consider the first-price auction with multiple bidders, when we set reserve prices, we only care about the maximum bid. It is useful to abstract bidders in the auction as a single "mega-bidder" whose value is the maximum of values and who always bids the maximum of all bids (see \cite{feng2020reserve}). Now we can consider the pricing problem between the seller(receiver) and the representative winning bidder(sender). The state is the value of the bidder $\theta_1\le\theta_2\le...\le\theta_n$. The action is the reserve price set by the seller. Notice that the seller will never set a reserve price between some $\theta_{i-1}$ and $\theta_i$ because he has the incentive to increase the reserve price to exact $\theta_i$. Then the receiver has exactly n actions $a_1,...,a_n$ which $a_i$ is setting the reserve price at $\theta_i$. The bidder's bidding function $b(r,v)$, which is determined by the reserve price $r$ and the bidder's value $v$, is increasing w.r.t to the reserve price $r$. We have $v(a_i,\theta_j)=\theta_j-b(\theta_i,\theta_j)$ if $i\le j$ otherwise $v(a_i,\theta_j)=0$ and $u(a_i,\theta_j)=b(\theta_i,\theta_j)$ if $i\le j$ otherwise $u(a_i,\theta_j)=0$. 


\end{example}

 We then show that as long as the game is a trading game, there exists an optimal signaling scheme that is ex-post IR.

\begin{theorem}\label{extendBBM}
    For all trading games, there exists an optimal signaling scheme that is ex-post IR.
\end{theorem}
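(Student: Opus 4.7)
My plan is to derive a tight upper bound on the sender's expected utility using the trading-game structure, and then exhibit an ex-post IR scheme that attains it. First, conditions~1 and~3 of \cref{def:trading} give the identity $v(a_j,\theta_k) = (c_k - u(a_j,\theta_k))\mathbbm{1}\{j\le k\}$, which lets me decompose the sender's expected utility under any scheme $\pi$ as
\[
V(\pi) \;=\; \sum_k c_k\,\Pr\nolimits_\pi(\theta=\theta_k,\,a\le\theta_k) \;-\; U(\pi),
\]
where $U(\pi)=\sum_{j,k}u(a_j,\theta_k)\pi(a_j,\theta_k)$. The trade-welfare term is at most $\sum_k c_k\mu_0(\theta_k)$ (achieved under full trade), and a standard information-monotonicity argument for a single receiver gives $U(\pi)\ge U_0:=\sum_{k\ge k^*}u(a_{k^*},\theta_k)\mu_0(\theta_k)$. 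Combining these, $V(\pi)\le \sum_k c_k\mu_0(\theta_k)-U_0$ for every feasible $\pi$.

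Next, I would construct a scheme $\pi^\star$ matching this upper bound whose signals only recommend actions in $\{a_1,\ldots,a_{k^*}\}$ (so it is automatically ex-post IR). For each $i \le k^*$, signal $s_i$ recommends $a_i$ with posterior supported in $\{\theta_k:k\ge i\}$ so that every signal induces full trade. The signal probabilities $q_i$ and conditionals $\mu_{s_i}$ are set by a recursive peel-off from $i=k^*$ down to $i=1$: at each step I pool the residual mass of $\theta_i$ with just enough of the higher states $\theta_{i+1},\ldots,\theta_n$ to make the receiver weakly prefer $a_i$ to every $a_j$ with $i<j\le k^*$. Condition~2 is what makes the pooling always possible --- since $u(a_j,\theta_k)$ is increasing in $j$ for $j\le k$, adding low-state mass monotonically tilts the receiver toward the lower-indexed action. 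Summing across segments, the aggregate receiver utility equals $U_0$ by the defining best-response property of $a_{k^*}$ under the prior, so the decomposition above yields $V(\pi^\star)=\sum_k c_k\mu_0(\theta_k)-U_0$, which matches the upper bound; hence $\pi^\star$ is optimal and ex-post IR.

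The main obstacle will be verifying that the recursive peel-off is well-defined in full generality: that residual masses remain non-negative at every step, that the receiver's incentive constraints remain globally satisfied, and that the accumulated receiver utility lands exactly on $U_0$ rather than at some intermediate value. For the bilateral-trade special case this essentially reduces to (a restriction of) the BBM consumer-optimal segmentation, which is already the content of \cref{bilateral_nogap}, so the real technical work is showing that the abstract hypotheses~1--3 of a trading game suffice to push the same structural logic through.
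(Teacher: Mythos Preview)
Your high-level strategy matches the paper's: both argue that (i) total surplus is at most $\sum_k c_k\mu_0(\theta_k)$, (ii) the receiver's expected utility under any scheme is at least the no-information value $U_0$, and (iii) an ex-post IR scheme attains both bounds simultaneously and is therefore sender-optimal. Your decomposition makes (i)--(ii) explicit; one caveat is that the identity $v(a_j,\theta_k)=(c_k-u(a_j,\theta_k))\mathbbm{1}\{j\le k\}$ tacitly assumes $v(a_j,\theta_k)=0$ for $j>k$, which \cref{def:trading} does not impose. The upper bound survives because \cref{sorted} and $u\ge 0$ still give $v(a_j,\theta_k)+u(a_j,\theta_k)\le c_k$ for all $j,k$, but the exact equality you write need not hold.

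The genuine divergence is in the construction, and the obstacles you yourself flag are real. Your peel-off runs from $i=k^*$ down to $1$, exhausting $\theta_i$ at step $i$ and adding ``just enough'' mass from $\theta_{i+1},\ldots,\theta_n$. Two things are unclear: first, nothing guarantees that the residual mass on $\theta_{k^*+1},\ldots,\theta_n$ is ever fully absorbed (your steps only force $\theta_1,\ldots,\theta_{k^*}$ to zero); second, you enforce that $a_i$ is \emph{weakly} preferred to $a_{k^*}$ at signal $s_i$, not tied, so summing only gives $U(\pi^\star)\ge U_0$, not equality --- and the gap is exactly what you need to close. The paper's construction avoids both issues by running in the opposite direction. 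Its key device is \cref{lem:upper}: for every support set $S\subset[n]$ there exists a posterior $\mu_S$ on $S$ under which the receiver is \emph{indifferent} among all $\{a_k:k\in S\}$. Starting with $S_0=[n]$, the paper repeatedly subtracts the largest multiple of $\mu_{S_t}$ that keeps the residual nonnegative, shrinking the support by at least one state per round until nothing remains. A short nesting argument then shows $k^*\in S_t$ at every round, so $a_{k^*}$ is a best response under \emph{every} posterior in the scheme. That single fact simultaneously yields ex-post IR, full trade, and $U(\pi)=U_0$, with no residual-mass bookkeeping required. If you want to salvage your direction of peel-off, the missing ingredient is precisely an indifference condition tying each $a_i$ to $a_{k^*}$ and a proof that the induced system of linear equations exhausts the prior --- which is essentially what \cref{lem:upper} packages.
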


\jiahao{added how the properties are used.} To illustrate the idea of the proof, we first provide a lemma about the property of the receiver's best response, which relies on the first two properties of \cref{def:trading}.

\begin{lemma}\label{lem:upper}
Assume the utility matrix $u$ is upper-triangular, non-negative, and the non-zero part corresponding columns of $\mu$ is increasing, for any subset $S\subset[n]$, there exists a posterior belief $\mu_S$ supported on $S$ that the receiver is indifferent with actions $a_k$ for all $k\in S$ under the posterior $\mu_S$.
\end{lemma}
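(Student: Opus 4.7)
The plan is to exploit the upper-triangular structure (property 1 of \cref{def:trading}) to set up a backward recursion for the posterior weights, and then to use the column-wise monotonicity (property 2) to certify that the recursion produces non-negative weights. Index the elements of $S$ in increasing order as $k_1<k_2<\cdots<k_m$, and look for weights $p_1,\ldots,p_m\ge 0$ on the states $\theta_{k_1},\ldots,\theta_{k_m}$ that equalize the expected utilities of the actions $a_{k_1},\ldots,a_{k_m}$. Because $u(a_{k_j},\theta_{k_\ell})=0$ whenever $j>\ell$, the expected utility of $a_{k_j}$ under such a posterior collapses to $U_j:=\sum_{\ell\ge j}p_\ell\,u(a_{k_j},\theta_{k_\ell})$, which only depends on $p_j,p_{j+1},\ldots,p_m$. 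This triangular structure is what makes the problem amenable to a clean recursive construction.

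Fixing an arbitrary target value $c>0$, I would build the weights by backward induction on $j$. The base case is $p_m=c/u(a_{k_m},\theta_{k_m})$, which is well defined since diagonal entries are positive (otherwise action $a_{k_m}$ would be weakly dominated, contradicting the standing no-dominated-action assumption). For the inductive step, supposing $p_{j+1},\ldots,p_m$ have already been chosen so that $U_{j+1}=\cdots=U_m=c$, solve the equation $U_j=c$ for $p_j$, which yields
\[
    p_j \;=\; \frac{c-\sum_{\ell>j}p_\ell\,u(a_{k_j},\theta_{k_\ell})}{u(a_{k_j},\theta_{k_j})}.
\]

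The only non-trivial step is verifying that $p_j\ge 0$, and this is precisely where property 2 does the work. For every $\ell>j$ one has $k_j<k_{j+1}\le k_\ell$, so the column-monotonicity assumption gives $u(a_{k_j},\theta_{k_\ell})\le u(a_{k_{j+1}},\theta_{k_\ell})$. Weighting by the non-negative $p_\ell$'s and summing over $\ell>j$ yields $\sum_{\ell>j}p_\ell\,u(a_{k_j},\theta_{k_\ell})\le\sum_{\ell>j}p_\ell\,u(a_{k_{j+1}},\theta_{k_\ell})=U_{j+1}=c$, so the numerator in the formula for $p_j$ is non-negative.

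Once all the $p_j$'s are in hand, I would normalize the vector $(p_1,\ldots,p_m)$ by its sum to obtain a genuine probability distribution $\mu_S$ supported on $\{\theta_{k_j}:j\in[m]\}$; uniform rescaling preserves the identity $U_1=\cdots=U_m$, so every action $a_k$ with $k\in S$ delivers the same expected utility under $\mu_S$. I expect the non-negativity step to be the main obstacle in a careful write-up, and it is also the unique place where the trading-game hypotheses are used non-trivially; property 1 supplies the triangular recursion, while property 2 is exactly what one needs to inductively certify $p_j\ge 0$.
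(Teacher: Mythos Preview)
Your proof is correct and follows essentially the same strategy as the paper's: both exploit the upper-triangular structure to determine the posterior weights one index at a time (the paper via induction on $|S|$ that peels off the smallest index and finds the right convex combination $t\delta(\theta_{i_1})+(1-t)\mu_{S'}$ by an intermediate-value argument, you via an explicit backward recursion solving the triangular system), and both invoke the column-monotonicity hypothesis at exactly the same point to certify non-negativity of the new weight. Your write-up is slightly more constructive, yielding closed-form expressions for the $p_j$, but the underlying argument is the same.
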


\shuran{This paragraph is difficult to read. Could you describe the process more accurately using math? Otherwise, it is not necessary to have it here.} 


Based on how the receiver best responds, we show that there is a specific structure of ex-post IR signaling schemes. Recall that a signaling scheme induces a set of possible posteriors. We now describe how to construct all posteriors for such signaling schemes step by step. In the first step $t=1$, let $\mu_1$ be the prior $\mu_0$ and $S_1=\{1,2,\cdots,n\}$. We compute a posterior $\mu_{S_1}$ such that the receiver is in indifferent among $\theta_i$ for all $i\in S_1$. Then we subtract certain proportions of $\mu_{S_1}$ from $\mu_1$ until some supports of $\mu_0$ become zero. In detail, let $\mu_2=\mu_1-p_1\mu_{S_1}$ such that $p_1=\min_{k\in S_1}\frac{\mu_1(\theta_k)}{\mu_{S_1}(\theta_k)}$.
$\mu_{S_1}$ is the first posterior we want and its probability in the signaling scheme is $p_1$. Let $S_2$ be the indexes of the support of $\mu_2$, i.e., $S_2=\{i|\mu_2(\theta_i)>0\}$. Given $\mu_2$ and $S_2$, we repeat this process at $t=2$ and so forth. This procedure will stop in finite steps because at least one component $\mu_t$ will become zero for $t\ge 1$. Suppose it stops in $m$ steps. The signaling scheme consist of $m$ posteriors $\mu_{S_1},\cdots,\mu_{S_m}$ and $\mu_0=\sum_{t=1}^mp_t\mu_{S_t}$. We know that these posteriors exist because of \cref{lem:upper}. We will show that $S_m=\{k^*\}$ which means that $\theta_{k^*}$ is the last component that becomes zero. Therefore, we end up dividing the prior into several posterior beliefs that $a_{k^*}$ is one of the best responses under each of them. The sender hence always enjoys a utility better than the utility under the prior, which means the signaling scheme is ex-post IR. We then complete the proof by using the third property of \cref{def:trading} to show that such a signaling scheme is optimal. The detailed proof is in \cref{proof:extendBBM}.



\section{Comparison of the sender's optimal utility}\label{section:compare}
In this section, we compare the sender's expected utility under different credibility requirements. We consider four information transmission models with different credibility levels: Bayesian persuasion, ex-post IR Bayesian persuasion, 
 credible persuasion~\cite{lin2022credible}, and cheap talk~\cite{crawford1982strategic}. In many previous works about Bayesian persuasion with partial commitment, such as \cite{min2021bayesian,lin2022credible}, the optimal sender utility achieved by their model is between the Bayesian persuasion and cheap talk. Therefore, it is natural to compare the sender's optimal utility of our ex-post IR Bayesian persuasion with the commitment power of Bayesian persuasion and cheap talk. Besides, since our model considers the credibility of the sender, we compare the ex-post IR Bayesian persuasion to another model called \emph{credible persuasion} in the paper by \cite{lin2022credible}, which also considers the credibility of the sender. Given the sender's utility $v$, receiver's utility $u$, the optimal sender utilities gained in these four models is the functions of prior $\mu_0$ and denoted as $V_{\textsc{BP}}(\mu_0)$, $V_{\textsc{expost}}(\mu_0)$, $V_{\textsc{credible}}(\mu_0)$, $V_{\textsc{cheap}}(\mu_0)$. We formally give the definition that the optimal sender utility achieved in one information transmission model is better than the optimal sender utility achieved in another information transmission model.
\begin{definition}
Consider two information transmission model $X, Y$. If for any $v,u,\mu_0$, we have $V_X(\mu_0)\ge V_Y(\mu_0)$, we say the optimal sender utility achieved in $X$ is better than the optimal sender utility achieved in $Y$, which is denoted by $V_X\succeq V_Y$.
\end{definition}


Next, we compare the optimal sender utility of the aforementioned four information transmission models. We give a brief introduction of credible persuasion in \cref{sec:credible}.

\begin{theorem}\label{thm:compare}
The ranking of the optimal sender's utility is as follows:
\begin{enumerate}
    \item $V^{\mathrm{BP}}\succeq V^{\mathrm{credible}},V^{\mathrm{expostBP}}\succeq V^{\mathrm{cheaptalk}}$.
    \item  There exists $v,u$ such that for any $\mu_0$, $V^{\mathrm{credible}}(\mu_0)\ge V^{\mathrm{expostBP}}(\mu_0)$.
    \item  There exists $v,u$ such that for any $\mu_0$, $V^{\mathrm{expostBP}}(\mu_0)\ge V^{\mathrm{credible}}(\mu_0)$.
\end{enumerate} 
\end{theorem}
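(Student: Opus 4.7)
The statement decomposes into two different tasks: Part 1 is a chain of sender-utility comparisons across four models, while Parts 2 and 3 establish incomparability between $V^{\mathrm{credible}}$ and $V^{\mathrm{expostBP}}$ via explicit examples.

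\textbf{Part 1.} The inequalities $V^{\mathrm{BP}} \succeq V^{\mathrm{credible}}$ and $V^{\mathrm{BP}} \succeq V^{\mathrm{expostBP}}$ are immediate from set inclusion: credible and ex-post IR schemes are both Bayes-plausible signaling schemes, so each optimization is a restriction of the BP linear program from \cref{section:optimality}. The inequality $V^{\mathrm{credible}} \succeq V^{\mathrm{cheaptalk}}$ follows from a direct observation (also noted in \cite{lin2022credible}): the outcome of any cheap talk equilibrium is credible, since no unilateral deviation by the sender is profitable at equilibrium. The substantive inequality is $V^{\mathrm{expostBP}} \succeq V^{\mathrm{cheaptalk}}$. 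The plan is to show that the outcome distribution $\phi$ of any cheap talk equilibrium is ex-post IR, so it is feasible in the ex-post IR LP and thus its sender value is upper-bounded by $V^{\mathrm{expostBP}}(\mu_0)$.

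To show this ex-post IR claim, fix a state $\theta$ and a signal $s$ with $\phi_\theta(s) > 0$, and let $a_i = a^*(\mu_s)$. Cheap talk Sender IC says $s$ maximizes $v(a^*(\mu_{s'}),\theta)$ over all supported $s'$. Under \cref{sorted}, the ranking of actions by $v(\cdot,\theta)$ is state-independent, so the sender in state $\theta$ picks (one of) the lowest-index supported action $a_i$. It remains to show $v(a_i,\theta) \ge v(a_{k^*},\theta)$, equivalently $i \le k^*$ under \cref{sorted}. I would combine two ingredients: (i) Bayes-plausibility $\sum_{s'} \phi(s')\mu_{s'} = \mu_0$ guarantees that at least one supported posterior induces an action weakly preferred by the sender to $a_{k^*}$ (else averaging would contradict the receiver's best response at $\mu_0$); and (ii) the Sender IC "no-mimicking" constraint forces the equilibrium action at $\theta$ to be at least as preferred as that witness action. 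Chaining these together gives $a_i$ weakly preferred to $a_{k^*}$ in state $\theta$, which is exactly ex-post IR.

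\textbf{Parts 2 and 3.} Here the plan is to exhibit two small explicit instances. For Part 2, I would construct a game whose optimal BP scheme is credible but not ex-post IR, so $V^{\mathrm{credible}} = V^{\mathrm{BP}} > V^{\mathrm{expostBP}}$. The lending instance in \cref{ex:lending} is a natural candidate: the full-revelation optimum is known to fail ex-post IR, and one can verify via the Lin--Liu credibility criterion that the sender has no profitable secret deviation respecting the signal marginal. For Part 3, I would look for a game whose optimal BP scheme is ex-post IR but not credible, so $V^{\mathrm{expostBP}} = V^{\mathrm{BP}} > V^{\mathrm{credible}}$. Trading games from \cref{def:trading} automatically have ex-post IR optima by \cref{extendBBM}, so a small bilateral-trade instance (e.g.\ two types) where the sender could profitably misreport low realizations to high ones under the optimal partition should witness the strict failure of credibility.

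\textbf{Main obstacle.} The hardest step is $V^{\mathrm{expostBP}} \succeq V^{\mathrm{cheaptalk}}$: promoting cheap talk Sender IC (a within-support statement) to a per-state comparison against $a^*(\mu_0)$. The abstract flags this as the point where \cref{sorted} is necessary, so the argument must genuinely use the state-independent preference ordering; absent it, a cheap talk equilibrium can easily induce in some state an action the sender ranks below $a^*(\mu_0)$. The secondary obstacle is finding compact examples for Parts 2 and 3 where credibility and ex-post IR can be checked by hand without solving the full LP.
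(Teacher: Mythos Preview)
Your Part 1 outline is close to the paper's, but step (i) as written is incorrect. Bayes-plausibility alone does \emph{not} force some supported posterior to induce an action weakly sender-preferred to $a_{k^*}$. Concretely, take three actions with receiver utilities $u(a_1,\cdot)=(2,2)$, $u(a_2,\cdot)=(3,0)$, $u(a_3,\cdot)=(0,3)$, prior $\mu_0=(\tfrac12,\tfrac12)$ so $k^*=1$, and the full-revelation decomposition into $(1,0)$ and $(0,1)$: both posteriors induce actions with index strictly larger than $k^*$, yet the decomposition is Bayes-plausible. The averaging you invoke fails because different posteriors induce different actions, so the receiver inequalities cannot be summed to a single contradiction at $\mu_0$. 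The paper sidesteps this by a case split rather than trying to prove every cheap talk equilibrium is ex-post IR: if every used message induces an action strictly worse than $a^*(\mu_0)$, then under \cref{sorted} the cheap talk value is below $\hat v(\mu_0)\le V_{\textsc{expost}}(\mu_0)$; otherwise some message induces an action at least as good, and Sender IC plus \cref{sorted} forces every state to use such a message, giving ex-post IR. Your ingredients (i) and (ii) can be salvaged by reversing their order---first use Sender IC to collapse all used messages to a single best-for-sender induced action, then argue that action is a receiver best response at $\mu_0$---but as sequenced, (i) is a genuine gap.

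For Parts 2 and 3 your proposed instances happen to work, but not quite for the reasons you give. The lending example has state-independent $v$, hence additively separable, and a lemma of \cite{lin2022credible} then yields $V_{\textsc{credible}}=V_{\textsc{BP}}$ for \emph{every} prior automatically---this is what delivers the ``for any $\mu_0$'' clause without hand-checking credibility case by case. The paper exploits exactly this structural route: it pairs the Lin--Liu lemma (additively separable $v$ $\Rightarrow$ credible $=$ BP) with an instance where ex-post IR is strictly below BP for Part 2, and pairs another Lin--Liu lemma (supermodular $v$, submodular $u$ $\Rightarrow$ credible $=$ no communication) with the easy observation that $|A|=2$ forces ex-post IR $=$ BP for Part 3. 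Your bilateral-trade idea for Part 3 also works since \cref{extendBBM} gives $V_{\textsc{expost}}=V_{\textsc{BP}}\ge V_{\textsc{credible}}$ for all priors, but verifying a strict gap by hand is unnecessary for the theorem as stated.
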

The first part of \cref{thm:compare} that ex-post IR Bayesian persuasion and credible persuasion are partial commitment models between Bayesian persuasion and cheap talk. The second part of \cref{thm:compare} that the commitment power of ex-post IR Bayesian persuasion and credible persuasion differs in different cases based on a series of lemma that show when the optimal sender utility of ex-post IR Bayesian persuasion or credible persuasion is the same as the optimal sender utility of Bayesian persuasion. Then we can construct examples when one model's optimal sender utility is equal to the optimal sender utility of Bayesian persuasion and the other model's optimal sender utility is strictly smaller than the optimal sender utility of Bayesian persuasion.

\section{Conclusion}
In this paper, we studied ex-post IR Bayesian persuasion, a novel Bayesian persuasion model with restricted commitment power. We show that the ex-post IR constraint is a linear constraint; hence, linear programming can still solve the optimal ex-post IR signaling scheme. Our main results give broad conditions when there is no utility gap for the sender between the optimal signaling scheme and the optimal ex-post IR signaling scheme. At last, we show that, in the sense of the sender's optimal utility, our model and credible persuasion is an information transmission model between Bayesian persuasion and cheap talk. We further show when the optimal sender utility of our model is higher than the optimal sender utility of credible persuasion and vice versa.

\bibliographystyle{abbrvnat}
\bibliography{ref}

\newpage
\appendix
\section{Related literature}
A substantial body of previous research has studied the credibility of the sender in Bayesian persuasion. They consider the notion of credibility in different ways, which are all different from ours. In the 
work by \citet{nguyen2021bayesian}, the sender sends a signal to the receiver at some cost that depends on the signal and the realized world state, which makes the sender more credible. \citet{lipnowski2022persuasion} consider the sender to communicate with the receiver through an intermediary which can be influenced by the sender with some probability.
\citet{lin2022credible} characterize the notion of credibility by restricting the sender to only commit to a marginal distribution of signals rather than the whole signaling scheme. There are more works about the limitation of the commitment power via repeated interactions (\cite{best2016honestly,mathevet2022reputation}), verifiable information (\cite{hart2017evidence,ben2019mechanisms}), information control (\cite{luo2018strategies}), and mediation (\cite{salamanca2021value}).

Our paper considers two applications of Bayesian persuasion: bilateral trade and credence goods market. The information transmission models about these two applications has also been extensively studied in previous works. For the bilateral trade, \citet{ali2020voluntary} consider the welfare of the sender (consumer) in both monopolistic and competitive markets, and \citet{glode2018voluntary} consider the voluntary disclosure in the bilateral trade rather than the Bayesian persuasion. For the credence goods market, \citet{fong2014role} compare the market outcome with and without commitment and their model is extended by \citet{hu2023information}.

Technically, at a very high level, our work relies on two classic tools, which have also been successfully used in previous works. The first is quasiconcave closure. For example, \citet{lipnowski2020cheap} show that in cheap talk, when the sender's utility is state-independent, the optimal sender utility can be characterized by the quasiconcave closure of the sender's expected utility function. The second is the greedy algorithm. For example, \citet{lucier2015greedy} study mechanisms that use greedy allocation rules and show that all such mechanisms obtain a constant fraction of the optimal welfare.

\section{More application: no gap in the credence goods market}\label{subsection:credence}
In this section, we prove when there is no gap in the sender's expected utility led by the ex-post IR constraint in the \emph{credence goods market}. 

In the credence goods market, the sender's utility gap is bounded by the gap between the optimal signaling scheme and an optimal greedy signaling scheme. We again identify two key properties of the credence goods market, which we term \emph{cyclically monotonicity} and \emph{weak logarithmic supermodularity}, based on which we extend the credence goods market to a broader class of games where the gap led by the ex-post IR constraint can be bounded in the same way.

First, we give a brief introduction to the credence goods market. A credence good is a product or service whose qualities or appropriateness are hard to evaluate for consumers even after consumption. Common examples of credence goods include medical services, automobile repairs, and financial advice. For instance, patients who receive expensive medical treatments may struggle to know their value post-recovery. Car owners may question the necessity of replacing expensive parts of cars during repairs. Additionally, investors with limited financial knowledge may remain unsure about whether a financial product is suitable for long-term gains even if the product has realized short-term returns. On the flip side, expert sellers frequently possess superior insights into the appropriateness of credence goods and can use this knowledge to their advantage.

\paragraph{The persuasion problem for credence goods market} The client has a problem with an uncertain type which is drawn from a set $\Theta=\{\theta_1,...,\theta_n\}$ according to the common prior $\mu_0$. The expert provides $n$ different treatments $A=\{a_1,a_2,...,a_n\}$.
A treatment $a_i$ can solve the problem of $\theta_j$ if and only if $i\ge j$. Thus, $a_n$ is a panacea that fully resolves all problems. The cost of each treatment $a_i$ for the expert is $c_i$. For each treatment $a_i$, the client needs to pay the price $p_i$ to the expert. If the problem remains unsolved, the client will suffer an extra disutility $l$. We have $p_1< p_2<\cdots<p_n\ll l$. That is, the better the treatment, the higher the price. If the problem remains unsolved, the receiver will have disutility sufficiently larger than any possible prices. (Consider the medical service example, the patient may have mortal danger if not being cured.) The utility of the receiver, namely client is $u(a_i,\theta_j)=-p_i-l\cdot\mathbbm{1}_{i<j}$. Without loss of generality, we add a sufficiently large constant $c$ on the utility of the receiver to keep it positive. The receiver's utility is $u(a_i,\theta_j)=c-p_i-l\cdot\mathbbm{1}_{i<j}$. Following \citet{farukh2020inefficiency}, we assume the expert is altruistic, which means his utility doesn't contradict the client's utility. Formally, we denote the cost of each treatment as $c_i$ for all $i\in[n]$ and the sender utility is $v(a_i,\theta_j)=p_i-c_i\triangleq s_i$. We have $p_1-c_1>p_2-c_2>\cdots>p_n-c_n$.

\begin{theorem}\label{thm:credence}
In the persuasion model of the credence goods market, there exists an optimal signaling scheme that is ex-post IR. 


\end{theorem}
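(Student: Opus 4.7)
The plan is to exhibit an optimal signaling scheme that happens to be ex-post IR by constructing a greedy scheme based on indifference posteriors. Let $k^* = a^*(\mu_0)$. For each $i \in [n]$, let $\mu^{(i)}$ denote a posterior under which the receiver is indifferent among actions $a_i, a_{i+1}, \ldots, a_n$ (and hence selects $a_i$ by the sender-favorable tie-breaking). A direct calculation from $u(a_i,\theta_j) = c - p_i - l\mathbbm{1}_{i<j}$ pins down $\mu^{(i)}(\theta_j) = (p_j - p_{j-1})/l$ for $j > i$, while the remaining mass $1 - (p_n - p_i)/l$ on states $\theta_1, \ldots, \theta_i$ is free to be allocated.

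The greedy signaling scheme extracts posteriors iteratively. Starting with the residual $M_1 = \mu_0$, for $i = 1, \ldots, k^*$ in the sender's preference order, I would allocate the free portion of $\mu^{(i)}$ to match the shape of $M_i$ on $\{\theta_1, \ldots, \theta_i\}$, set $q_i$ to the largest nonnegative value making $M_{i+1} := M_i - q_i\mu^{(i)}$ coordinatewise nonnegative, and iterate. An inductive argument shows that after step $i$ the residual $M_{i+1}$ has zero mass on $\theta_i$, so the procedure terminates with $M_{k^*+1} = 0$; the termination uses the fact that $a^*(\mu_0) = a_{k^*}$ forces enough prior mass on states $\theta_j$ with $j \le k^*$ to be consumed by the $q_i$'s. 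The ex-post IR property is then immediate: every realized posterior is some $\mu^{(i)}$ with $i \le k^*$, so $v(a^*(\mu^{(i)}), \theta) = s_i \ge s_{k^*} = v(a^*(\mu_0), \theta)$ for every state $\theta$, by state-independence and monotonicity of $v$ in the action index.

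The harder component is proving optimality. Since $v$ is state-independent, the sender's expected utility in any scheme equals $\sum_i q_i s_i$ where $q_i$ is the total weight the scheme places on action $a_i$. I would argue by a cascading exchange argument: the largest feasible weight on $a_1$ in any scheme is at most the greedy $q_1$, because every $a_1$-inducing posterior must place at least $(p_j - p_{j-1})/l$ mass on each $\theta_j$ with $j > 1$, and $\mu^{(1)}$ meets these bounds with equality. Iterating this observation on the residual prior $M_2, M_3, \ldots$ shows that the greedy weights $(q_1, \ldots, q_{k^*})$ maximize the sender's utility lexicographically in the sender's preferred order, which dominates any other feasible vector of action weights.

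The main obstacle is precisely this last optimality step: the exchange argument must verify that whenever a given scheme deviates from the greedy one, mass can be shifted toward lower-index actions without breaking the incentive constraints at other actions. The credence goods structure — a triangular utility matrix where $a_i$ only solves $\theta_j$ with $j \le i$, together with the strictly increasing prices — is exactly what makes this feasibility preservation possible, since the per-state minimum mass requirements to induce each $a_i$ add up additively along the greedy cascade. In the broader class of games alluded to in the introduction, the analogous additivity is obtained from the cyclical monotonicity and weak log-supermodularity conditions.
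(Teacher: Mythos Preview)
Your overall plan mirrors the paper's: build a greedy signaling scheme, argue it is ex-post IR because it only recommends actions $a_i$ with $i\le k^*$, and then show optimality. The ex-post IR part is essentially fine once termination at step $k^*$ is established, though the paper obtains this from a more general fact (the LP-based greedy is ex-post IR whenever the receiver's utility is cyclically monotone and weakly log-supermodular), rather than from the specific indifference-posterior construction you propose.

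The genuine gap is in your optimality argument. First, the sentence ``every $a_1$-inducing posterior must place \emph{at least} $(p_j-p_{j-1})/l$ mass on each $\theta_j$'' has the inequality in the wrong direction: the incentive constraint for $a_1$ against $a_j$ is $\sum_{k=2}^{j}\mu(\theta_k)\le (p_j-p_1)/l$, an \emph{upper} bound on the cumulative tail mass, not a per-state lower bound. The indifference posterior $\mu^{(1)}$ saturates these upper bounds; it is not the unique feasible shape, and different $a_1$-inducing posteriors leave different residuals $M_2$. Second, and more seriously, ``lexicographic maximization in the sender's preferred order'' does not by itself imply maximization of $\sum_i q_i s_i$: with $\sum_i q_i=1$ and $s_1>s_2>\cdots$, a lex-max vector can still be beaten (e.g.\ $(0.5,0.3,0,0.2)$ versus $(0.5,0.2,0.3,0)$ with $s=(4,3,2,1)$). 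What you would need is first-order stochastic dominance of the greedy action distribution over every feasible one, and your cascading exchange sketch does not establish that.

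The paper closes this gap by a perturbation argument that \emph{crucially} uses the assumption $p_1<\cdots<p_n\ll l$. Under $l\gg p_n$, the binding IC constraint at every greedy step is the one against $a_n$, and shifting $\epsilon$ of probability from the optimal scheme toward the greedy $q_1$ costs at most $\bigl(1-(p_n-p_2)/l\bigr)^{-1}s_2\,\epsilon\approx s_2\,\epsilon$ while gaining $s_1\,\epsilon$; since $s_1>s_2$, the perturbation is profitable, and induction on $n$ finishes the argument. Your write-up never invokes $l\gg p_n$, and without it the greedy scheme need not be optimal, so this is not merely a missing detail but a missing idea. The termination claim ``$M_{i+1}(\theta_i)=0$ for all $i$'' also needs more than an appeal to $a^*(\mu_0)=a_{k^*}$; in the paper this falls out of the LP analysis (their Lemma on binding constraints), whereas for your indifference posteriors the binding coordinate in $\min_j M_i(\theta_j)/\mu^{(i)}(\theta_j)$ is not automatically $\theta_i$.
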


Technically, our result relies on a notion of \emph{greedy algorithm}. In detail, when the sender's preference over actions is $a_1\ge a_2\ge ...\ge a_n$. A natural greedy idea to compute the signaling scheme goes as follows. Given the prior $\mu_0$, we compute a posterior belief $\mu_1$ that maximizes the probability of taking action $a_1$ and subtract it from the prior. Note that this can be computed by simply running linear programming. Then compute the posterior belief that maximizes the probability of taking action $a_2,...,a_n$ respectively. In this algorithm, it will be better to treat the prior $\mu_0$ as an initial multidimensional \emph{"budget"} and after each step of the greedy algorithm, each dimension of the budget decreases. The algorithm stops until all the budget is used up. Formally, we provide the algorithm (see \cref{alg:greedy}) to solve a greedy signaling scheme. We can prove that in the credence market, the greedy signaling scheme is ex-post IR. Furthermore, we can prove that there exists a greedy signaling scheme that is optimal through the condition $p_1\le p_2\cdots p_n\ll l$. Combining two results together, we know such an optimal signaling scheme is ex-post IR.  

\begin{proof}
 Let us first prove the the optimal signaling scheme is a greedy signaling scheme. We formalize our proof through the use of mathematical induction. First, let us consider the binary action case. Then the expected sender utility can be represented as $q_1s_1+q_2s_2$ where $q_i$ is the probability of taking action i given some signaling scheme for $i=1,2$. Because $q_1+q_2=1$ and the greedy signaling scheme will maximize $q_1$, the greedy signaling scheme is optimal.

Next, assume the greedy signaling scheme is optimal when the number of actions is $n-1$. Consider the case that the number of states and actions is $n$. Note that for any prior $\mu_0$ and any $\theta_k$ for some $k\in[n]$, if $\mu_0(\theta_k)=0$, $a_k$ can not the the best response under $\mu_0$ because the receiver's utility $u$ is cyclically monotone. Therefore, it is without loss of generality to assume that $\mu_0$ does not have zero component. Otherwise we can delete $\theta_k$ and $a_k$ and reduce it to thee case of $n-1$ actions. In the following, we fix a prior $\mu_0$ that $\mu(\theta_k)>0$ for any $k\in[n]$ and an optimal signaling scheme $\pi$. We first provide a lemma of the probability of taking some action $a_i$ in the greedy signaling scheme.
\begin{lemma}\label{lem: perturbation loss}
    Consider a greedy signaling scheme $\pi^\prime$, let $q_i$ be the probability of taking action $a_i$ for any $i\in[n]$, then
    \begin{equation*}\label{eq: bind}
         q_i=\min_{i\le j\le n}\frac{\sum_{k=1}^i\mu_{i-1}(\theta_k)+\sum_{k=j+1}^n\mu_{i-1}(\theta_k)}{1-\frac{p_j-p_i}{l}},  
    \end{equation*}
where $\mu_{i-1}$ follows the definition in \cref{thm:greedy}.
\end{lemma}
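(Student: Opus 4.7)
The proof proceeds by directly analyzing the linear program that step $i$ of the greedy algorithm solves. Given the residual measure $\mu_{i-1}$, that step computes nonnegative $x_1,\dots,x_n$ with $x_k\le\mu_{i-1}(\theta_k)$ that maximize $q_i=\sum_k x_k$ subject to the receiver's incentive-compatibility (IC) constraints requiring $a_i$ to be a best response under the induced posterior $\mu^{(i)}(\theta_k)=x_k/q_i$. Substituting the explicit form $u(a_i,\theta_k)=c-p_i-l\,\mathbbm{1}_{k>i}$, the IC constraint against $a_j$ with $j>i$ simplifies to $(p_j-p_i)\,q_i\ge l\sum_{k=i+1}^{j}x_k$, while the constraint against $a_j$ with $j<i$ reads $(p_i-p_j)\,q_i\le l\sum_{k=j+1}^{i}x_k$.

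For the upper bound on $q_i$, fix any $j\in\{i,\dots,n\}$ and rewrite $\sum_{k=i+1}^{j}x_k=q_i-\sum_{k\le i}x_k-\sum_{k>j}x_k$ inside the IC constraint for $j$. Combining this with the budget bounds $x_k\le\mu_{i-1}(\theta_k)$ on the two ``outer'' index blocks yields
\[
\left(1-\tfrac{p_j-p_i}{l}\right)q_i \;\le\; \sum_{k=1}^{i}\mu_{i-1}(\theta_k)+\sum_{k=j+1}^{n}\mu_{i-1}(\theta_k),
\]
which rearranges into exactly the fraction appearing in the lemma. The degenerate case $j=i$ recovers the trivial total-mass bound $q_i\le\sum_k\mu_{i-1}(\theta_k)$. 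Taking the minimum over $j\in\{i,\dots,n\}$ gives the $\le$ direction of the claim.

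For the matching direction (achievability), I would exhibit a feasible allocation attaining this minimum. Let $j^\star$ be a minimizer. Set $x_k=\mu_{i-1}(\theta_k)$ for every $k\le i$ and every $k>j^\star$, and distribute the remaining mass $(p_{j^\star}-p_i)q_i/l$ among the indices $i<k\le j^\star$ subject to their budgets; the minimality of $j^\star$ is precisely what ensures this remaining mass does not exceed $\sum_{i<k\le j^\star}\mu_{i-1}(\theta_k)$. By construction the IC against $a_{j^\star}$ binds with equality, and the IC against $a_j$ for $j>j^\star$ follows from the fact that $j^\star$ minimizes the ratio.

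The main obstacle is verifying the lower-bound IC constraints against $a_j$ for $j<i$, which are invisible to the formula itself. These rely on a structural invariant of the residual $\mu_{i-1}$ that I plan to establish by induction on $i$: the earlier greedy steps $1,\dots,i-1$ drain probability from the low-indexed states in a controlled order, leaving enough mass on high-indexed states for the current posterior to satisfy every downward IC. This is where the cyclical monotonicity and weak log-supermodularity hypotheses of the credence-goods setting enter, letting us compare the shape of $\mu_{i-1}$ to the required posterior. Bridging this inductive invariant to the concrete allocation above is the technical heart of the proof; the rest is bookkeeping on the LP.
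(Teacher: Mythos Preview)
Your argument mirrors the paper's almost line for line: the same IC-plus-budget derivation of the upper bound $q_i\le x_{ij}$ for each $j\ge i$, and the same achievability construction (saturate the budget on $k\le i$ and $k>j^\star$, place the residual in the middle block). The paper is in fact \emph{less} careful than you are: it dispatches feasibility of the constructed allocation with ``it is easy to verify that it is a feasible solution'' and never isolates the downward IC constraints against $a_j$ with $j<i$ that you flag.

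That said, you overestimate what closing this gap requires. The structural invariant you gesture at is simply that $\mu_{i-1}(\theta_k)=0$ for all $k<i$, which the paper establishes elsewhere as a one-line observation (\cref{lem:useup}): each greedy step $LP_k$ exhausts the budget on its own diagonal state $\theta_k$, since raising $\pi'(a_k,\theta_k)$ never violates any IC under cyclical monotonicity, and exhausted coordinates stay exhausted. With that in hand, the downward IC against $a_j$ collapses to $l\,\mu_{i-1}(\theta_i)\ge(p_i-p_j)\,q_i$; substituting $q_i=x_{in}=\mu_{i-1}(\theta_i)/\bigl(1-(p_n-p_i)/l\bigr)$ (the minimizer is $j^\star=n$ whenever $l\gg p_n$) this becomes $l-(p_n-p_i)\ge p_i-p_j$, immediate from $l\gg p_n$. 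So neither an elaborate induction nor weak log-supermodularity is needed for this lemma; cyclical monotonicity (via \cref{lem:useup}) together with the standing assumption $l\gg p_n$ already does the job, and what you call ``the technical heart'' is a short bookkeeping check.
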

\begin{proof}
    Let $x_{ij}=\frac{\sum_{k=1}^i\mu_{i-1}(\theta_k)+\sum_{k=j+1}^n\mu_{i-1}(\theta_k)}{1-\frac{p_j-p_i}{l}}$. We have $q_i=\sum_{k=1}^n\pi^\prime(a_i,\theta_k)$. On one side, consider any action $a_j$ that $i<j\le n$. According to \cref{eq: incentive}, we have
    \[
    (p_j-p_i)(\sum_{k=1}^i\pi^\prime(a_i,\theta_k)+\sum_{k=j+1}^n\pi^\prime(a_i,\theta_k)\ge (l-(p_j-p_i))\sum_{k=i+1}^j\pi^\prime(a_i,\theta_k).
    \]
Then we have $l(\sum_{k=1}^i\pi^\prime(a_i,\theta_k)+\sum_{k=j+1}^n\pi^\prime(a_i,\theta_k))\ge (l-(p_j-p_i))q_i$. So for any $i<j\le n$, we have $q_i\le x_{ij}$. We also have $q_i\le x_{ii}$ because $x_{ii}=\sum_{k=1}^n\mu_{i-1}\theta_i$. Therefore $q_i\le\min_{i\le j\le n}x_{ij}$.

On the other side, let $x_{ij^\prime}=\min_{i\le j\le n}x_{ij}$. We can build $\pi^\prime(a_i,\theta_k)$ for any $k\in[n]$ such that $q_i=x_{ij^\prime}$ in the following way. First, for $1\le k\le i$ and $j^\prime<k\le n$, let $\pi^\prime(a_i,\theta_k)=\mu_{i-1}(\theta_k)$. Then for $i<k\le j^\prime$, let $\pi^\prime(a_i,\theta_k)\le\mu_{i-1}(\theta_k)$ and 
\begin{equation}\label{eq: middle}
\sum_{i<k\le j^\prime}\pi^\prime(a_i,\theta_k)=x_{ij^\prime}-(\sum_{k=1}^i\mu_{i-1}(\theta_k)+\sum_{k=j^\prime+1}^n\mu_{i-1}(\theta_k)). 
\end{equation}
Such $\pi^\prime(a_i,\theta_k)$ exists because $x_{ij^\prime}\ge x_{ii}$ and it is easy to verify that it is a feasible solution.
\end{proof}

Next, in the optimal signaling scheme $\pi$, we will prove that $q_1$ equals the probability of taking $a_1$ computed by the greedy algorithm. Otherwise, consider in the optimal solution, $q_1$ is strictly smaller than the the probability of taking $a_1$ computed by the greedy algorithm. Let the greedy signaling scheme be $\pi^\prime$. Denote $\pi_1=(\pi(a_1,\theta_1),\cdots,\pi(a_1,\theta_k),\cdots,\pi(a_1,\theta_n)$ and $\pi_1^\prime=(\pi^\prime(a_1,\theta_1),\cdots,\pi^\prime(a_1,\theta_k),\cdots,\pi^\prime(a_1,\theta_n))$. Let $q_1^\prime=\sum_{k=1}^n\pi^\prime(a_1,\theta_k)=q_1+\epsilon$ where $\epsilon>0$. Because $\pi$ is the optimal signaling scheme and the inductive assumption, we know $\pi_2,\pi_3,\cdots,\pi_n$ can be computed by the greedy algorithm with initial "budget" $\mu_0-\pi_1$. Let $\pi^\prime$ be a new signaling scheme that $\pi_2^\prime,\pi_3^\prime,\cdots,\pi_n^\prime$ can be computed by the greedy algorithm with initial "budget" $\mu_0-\pi_1^\prime$. The gain from such perturbation is at least $s_1\epsilon$. Now let us upper bound the loss from such perturbation. According to $l\gg p_n>\cdots>p_1$ and \cref{lem: perturbation loss}, we know that for any budget $u_{i-1}$, $x_{in}=\min_{i\le j\le n}x_{ij}$. Therefore, for any $2\le i\le n$, $\pi_i$ and $\pi_i^\prime$ both bind at the incentive constraint of $a_1$ and $a_n$. For $i\ge 2$, We have $\mu_{i-1}(\theta_n)\ge\frac{p_n-p_i}{l-(p_n-p_2)}$ because $l\gg p_n$ and then there exists one feasible $\pi_i$ such that $\pi(a_i,\theta_k)=0$ for any $i<k<n$ and $\pi(a_i,\theta_n)=\frac{(p_n-p_i)\sum_{k=1}^i\pi(a_i,\theta_k)}{l-(p_n-p_i)}$. We have similar argument for $\pi^\prime(a_i,\theta_k)$ for $i\ge 1$ (Note that $\pi^\prime(a_1,\theta_k)$ is also computed by the greedy algorithm). According to \cref{lem:useup} and the above mentioned argument, we know the loss is at most $(1+\frac{(p_n-p_2)}{l-(p_n-p_2)})s_2\epsilon=\frac{1}{1-\frac{p_n-p_2}{l}}s_2\epsilon$. Therefore, the perturbation is profitable because $s_1>s_2$ and $l\gg p_n$, which means there exists an optimal signaling scheme $\pi$,  $\pi(a_1,\cdot)$ is computed by the greedy signaling scheme. The left signaling scheme is the case of $n-1$ actions and we know it should be greedy by inductive assumption.

Now we know that in the credence goods market, the optimal signaling scheme is a greedy signaling scheme. We also know that the receiver satisfies cyclical monotonicity and weak logarithmic supermodularity. According to \cref{thm:greedy}, we know the greedy signaling scheme is ex-post IR. Combining two things together, we know in the persuasion model of the credence goods market, there exists an optimal signaling scheme that is greedy and then ex-post IR. 

\end{proof}


 The game of credence market satisfies two properties we term as \emph{cyclically monotonicity} and \emph{weak logarithmic supermodularity} for the receiver's utility. The word "weak" means that the logarithmic supermodularity doesn't need to hold for the diagonal elements. Formally, we have

\begin{definition}\label{cyclical monotonicity}(cyclical monotonicity) A utility function u is cyclically monotone if for all $i,j,k\in[n]$ that $i<j$
\[
u(a_{k-1+i},\theta_k)\ge u(a_{k-1+j},\theta_k)
\]
where $a_{k-1+i}\equiv a_{(k-1+i)\mathrm{mod }\ n}$.
\end{definition}
\shuran{Explain these two definitions.}\cref{cyclical monotonicity} says that if each column's elements are arranged on a circle, they are monotone and the diagonal element $u(a_k, \theta_k)$ is the largest for all $k\in[n]$. See \cref{tab: cyc_mono} for example. Consider the second column, we have $u(a_2,\theta_2)>u(a_3,\theta_2)>u(a_4,\theta_2)>u(a_1,\theta_2)$. Other columns are similar. \jiahao{added a table to explain this definition.}

\begin{table}[!h]
\centering
\begin{tabular}{|l|l|l|l|l|}
\hline
  & $\theta_1$  & $\theta_2$  & $\theta_3$  & $\theta_4$  \\ \hline
$a_1$ & 13 & 3  & 3  & 3  \\ \hline
$a_2$ & 12 & 12 & 2  & 2  \\ \hline
$a_3$ & 11 & 11 & 11 & 1  \\ \hline
$a_4$ & 10 & 10 & 10 & 10 \\ \hline
\end{tabular}
\caption{The receiver's utility in credence goods market when $n=4$, $p_1=1$, $p_2=2$, $p_3=3$, $p_4=4$, $l=10$, and $c=14$.}
\label{tab: cyc_mono}
\end{table}

\begin{definition}\label{weak supermodular}(weak logarithmic supermodularity)
A utility function u is weakly supermodular in logarithmic form if $\frac{u(a_i,\theta_k)}{u(a_i,\theta_{k+1})}\ge\frac{u(a_j,\theta_{k})}{u(a_j,\theta_{k+1})}\quad\forall 1\le i<j,k<n,j\neq k$.  
\end{definition}
\cref{weak supermodular} is basically the supermodularity in the logarithmic form except the logarithmic supermodularity does not need to hold for the diagonal element. Intuitively, the supermodularity of receiver's utility function captures that the marginal utility from higher actions increases with the state. Then we extend the game of credence market to a general class of games. Our extension says that as long as a game satisfies cyclical monotonicity and weak supermodularity in logarithmic form, the greedy signaling scheme is ex-post IR. Therefore, the greedy signaling scheme is ex-post IR for any game in this class. As a corollary, the gap between the gap between the optimal signaling scheme and the optimal ex-post IR signaling scheme can be upper bound by the gap between the optimal signaling scheme and the greedy signaling scheme.
\begin{theorem}\label{thm:greedy}
    The greedy signaling scheme is ex-post IR if $u$ is cyclically monotone and weakly supermodular in logarithmic form. 
\end{theorem}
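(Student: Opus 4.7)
The goal is to show that if the greedy algorithm run on $\mu_0$ assigns positive probability $q_i > 0$ to the posterior $\mu^{(i)}$ at step $i$, then $i \le k^*$, where $a_{k^*} = a^*(\mu_0)$. Since by construction $a^*(\mu^{(i)}) = a_i$, this is exactly what ex-post IR requires under \cref{sorted}. My plan is to proceed by induction on $i$, maintaining the invariant that either the residual budget is zero or the normalized residual still has its best response weakly worse (in sender preference) than $a_i$, so that the greedy step at $i$ is well-defined and the algorithm must terminate at or before step $k^*$.

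The first ingredient is cyclical monotonicity. From \cref{cyclical monotonicity}, each column $\theta_k$ is circularly monotone with its maximum at the diagonal entry $u(a_k,\theta_k)$. This structure tells us which incentive-compatibility constraints are binding at the greedy posterior $\mu^{(i)}$: up to degeneracies, the IC constraints of $a_i$ against its cyclic neighbors are the relevant ones, which yields an explicit formula for $q_i$ as a minimum over a small collection of partial-sum ratios of $\mu_{i-1}$, analogous to \cref{lem: perturbation loss} in the credence-goods case. I would use this to give a clean expression for the residual $\mu_i = \mu_{i-1} - q_i \mu^{(i)}$ at each step.

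The second ingredient is weak logarithmic supermodularity, which I would use to preserve the inductive invariant. After subtracting $q_i \mu^{(i)}$, I need the normalized residual $\mu_i$ to continue having a best response in $\{a_i, a_{i+1}, \dots, a_n\}$, so that no posterior hidden inside the residual is both sender-preferred over $a_{k^*}$ and missed by the greedy choice at step $i$. The ratio inequality from \cref{weak supermodular} lets me compare the differences $\sum_\theta (u(a_j,\theta) - u(a_{j'},\theta))\mu_i(\theta)$ across consecutive steps and show that they evolve monotonically in the appropriate direction, forcing the best-response index of the normalized residual to weakly increase after each greedy extraction. Combined with the base case $a^*(\mu_0) = a_{k^*}$, this invariant propagates until step $k^*$, at which point the residual must be absorbable entirely into a single posterior inducing $a_{k^*}$.

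The main obstacle will be handling the word ``weak'' in \cref{weak supermodular}, since the log-supermodularity inequality is not required to hold when $j = k$; this breaks the textbook monotone-comparative-statics argument along the diagonal. I expect the proof to split each incentive comparison into a diagonal contribution, controlled directly by cyclical monotonicity (since the column maximum sits on the diagonal and ties can be broken in favor of the sender), and an off-diagonal contribution, controlled by weak log-supermodularity. The most delicate case will be the terminal step $i = k^*$, where I need the greedy algorithm to absorb the entire remaining residual into a single posterior inducing $a_{k^*}$; this should follow from the inductive invariant together with the fact that, by definition of $k^*$, the prior itself is a feasible target for $a_{k^*}$, so the residual --- which is $\mu_0$ minus earlier greedy posteriors targeting $a_1,\dots,a_{k^*-1}$ --- remains persuadable toward $a_{k^*}$.
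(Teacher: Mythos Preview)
Your high-level target is right --- you want the greedy algorithm to exhaust all budget by step $k^*$ --- but the inductive invariant you state does not deliver it, and the argument you sketch for the terminal step is exactly the hard part that your invariant does not cover.

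First, the invariant. Saying the residual $\mu_i$ has best response in $\{a_i,\dots,a_n\}$, or that the best-response index ``weakly increases,'' is not what is needed. The invariant the proof actually requires is the opposite bound: for every residual $\mu_i$ and every $k>k^*$, the action $a_k$ is \emph{not} a best response under $\mu_i$. Combined with the support shrinking to $\{\theta_{i+1},\dots,\theta_n\}$ (which follows from a separate lemma showing $LP_i$ always uses up all of $\mu_{i-1}(\theta_i)$, via cyclical monotonicity), this pins the residual's best response to $\{a_{i+1},\dots,a_{k^*}\}$, so at step $k^*$ the entire residual is absorbable into $a_{k^*}$. Your terminal claim ``the prior is persuadable toward $a_{k^*}$, so the residual is too'' is precisely this invariant for $i=k^*-1$, and it does not follow from anything you have set up.

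Second, the role of weak log-supermodularity. It does not enter as a monotone-comparative-statics device on the differences $\sum_\theta (u(a_j,\theta)-u(a_{j'},\theta))\mu_i(\theta)$. It enters through a \emph{binding lemma}: at the optimum of $LP_i$, for every coordinate $j$, either the IC constraint of $a_i$ against $a_j$ binds or the budget constraint $\pi(a_i,\theta_j)\le \mu_{i-1}(\theta_j)$ binds. This is proved by a perturbation: if neither binds at some $j$, shift mass from $\theta_{j+1}$ to $\theta_j$ in the ratio $u(a_i,\theta_{j+1})/u(a_i,\theta_j)$; weak log-supermodularity (for $j\neq i$, hence ``weak'') is exactly what makes every other IC constraint still hold after this shift, and the objective strictly improves. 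This binding lemma is the engine of the induction: for $k>k^*$ with $\mu_1(\theta_k)>0$, the IC against $a_k$ must bind, so $a_k$ is tied with $a_1$ under the greedy posterior $\pi_1$, which gives $(u_{k^*}-u_k)\cdot\pi_1\le 0$ and hence $(u_{k^*}-u_k)\cdot\mu_1\ge 0$; for $k>k^*$ with $\mu_1(\theta_k)=0$, cyclical monotonicity rules out $a_k$ directly. Without this binding dichotomy, your ``evolve monotonically'' step has no traction.
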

\shuran{Again, it is not possible for the reader to understand this paragraph. Try to extract the high-level ideas and just explain the intuition. Otherwise, this paragraph is not necessary.} 

The proof idea is as follows. Recall that the best response under prior $\mu_0$ is $a_{k^*}$. To prove the greedy signaling scheme is ex-post IR is equivalent to proving GS will stop in $k^*$ steps. Suppose \cref{alg:greedy} stops in $m$ rounds and the prior becomes $\mu_i$ after $LP_i$ for all $i\in[m]$. We will prove that for all $k>k^*$, $a_k$ can not be the best response under $\mu_i$. Therefore, \cref{alg:greedy} must stops at step $m$ when the best response under $\mu_{m-1}$ is $a_{m}$ and the algorithm will use all the “budget” to maximize the probability of taking action $a_m$ because of the greediness.

\begin{proof}

Let the left "budget" after $LP_i$ be $u_i\in\mathbb{R}^n$. Suppose \cref{alg:greedy} stops in $m$ rounds,i.e. after running $LP_1,LP_2,...,LP_m$, the prior $\mu_0$ becomes zero. The solution for $LP_i$ is $\pi_i\in\mathbb{R}^n$. We have for all $i\in[m]$,
\begin{equation}\label{eq:budget}
    \mu_0 = \sum_{j=1}^{i}\pi_j + \mu_i.
\end{equation}

We will prove that after any round $i$ of \cref{alg:greedy}, the best action under $\mu_i$ is always better than $a_{k^*}$ for the sender. This is equal to prove for any $k>k^{*}$, $a_k$ is not a bets action under $\mu_i$. Then \cref{alg:greedy} must stop at some $LP_m,m\ge k^{*}$, which means the outcome of GS is ex-post IR. In order to prove this, let us prove two following lemmas about linear programming $LP_i$. Denote the jth constraints of \cref{greedyLP:IC} in $LP_i$ as $IC_{ij}$ and the jth constraints of \cref{greedyLP:budget} in $LP_i$ as $\mathrm{budget}_{ij}$. 

\begin{lemma}\label{lem:useup}
    For any $i\in[m]$, $\mu_i(\theta_i)=0$.
\end{lemma}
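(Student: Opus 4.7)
The plan is to argue that at every step $i$ of \cref{alg:greedy}, the component $\pi_i(\theta_i)$ (the joint probability assigned to the pair $(a_i,\theta_i)$) must saturate the residual budget $\mu_{i-1}(\theta_i)$ in any optimal solution of $LP_i$. Using the credence-goods utility $u(a_j,\theta_k) = c - p_j - l\cdot\mathbbm{1}_{j<k}$, a direct case analysis on $k$ expands the receiver's incentive-compatibility constraint \cref{eq: incentive} that $a_i$ be preferred to $a_j$ into two families: for $j>i$, $\;l\sum_{k=i+1}^{j}\pi_i(\theta_k) \le (p_j-p_i)\, q_i$, and for $j<i$, $\;l\sum_{k=j+1}^{i}\pi_i(\theta_k) \ge (p_i-p_j)\, q_i$, where $q_i = \sum_k \pi_i(\theta_k)$ is the objective of $LP_i$.

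The key structural observation is that the variable $\pi_i(\theta_i)$ only appears on the \emph{favorable} side of each IC constraint. It is absent from the LHS of every $j>i$ constraint, whose sum runs over $k>i$. It enters every $j<i$ constraint through a coefficient of $l$ on the LHS, whereas on the RHS it contributes only through $q_i$ scaled by $p_i-p_j$. Consequently, perturbing any feasible solution by adding $\epsilon>0$ to $\pi_i(\theta_i)$ preserves feasibility as long as $\pi_i(\theta_i)+\epsilon\le \mu_{i-1}(\theta_i)$: the $j>i$ constraints become slacker because the LHS is unchanged while the RHS grows by $(p_j-p_i)\epsilon>0$; the $j<i$ constraints become slacker because the LHS grows by $l\epsilon$ while the RHS grows by only $(p_i-p_j)\epsilon$, and $l\epsilon>(p_i-p_j)\epsilon$ follows from $l\gg p_n\ge p_i-p_j$. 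Meanwhile the objective $q_i$ strictly increases by $\epsilon$.

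Therefore, in any optimal $\pi_i$ we must have $\pi_i(\theta_i) = \mu_{i-1}(\theta_i)$, and by the budget recursion \cref{eq:budget} this gives $\mu_i(\theta_i) = \mu_{i-1}(\theta_i) - \pi_i(\theta_i) = 0$, as claimed. I expect the main technical obstacle to be purely bookkeeping: when expanding \cref{eq: incentive} one must partition the summation index $k$ into $k\le\min(i,j)$, $\min(i,j)<k\le\max(i,j)$, and $k>\max(i,j)$, and verify that $\pi_i(\theta_i)$ falls outside all three sums in the $j>i$ case but lies in the middle block in the $j<i$ case. Once this case split is written out, the hypothesis $l\gg p_n$ immediately closes the inequality $l>p_i-p_j$ needed to propagate feasibility under the perturbation.
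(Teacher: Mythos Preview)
Your perturbation argument is correct and is precisely the idea the paper uses; the only issue is that you have specialized to the credence-goods utility, whereas \cref{lem:useup} is stated inside the proof of \cref{thm:greedy} and therefore must hold for any receiver utility $u$ that is cyclically monotone. The general argument is in fact shorter than your case analysis: in the $j$-th constraint \cref{greedyLP:IC}, the coefficient multiplying $\pi(a_i,\theta_i)$ is $u(a_j,\theta_i)-u(a_i,\theta_i)$, and cyclical monotonicity (take $k=i$ in \cref{cyclical monotonicity}) says exactly that $u(a_i,\theta_i)\ge u(a_j,\theta_i)$ for every $j$, so this coefficient is nonpositive. Hence increasing $\pi(a_i,\theta_i)$ by $\epsilon>0$ can only decrease the left-hand side of every IC constraint while strictly increasing the objective, forcing $\pi(a_i,\theta_i)=\mu_{i-1}(\theta_i)$ at optimality. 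Your explicit split into $j>i$ and $j<i$ and the appeal to $l\gg p_n$ are just the credence-goods instantiation of this single inequality; in particular the condition $l\ge p_i-p_j$ you need for $j<i$ is literally $u(a_i,\theta_i)\ge u(a_j,\theta_i)$ rewritten, so the hypothesis $l\gg p_n$ is stronger than required and does not appear in the paper's one-line proof.
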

\begin{proof}
Otherwise, we can increase $\pi_i(a_i,\theta_i)$ and have a higher objective value for $LP_i$ without breaking any constraints in $LP_i$ because of the cyclical monotonicity.
\end{proof}

\begin{lemma}\label{lem:bind}
    For any $i\in[m]$, linear programming $LP_i$ satisfies that at least one of $IC_{ij}$ and $\mathrm{budget}_{ij}$ holds with equality for all $j\in[n]$.
\end{lemma}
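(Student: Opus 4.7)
The plan is a proof by contradiction via a local perturbation at the optimum $\pi_i$ of $LP_i$. Suppose there is some $j^* \in [n]$ such that both $IC_{ij^*}$ and $\mathrm{budget}_{ij^*}$ hold with strict inequality at $\pi_i$. The goal is to exhibit a feasible perturbation that strictly increases the objective $\sum_k \pi_i(a_i, \theta_k)$, contradicting optimality.

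My first attempt is the simplest possible perturbation: raise $\pi_i(a_i, \theta_{j^*})$ by a small $\epsilon > 0$ while keeping every other coordinate fixed. The objective strictly increases by $\epsilon$. Budget feasibility is immediate: $\mathrm{budget}_{ij^*}$ has strict slack by hypothesis, so it tolerates an $\epsilon$-increase, and every other budget constraint is untouched. Non-negativity is trivially preserved. For each IC constraint, the change in the LHS of $IC_{ij}$ is $\epsilon \cdot (u(a_i, \theta_{j^*}) - u(a_j, \theta_{j^*}))$. For $j = j^*$ the hypothesis of slack handles it, and for any $j$ with $u(a_i, \theta_{j^*}) \ge u(a_j, \theta_{j^*})$ the constraint only becomes more slack, so it continues to hold.

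The main obstacle is the case where some $IC_{ij}$ with $j \neq j^*$ is already tight and $u(a_i, \theta_{j^*}) < u(a_j, \theta_{j^*})$, in which the naive perturbation would violate $IC_{ij}$. Here I would use cyclical monotonicity of $u$ (\cref{cyclical monotonicity}): within column $\theta_{j^*}$, the values $u(a_k, \theta_{j^*})$ peak at $k = j^*$ and decrease cyclically, so the set of actions $j$ that could cause a violation is structured. Combined with the fact that $\pi_i$ is optimal (and hence, by a scaling argument, some budget coordinate is already tight), I would modify the perturbation by a small compensating decrease of $\pi_i(a_i, \theta_{k})$ at a state $\theta_k$ for which the budget is saturated, chosen so that the tight $IC_{ij}$ is preserved (the signs of the two terms cancel by cyclical monotonicity) while the net change in the objective is still strictly positive. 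Showing that such a compensating $k$ can always be selected, and that the resulting compound perturbation respects all remaining IC constraints for sufficiently small $\epsilon$, is the technical heart of the argument and is where the cyclical structure of $u$ is essential.
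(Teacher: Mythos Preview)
Your overall architecture (contradiction via a local perturbation, with a compound increase/decrease when the naive increase would violate a tight IC) is exactly what the paper does, but the ``technical heart'' you defer is where the real content lies, and your sketch does not supply the right ingredient.

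Two concrete issues. First, the compensating coordinate is not an arbitrary budget-saturated state chosen to fix one offending tight $IC_{ij}$; several IC constraints may be simultaneously tight, and a decrease tailored to one will in general worsen another. The paper instead always decreases at the \emph{adjacent} state $\theta_{j^*+1}$ and scales the increase at $\theta_{j^*}$ by the ratio $u(a_i,\theta_{j^*+1})/u(a_i,\theta_{j^*})$. Second, cyclical monotonicity alone does not certify that this (or any) compound move preserves every $IC_{ij}$: cyclical monotonicity controls the order of entries within a single column, but what you need is control of \emph{ratios across two adjacent columns}. The paper's computation shows the change in the left side of $IC_{ij}$ is proportional to $u(a_j,\theta_{j^*})u(a_i,\theta_{j^*+1}) - u(a_j,\theta_{j^*+1})u(a_i,\theta_{j^*})$, and this is $\le 0$ for every $j\neq j^*$ precisely by \emph{weak logarithmic supermodularity} (\cref{weak supermodular}), not by cyclical monotonicity. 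Profitability then comes from the same ratio being at least $1$ after a harmless column normalization.

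So the gap is that you invoke the wrong structural hypothesis and leave the choice of compensating coordinate unspecified; once you fix both by moving mass from $\theta_{j^*+1}$ to $\theta_{j^*}$ and appealing to weak log-supermodularity, the argument goes through as in the paper. The cases $j\le i$ and $j=n$ are handled separately (budget binds for $j\le i$ by \cref{lem:useup}, and the naive increase already works at $j=n$ by cyclical monotonicity).
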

\begin{proof}
According to \cref{lem:useup}, let $(0,\cdots,0,\pi_i(\theta_{i}),\cdots,\pi_i(\theta_n))$ be the optimal solution of $LP_i$ and for $j\le i$, $\mathrm{budget}_{ij}$ binds. And for $j=n$, if $IC_{in}$ and $\mathrm{budget}_{in}$ both don't bind. We can simply increase $\pi_i(a_i,\theta_n)$ with a sufficiently small $\epsilon$ and have a higher objective value for $LP_i$ without breaking any constraints in $LP_i$ because of the cyclical monotonicity.

Next consider a fixed $k$ that $i<k<n$, $IC_{ik}$ and $\mathrm{budget}_{ik}$ both don't bind. Consider the following perturbation $(0,\cdots,\pi_i(\theta_k)+\frac{u(a_i,\theta_{k+1})}{u(a_i,\theta_k)}\epsilon,\pi_i(\theta_{k+1})-\epsilon,...,\pi_i(\theta_n))$. We will show when $IC_{ik}$ and $budget_{ik}$ both holds with inequality, there exists $\epsilon$ such that this perturbation is feasible and profitable.

\begin{itemize}
    \item \textbf{profitable} Note that if we add a constant on some column of the receiver's utility matrix $\mu$, the constraints will not change. Therefore, it is without loss of generality to assume that $u(a_i,\theta_k)\le u(a_i,\theta_{k+1})$ for all $k<n$. Otherwise, we can have $u^\prime(a_i,\theta_k)=u(a_i,\theta_k)+c_k$ and let $c_k\ll c_{k+1}$ for all $k<n$ and replace $u$ by $u^\prime$. Then we have $\frac{u(a_i,\theta_{k+1})}{u(a_i,\theta_k)}\ge 1$, $\sum_{j=1}^n\pi_i(a_i,\theta_j)$ will increase. The perturbation is profitable.
    
    \item \textbf{feasible}
    For sufficiently small $\epsilon$, $\mathrm{budget}_{ik}$ and $IC_{ik}$ hold. For any $j\neq k$, $IC_{ij}$ is
    \begin{equation*}
    \sum_k(u(a_j,\theta_k)-u(a_i,\theta_k))\pi(a_i,\theta_k)\le 0 
    \end{equation*}
    The perturbation will lead to a change $\Delta$ on the left side:
    \begin{equation*}
        \begin{aligned}
            \Delta &= -(u(a_j,\theta_{k+1})-u(a_i,\theta_{k+1}))\epsilon+(u(a_j,\theta_{k})-u(a_i,\theta_{k}))\frac{u(a_i,\theta_{k+1})}{u(a_i,\theta_k)}\epsilon\\
            &=\frac{\epsilon}{u(a_i,\theta_k)}(u(a_j,\theta_k)u(a_i,\theta_{k+1})-u(a_j,\theta_{k+1})u(a_i,\theta_k))\le 0
        \end{aligned}
    \end{equation*}
    The above inequality holds because of the weak logarithmic supermodularity.
\end{itemize}   
\end{proof}

Next, we will use induction to prove that the best response under $\mu_i$ is always better than $a_{k^*}$. Consider the base case $i=1$. Let $A=\{k|k>k^*, \mu_1(\theta_k)>0\},B=\{k|k>k^*, \mu_1(\theta_k)=0\}$. Because at least one of $IC_{1j}$ and $budget_{1j}$ holds with equality, for any $k\in A$, $u_1(\theta_k)>0$ implies that under $\pi_1$, $a_k$ is also a best action. Let $u_i$ be the ith row of the receiver's utility matrix. Then we have 
\begin{equation*}
    (u_{k^*}-u_k)\cdot\pi_1=(u_{k^*}-u_1)\cdot\pi_1\le 0.
\end{equation*}
Recall that the best response under $\mu_0$ is $a_{k^*}$,
\begin{equation*}
    (u_{k^*}-u_k)\cdot\mu_0\ge 0.
\end{equation*}
Combining the above two inequalities together, we have
\begin{equation*}
    (u_{k^*}-u_k)\cdot(\mu-\pi_1)=(u_{k^*}-u_k)\cdot\mu_1\ge 0,
\end{equation*}
which means under $\mu_1$, $a_{k^*}$ is a better action than $a_k$ for receiver.

Next, consider $k\in B$. For such a fixed $k$, $\mu_k\cdot\mu_1\le\mu_{k-1}\cdot\mu_1$ because $u$ is cyclically monotone. Therefore, $a_k$ is not the best response under $\mu_1$. Combining the results for $k\in A$ and $k\in B$ together, we know that for any $k>k^{*}$, $a_k$ is not the best response under $\mu_1$.

Now suppose that for some $i$, $a_k$ is not the best response under $\mu_i$. Consider $\mu_{i+1}$. We have $\mu_{i}(\theta_j)=0$ for all $j\le i$ according to \cref{lem:useup} and \cref{eq:budget}. Since $\mu_{i}$ is the input for $LP_{i+1}$, it is sufficient to consider $a_{i+1},...a_n$ and $\theta_{i+1},...,\theta_n$ in $LP_{i+1}$. That is equal to considering submatrix $u[i+1:n;i+1:n]$. We can similarly prove that at least one of $IC_{i+1,j}$ and $\mathrm{budget}_{i+1,j}$ binds. Then the left budget $\mu_{i+1}$ will also lead to a best response that is no worse than $a_{k^*}$ for the sender. 
\end{proof}


\begin{corollary}\label{cor:greedy}
    If a game $(A,\Theta,v,u)$ satisfies that $\InAbs{A}=\InAbs{\Theta}$ and $u$ is cyclically monotone and weakly supermodular in logarithmic form, for any prior $\mu_0$, we have
    \[
    V(\mu_0)-V_{\textsc{ex-post}}(\mu_0)\le V(\mu_0)-V_{\textsc{greedy}}(\mu_0)
    \]
    where $V_{\textsc{greedy}}(\mu_0)$ is the sender's utility under the optimal greedy signaling scheme.
\end{corollary}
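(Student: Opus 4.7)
The corollary is essentially a one-step consequence of \cref{thm:greedy}, so the plan is short. The idea is that the quantity $V_{\textsc{ex-post}}(\mu_0)$ is defined as a maximum over all ex-post IR signaling schemes, and under the stated hypotheses \cref{thm:greedy} exhibits an explicit ex-post IR scheme, namely the greedy one. Since a maximum over a feasible set is at least the value of any particular feasible element, we immediately get $V_{\textsc{ex-post}}(\mu_0)\ge V_{\textsc{greedy}}(\mu_0)$, from which the claimed inequality follows by subtracting both sides from $V(\mu_0)$.

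Concretely, the plan is as follows. First, invoke \cref{thm:greedy}: the hypotheses ``$u$ is cyclically monotone'' and ``$u$ is weakly logarithmically supermodular'' are exactly the assumptions of that theorem, and $|A|=|\Theta|$ is what is needed to run the greedy procedure (\cref{alg:greedy}) in the first place. This gives an ex-post IR signaling scheme whose sender value is $V_{\textsc{greedy}}(\mu_0)$. Second, observe from the definition of $V_{\textsc{ex-post}}$ (e.g., its LP form, or equivalently the characterization in \cref{eq:closure}) that it is the supremum of sender utilities over all ex-post IR schemes, so feeding in the greedy scheme as a feasible point yields $V_{\textsc{ex-post}}(\mu_0)\ge V_{\textsc{greedy}}(\mu_0)$. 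Third, subtract from $V(\mu_0)$ to obtain
\[
V(\mu_0)-V_{\textsc{ex-post}}(\mu_0)\;\le\;V(\mu_0)-V_{\textsc{greedy}}(\mu_0),
\]
which is the desired conclusion.

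There is no real obstacle here; the only point worth being careful about is making sure that the greedy signaling scheme produced by \cref{alg:greedy} actually qualifies as a valid Bayes-plausible signaling scheme (so that it is a feasible point in the ex-post IR optimization problem, not merely ``ex-post IR'' in a vacuous sense). This is built into the algorithm's termination analysis used in \cref{thm:greedy}, where the induction shows that the procedure exhausts the prior $\mu_0$ after finitely many rounds and the induced posteriors average back to $\mu_0$. Once that is noted, the corollary follows immediately, and no further calculation is required.
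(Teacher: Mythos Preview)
Your proposal is correct and matches the paper's implicit reasoning: the corollary is stated without proof precisely because it follows immediately from \cref{thm:greedy} by observing that the greedy scheme is a feasible ex-post IR scheme, hence $V_{\textsc{ex-post}}(\mu_0)\ge V_{\textsc{greedy}}(\mu_0)$. Your care in noting Bayes-plausibility is appropriate but already ensured by the construction in \cref{alg:greedy}.
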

In algorithm analysis, the greedy algorithm by nature has a close link with the optimal solution. We believe that \cref{cor:greedy} will be useful in finding more games where there is no gap between the optimal signaling schemes with and without the ex-post IR constraint.
\begin{algorithm}
    \caption{GS: greedy signaling scheme}
    \label{alg:greedy}

    \KwIn{prior $\mu_0=(\mu_0(\theta_1),...,\mu_0(\theta_n))$}
    $\pi\gets 0$\;
    $\mu\gets \mu_0$\;
    \While{$\mu\neq 0$}{
    solve $\pi(a_i,\cdot)$ by $LP_i$ so that the probability of taking action $a_i$ is maximized\;
    $\mu\gets\mu - \pi(a_i,\cdot)$\;
    }
    \KwOut{a feasible outcome $\pi(a,\theta)$}

    where $LP_i$ is
\begin{equation*}
    \max\sum_k \pi(a_i,\theta_k)
\end{equation*}
subject to 
\begin{equation}\label{greedyLP:IC}
    \sum_k(u(a_j,\theta_k)-u(a_i,\theta_k))\pi(a_i,\theta_k)\le 0 \quad\forall j\in[n]
\end{equation}
\begin{equation}\label{greedyLP:budget}
    \pi(a_i,\theta_k)\le\mu(\theta_k)\quad\forall k\in[n]
\end{equation}
\end{algorithm}

\section{Missing proofs from \texorpdfstring{\MakeLowercase{\cref{section:gap}}}{Section 4}}
\subsection{Missing Proofs from \texorpdfstring{\cref{subsection:binary}}{Section 4.1}}
\textbf{Proof of \cref{lem:quasiconcave}}
\begin{proof}
In the binary state case, we slightly abuse notation by using $\mu$ to represent $\mu(\theta_1)$. Consider any possible linear combination $\mu=\lambda\mu_1+(1-\lambda)\mu_2$ that $\lambda$ is a real number in $[0,1]$. According to the definition of quasiconcave closure, we have
\[
\overline{V}(\mu)\ge\min\{\overline{V}(\mu_1),\overline{V}(\mu_2)\}\ge \min\{\hat{v}(\mu_1),\hat{v}(\mu_2)\}.
\]

On one hand, we can assume $\mu_1\le\mu\le\mu_2$ without loss of generality. Combining the inequalities led by all linear combinations, we have $\overline{V}(\mu)\ge\min\{\max_{\mu_1:\mu_1\le\mu}\hat{v}(\mu_1),\max_{\mu_2:\mu_2>\mu}\hat{v}(\mu_2)\}$
On the other hand, if a closure of $\hat{v}$ satisfies its value of $\mu$ is $\min\{\max_{\mu_1:\mu_1\le\mu}\hat{v}(\mu_1),\max_{\mu_2:\mu_2>\mu}\hat{v}(\mu_2)\}$ for all $\mu\in[0,1]$, it is indeed quasiconcave. Therefore, we have 
\begin{equation}\label{eq:quasiconcave}
   \overline{V}(\mu)=\min\{\max_{\mu_1:\mu_1\le\mu}\hat{v}(\mu_1),\max_{\mu_2:\mu_2>\mu}\hat{v}(\mu_2)\}. 
\end{equation}
Then $\overline{V}$ is a linear function because $\hat{v}$ is a linear piecewise function, which corresponds to a partition $\overline{P}=(\{J_j\}_{j=1}^m,\{\beta_j\}_{j=0}^m)$ of $[0,1]$.

Now let us prove that there exists $j^*\in[m]$ such that $I_{i^*}\subset J_{j^*}$. According to $a^*(S_{i^*})=a_1$ and \cref{sorted}, there exists $i^*$ such that $\hat{v}(\mu)>\hat{v}(\mu^\prime)$ for all $\mu\in(\alpha_{i^*-1},\alpha_{i^*})$ and $\mu^\prime\notin(\alpha_{i^*-1},\alpha_{i^*}).$ Then $\overline{V}(\mu)=\hat{v}(\mu)$ for all $\mu\in(\alpha_{i^*-1},\alpha_{i^*})$, which means $\overline{V}$ is continuous in $I_{i^*}$. Therefore, there exists an interval in the partition $\overline{P}$ contains interval $I_{i^*}$. Next, consider the part of $\overline{V}$ left to interval $J_{j^*}$, i.e. say $\mu<\beta_{j^*-1}$. We have $\max_{\mu_2:\mu_2>\mu}\hat{v}(\mu_2)=\max\{\hat{v}(\beta_{j^*-1}),\hat{v}(\beta_{j^*})\}$. Because $\max\{\hat{v}(\beta_{j^*-1}),\hat{v}(\beta_{j^*})\}>\max\limits_{\mu_1:\mu_1\le\mu}\hat{v}(\mu_1)$ and \cref{eq:quasiconcave}, we have $\overline{V}(\mu)=\max\limits_{\mu_1:\mu_1\le\mu}\hat{v}(\mu_1)$ for all $\mu<\beta_{j^*-1}$. Now it is not hard to see from the above equation that $\overline{V}$ is increasing on $[0,\beta_{j^*-1})$. So the quasiconcave closure $\overline{V}$ is increasing w.r.t intervals $J_j$ for all $j<j^*$. As for the part right to the $J_{j^*}$, we can similarly have $\overline{V}=\max\limits_{\mu_2:\mu_2\ge\mu}\hat{v}(\mu_2)$ for all $\mu>\beta_{j^*}$. Then $\overline{V}$ is decreasing on $(\beta_{j^*},1]$. We have the quasiconcave closure $\overline{V}$ is decreasing w.r.t intervals $J_j$ for all $j>j^*$. 
\end{proof}

The following lemma characterized the geometric property of the concave closure $V$ of the sender's expected utility function $\hat{v}$. Given the partition $\hat{P}=(\{I_i\}_{i=1}^n,\{\alpha_i\}_{i=0}^n)$, $\hat{v}$ can be divided into $n$ segments $S_1,\cdots,S_n$ such that each segment $S_i=\{(x,\hat{v}(x))\mid x\in I_i\}$ is a set of points for all $i\in[n]$. Denote the left endpoint and right endpoint of segment $S_i$ be $L_i$ and $R_i$ respectively. Let the corresponding best response when $\mu\in I_i$ be $a^*(S_i)$. In the following analysis, without loss of generality, we assume that there exists a unique $i^*$ such that $a^*(S_{i^*})=a_1$. 
\begin{lemma}\label{lem:concave}
The concave closure V is a connecting segment 
\[L_{i_1}L_{i_2}...L_{i_{k-1}}L_{i^*}R_{i^*}R_{i_k}...R_{i_t}\] where $i_1<i_2,..<i_{k-1}<i^*<i_k<...<i_t$. 
\end{lemma}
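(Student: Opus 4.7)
The plan is to describe $V$ as the upper boundary of the convex hull of the graph of the piecewise linear upper semi-continuous function $\hat{v}$. Since $\hat{v}$ has $n$ affine pieces $S_i$ (one per interval $I_i$) and is upper semi-continuous at the thresholds by the sender-favorable tie-breaking convention, $V$ is piecewise linear, and the only candidates for its vertices are the $2n$ corner points $\{L_i, R_i : 1 \le i \le n\}$. The three tasks are then to locate the peak segment on $V$, to eliminate interior vertices, and to fix the $L/R$ labeling of the remaining vertices.

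For the first two tasks I will use the affine function $\ell(\mu) := \sum_\theta v(a_1,\theta)\,\mu(\theta)$. By Assumption~\ref{sorted}, $\ell \ge \hat{v}$ pointwise on $[0,1]$, with equality on $I_{i^*}$, where $a^*(\mu)=a_1$. Since $\ell$ is affine (hence concave) and majorizes $\hat{v}$, minimality of $V$ yields $V \le \ell$; together with $V \ge \hat{v} = \ell$ on $I_{i^*}$, this forces $V = \ell = \hat{v}$ on $I_{i^*}$, placing the entire segment $L_{i^*}R_{i^*}$ on $V$. The same minimality argument rules out vertices of $V$ in the open interior of any other piece $I_i$: on $I_i$, $V - \hat{v}$ is concave and nonnegative (since $\hat{v}$ is affine there), and a concave nonnegative function vanishing at an interior point of its domain vanishes identically, so $V$ would coincide with $\hat{v}$ on the whole $I_i$ and both $L_i, R_i$ would be vertices. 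Hence every vertex of $V$ lies in $\{L_i, R_i : 1 \le i \le n\}$, and their strict $x$-ordering (forced by concavity of $V$) translates into a strict ordering of the interval indices on each side of $i^*$ once the $L/R$ labeling is established.

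The main obstacle is to justify that strictly to the left of $L_{i^*}$ only left endpoints $L_{i_j}$ appear as vertices of $V$, and symmetrically to the right only right endpoints $R_{i_j}$. I plan to handle this by combining the tie-breaking convention with concavity. At each threshold $\alpha_j$, the graph of $\hat{v}$ contains the single point $(\alpha_j, \max\{y_{R_j}, y_{L_{j+1}}\})$, so any vertex of $V$ at $x=\alpha_j$ must attain this upper value. Let $s^*$ denote the slope of $S_{i^*}$; by concavity, every segment of $V$ on $[0,\alpha_{i^*-1}]$ has slope $\ge s^*$ and every segment on $[\alpha_{i^*},1]$ has slope $\le s^*$. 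For a threshold $\alpha_j$ with $j<i^*-1$ where $V$ does have a vertex, I will show that the candidate pair $(R_j, L_{j+1})$ resolves in favor of $L_{j+1}$: otherwise the line from $(\alpha_j, y_{R_j})$ to $L_{i^*}$ would have slope $\ge s^*$ while dropping below $\hat{v}$ on an intervening interval, contradicting $V \ge \hat{v}$. The symmetric argument on $[\alpha_{i^*},1]$ resolves each such pair in favor of $R_j$. This slope/value bookkeeping is the most delicate step of the proof; it may be shortened by first invoking Lemma~\ref{lem:quasiconcave}, whose monotonicity structure for the quasiconcave closure $\overline{V}$ (increasing w.r.t.\ intervals on the left of the peak, decreasing on the right) transfers to $V$ via $V \ge \overline{V}$ and can be used to certify the $L/R$ labeling directly.
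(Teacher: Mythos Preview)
Your overall architecture is the same as the paper's: identify the peak segment $S_{i^*}$, argue that all vertices of $V$ sit at thresholds $\alpha_j$, and then resolve the $L/R$ labeling on each side of the peak. The first two tasks are fine and essentially match the paper's reasoning (your use of the affine majorant $\ell$ is a clean way to place $S_{i^*}$ on $V$).

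The gap is in the third task. Your proposed argument --- ``otherwise the line from $(\alpha_j,y_{R_j})$ to $L_{i^*}$ would have slope $\ge s^*$ while dropping below $\hat v$ on an intervening interval, contradicting $V\ge\hat v$'' --- does not go through. The slope bound $\ge s^*$ is correct (it follows from $\hat v\le\ell$), but there is no reason the chord $R_jL_{i^*}$ must fall below $\hat v$ anywhere on $(\alpha_j,\alpha_{i^*-1})$; immediately to the right of $\alpha_j$ it is \emph{above} $\hat v$ (since $y_{R_j}>y_{L_{j+1}}$ in the case you are trying to exclude), and nothing you have said forces a crossing. Even if it did cross, this would only tell you $R_j$ is not \emph{adjacent} to $L_{i^*}$, not that $R_j$ fails to be a vertex. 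Your fallback via \cref{lem:quasiconcave} is also not a shortcut: the monotonicity of $\overline V$ with respect to intervals does not transfer to the $L/R$ labeling of $V$ through the bare inequality $V\ge\overline V$.

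What is missing is an explicit use of \cref{sorted} at this step, and the paper uses it via a direct slope comparison: for any $j<i^*$ with $a_x:=a^*(I_j)$, one computes
\[
\mathrm{slope}(R_jL_{i^*})-\mathrm{slope}(L_jL_{i^*})
=\frac{\alpha_j-\alpha_{j-1}}{(\alpha_{i^*-1}-\alpha_j)(\alpha_{i^*-1}-\alpha_{j-1})}\Bigl[(v(a_1,\theta_1)-v(a_x,\theta_1))\alpha_{i^*-1}+(v(a_1,\theta_2)-v(a_x,\theta_2))(1-\alpha_{i^*-1})\Bigr]\ge 0,
\]
the bracket being nonnegative precisely because of \cref{sorted}. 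This says the chord $L_jL_{i^*}$ lies weakly above $R_j$; since $L_j$ and $L_{i^*}$ are both points on (or below) the graph of $\hat v$, the concave closure satisfies $V(\alpha_j)\ge$ (chord $L_jL_{i^*}$ at $\alpha_j$) $\ge y_{R_j}$, with equality only when $R_j$ lies on a linear piece of $V$. Either way $R_j$ is not a genuine vertex. This rules out all $R_j$ with $j<i^*$ in one stroke (no induction is really needed), and the right side is symmetric. Your proposal should replace the ``drops below $\hat v$'' heuristic with this chord comparison.
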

\begin{proof}
In the binary state case, we slightly abuse notation by using $\mu$ to represent $\mu(\theta_1)$. Consider any possible linear combination $\mu=\lambda\mu_1+(1-\lambda)\mu_2$ that $\lambda$ is a real number in $[0,1]$. According to the definition of concave closure, we have
\[
V(\mu)=\max_{\lambda,\mu_1,\mu_2\in[0,1]:\mu=\lambda\mu_1+(1-\lambda)\mu_2}\lambda\hat{v}(\mu_1)+(1-\lambda)\hat{v}(\mu_2).
\]
So $V(\mu)=\hat{v}(\mu)$ for all $\mu\in[\alpha_{i^*-1},\alpha_{i^*}]$, which means $S_{i^*}$ is one part of the concave closure $V$. According to the definition of concave closure, $V$ is a connecting segment whose vertexes are in the set of $\{L_i\}_{i=1}^n\cup\{R_i\}_{i=1}^n$. Next, we use the induction method to prove that the vertexes left to the $L_{i^*}$ must be some left endpoints of the segments of the sender's expected utility function $\hat{v}$. Say there are $k-1$ vertexes left to $L_{i^*}$ and we start our induction from the left adjacent vertex of $L_{i^*}$, which is chosen from some segment $S_{i_{k-1}}$. Now we only need to prove that this vertex is $L_{i_{k-1}}$ rather than $R_{i_{k-1}}$. This is equal to prove that the slope of $L_{i_{k-1}}L_{i^*}$ is smaller than the slope of $R_{i_{k-1}}L_{i^*}$.

The ordinates of $L_{i_{k-1}},R_{i_{k-1}},L_{i^*}$ are $(\alpha_{i_{k-1}-1},\hat{v}(\alpha_{i_{k-1}-1}))$, $(\alpha_{i_{k-1}},\hat{v}(\alpha_{i_{k-1}}))$, $(\alpha_{i^*-1},\hat{v}(\alpha_{i^*-1}))$ respectively. For ease of notation, denote $a^*(S_{i_{k-1}})$ as $a_x$ and recall that $a^*(S_{i^*})=a_1$. We have $a_1\ge a_x$.

The slope of $L_{i_{k-1}}L_{i^*}$ is
\[
k_1=\frac{\hat{v}(\alpha_{i^*-1})-\hat{v}(\alpha_{i_{k-1}-1})}{\alpha_{i^*-1}-\alpha_{i_{k-1}-1}}.
\]
The slope of $R_{i_{k-1}}L_{i^*}$ is 
\[
k_2=\frac{\hat{v}(\alpha_{i^*-1})-\hat{v}(\alpha_{i_{k-1}})}{\alpha_{i^*-1}-\alpha_{i_{k-1}}}.
\]
We have
\[
\begin{aligned}
 k_2-k_1&=\frac{\hat{v}(\alpha_{i^*-1})-\hat{v}(\alpha_{i_{k-1}})}{\alpha_{i^*-1}-\alpha_{i_{k-1}}}-\frac{\hat{v}(\alpha_{i^*-1})-\hat{v}(\alpha_{i_{k-1}-1})}{\alpha_{i^*-1}-\alpha_{i_{k-1}-1}}\\
 &=\frac{\alpha_{i_{k-1}}-\alpha_{i_{k-1}-1}}{\alpha_{i^*-1}-\alpha_{i_{k-1}-1}}\InBrackets{\frac{\hat{v}(\alpha_{i^*-1})-\hat{v}(\alpha_{i_{k-1}})}{\alpha_{i^*-1}-\alpha_{i_{k-1}}}-\frac{\hat{v}(\alpha_{i_{k-1}})-\hat{v}(\alpha_{i_{k-1}-1})}{\alpha_{i_{k-1}}-\alpha_{i_{k-1}-1}}}\\
 &=\frac{\alpha_{i_{k-1}}-\alpha_{i_{k-1}-1}}{\alpha_{i^*-1}-\alpha_{i_{k-1}-1}}\Big[\frac{v(a_1,\theta_1)\alpha_{i^*-1}+v(a_1,\theta_2)(1-\alpha_{i^*-1})-v(a_x,\theta_1)\alpha_{i_{k-1}}-v(a_x,\theta_2)(1-\alpha_{i_{k-1}})}{\alpha_{i^*-1}-\alpha_{i_{k-1}}}\\
 &-(v(a_x,\theta_1)-v(a_x,\theta_2))\Big]\\
 &=\frac{\alpha_{i_{k-1}}-\alpha_{i_{k-1}-1}}{\alpha_{i^*-1}-\alpha_{i_{k-1}-1}}\cdot\frac{(v(a_1,\theta_1)-v(a_x,\theta_1))\alpha_{i^*-1}+(v(a_1,\theta_2)-v(a_x,\theta_2))(1-\alpha_{i^*-1})}{\alpha_{i^*-1}-\alpha_{i_{k-1}}}\ge 0\\
\end{aligned}
\]
where the inequality holds because $\alpha_{i^*-1}>\alpha_{i_{k-1}}>\alpha_{i_{k-1}-1}$ and \cref{sorted}.

Then consider the left adjacent vertex of $L_{k-1}$, which is chosen from some segment $S_{i_{k-1}-1}$. We can similarly prove that this vertex is $L_{i_{k-1}-1}$ rather than $R_{i_{k-1}-1}$. By induction, we know that the part of concave closure left to the segment $S_{i^*}$ is a connecting segment $L_{i_1}L_{i_2}\cdots L_{i_{k-1}}$. As for the part right to the segment $S_{i^*}$, we can similarly use the induction method to prove that the vertexes right to $R_{i^*}$ are the right endpoints of some segments of the sender's expected utility function $\hat{v}$. Combine two directions together, we have the concave closure $V$ is a connecting segment $L_{i_1}L_{i_2}...L_{i_{k-1}}L_{i^*}R_{i^*}R_{i_k}...R_{i_t}$ where $i_1<i_2,..<i_{k-1}<i^*<i_k<...<i_t$. 
\end{proof}

Now we are ready to prove \cref{thm:n*2}.

\noindent\textbf{Proof of \cref{thm:n*2}}
\begin{proof}
In this proof, we assume $a_1>a_2\ge\cdots\ge a_n$. In other words, there exists a unique $i^*$ such that $a^*(S_{i^*})=a_1$. With this assumption, we know $J_{j^*}=I_{i^*}$, $\beta_{j^*-1}=\alpha_{i^*-1}$ and $\beta_{j^*}=\alpha_{i^*}$. This is without loss of generality because when there are $r>1$ indexes $i^*_1<\cdots i^*_r$ corresponding to $a_1$, we can replace $L_{i^*}R_{i^*}$ by $L_{i^*_1}R_{i^*_r}$, $\beta_{j^*-1}$ by $\alpha_{i^*_1}-1$ and $\beta_{j^*}$ by $\alpha_{i^*_r}$. According to \cref{lem:concave}, consider one segment $L_{i_j}L_{i_{j+1}}$ of the concave closure on the left of $S_{i^*}$ for some $j\in\{1,2,\cdots,k-1\}$. Because $a_1$ strictly better than other actions and the concavity, we know $a^*(S_{i_{j+1}})>a^*(S_{i_j})$. If there exists an index $i^\prime$ that $i_j<i^\prime<i_{j+1}$ and $a^*(S_{i_{j+1}})\ge a^*(S_{i^\prime})>a^*(S_{i_j})$, for any posterior belief $\mu$ that $a^*(\mu)=a^*(S_{i^\prime})$, the sender will send a signal that the receiver's best response is a worse action $a^*(S_{i_j})$, which is not ex-post IR. In other words, for any prior $\mu_0$ that $\mu_0(\theta_1)<\alpha_{i^*-1}$, the optimal signal scheme is ex-post IR if and only if for any $j\in\{1,2,\cdots,k-1\}$, $L_{i_{j+1}}$ is the left endpoint of its right closet segment whose corresponding action is better than $a^*(S_{i_j})$ for the sender. Similarly, for any prior $\mu_0$ that $\mu_0(\theta_1)>\alpha_{i^*}$, the optimal signal scheme is ex-post IR if and only if for any $j\in\{k,k+1,\cdots,m\}$, $R_{i_j}$ is the right endpoint of its right closet segment whose corresponding action is worse than $a^*(S_{i_{j-1}})$ for the sender.

In other words, when the optimal signaling scheme is ex-post IR, $L_{i_1}L_{i_2}\cdots L_{i_{k-1}}$ can be formed in the following way. We start from $L_{i_1}=L_1$ and find the left endpoints $L_{i_2}$ that is the closest left endpoint to $L_{i_1}$ which satisfies $a^*(S_{i_2})>a^*(S_{i_1})$. Then we start from $L_{i_2}$ and find $L_{i_3}$ in a similar way and so on. Now we prove that the $L_{i_1}, L_{i_2},\cdots, L_{i_{k-1}}$ formed in this way are exactly all the left endpoints of the segments of the quasiconcave closure $\overline{V}$ which are on the left of $S_{i^*}$. We use the induction method. For the base case, $L_{i_1}=L_1$ is certainly the first left endpoint of $\overline{V}$. Now consider the left endpoint $L_{i_2}=(\alpha_{i_2-1},\hat{v}(\alpha_{i_2-1}))$. According to \cref{lem:quasiconcave}, for the left part, i.e. $\mu<\alpha_{i^*-1}$, we have
\[
\overline{V}(\mu)=\max\limits_{\mu_1:\mu_1\le\mu}\hat{v}(\mu_1).
\]

Note $L_{i_2}$ is the closest left endpoints that $a^*(S_{i_2})> a^*(S_{i_1})$. We have for any $\mu\in(\alpha_{i_1},\alpha_{i_2-1})$, $a^*(\mu)<a^*(S_{i_1})$. Then $\hat{v}(\mu)\le\max\{\hat{v}(0),\hat{v}(\alpha_{i_1})\}$. So $\overline{V}(\mu)=\max\{\hat{v}(0),\hat{v}(\alpha_{i_1})\}$ for all $\mu\in(\alpha_{i_1},\alpha_{i_2-1})$ and $\overline{V}(\alpha_{i_2-1})=\hat{v}(\alpha_{i_2-1})>\max\{\hat{v}(0),\hat{v}(\alpha_{i_1})\}$. Therefore, no matter $\overline{V}(\mu)=\hat{v}(0)$ or $\overline{V}(\mu)=\hat{v}(\alpha_{i_1})$, we always have $\overline{V}$ is continuous on $[0,\alpha_{i_2-1})$ and $\mu=\alpha_{i_2-1}$ is a discontinuity point. So $L_{i_2}$ is the second left endpoint of the segments of the quasiconcave closure $\overline{V}$. Because $a^*(S_{i_2})>a^*(\mu)$ for all $\mu\in[0,\alpha_{i_2-1})$, we have $\overline{V(\mu})=\max\limits_{\mu_1:\alpha_{i_2-1}\le\mu_1\le\mu}\hat{v}(\mu_1)$ for all $\mu\in[\alpha_{i_2-1},\alpha_{i^*-1})$, we can consider $L_{i_2}$ as the new start point and use the same argument to prove that $L_{i_3}$ is the next left endpoint of the segments of the quasiconcave closure $\overline{V}$. By induction, we know $L_{i_1}, L_{i_2},\cdots, L_{i_{k-1}}$ have the desired property. As for the right part of $S_{i^*}$, we can similarly prove that $R_{i_k},\cdots, R_{i_m}$ are exactly the right endpoints of the segments of $\overline{V}$ that are on the right of $S_{i^*}$.

Now conclude all the arguments together, on one side we know that if the optimal signaling scheme is ex-post IR for any prior, the left endpoints of the concave closure $V$ on the left of $L_{i^*}$ match the left endpoints of the quasiconcave closure $\overline{V}$ on the left of $L_{i^*}$. The right part is similar. According to the definition of concave closure $V$ and smoothed quasiconcave closure, we know that the smoothed quasiconcave closure is concave and equals to $V$. On the other side, if the smoothed quasiconcave closure is concave, according to $\cref{lem:concave}$, it is easy to see that the smoothed quasiconcave closure is actually the concave closure of the sender's expected utility $\hat{v}$.
\end{proof}

\subsection{Missing Proofs from \texorpdfstring{\cref{subsection:bilateral}}{Section 4.2}}
\textbf{Proof of \cref{lem:upper}}
\begin{proof}
Let's use induction on the number of the support $\InAbs{S}$. First, consider $\InAbs{S}=1$ and $S=\{k\}$ for some $k$. Let $\mu_S=\delta(\theta_k)$ where $\delta(\theta_k)$ means a measure put all mass on $\theta_k$ (Dirac delta). Then the receiver's best response is $a_k$ because $u(a_k,\theta_k)\ge u(a_i,\theta_k)$ for all $i\in[n]$.

Next, assume the lemma holds for $\InAbs{S}=m$. Consider the case that $\InAbs{S}=m+1$. Let $S=\{i_1,i_2,\cdots,i_{m+1}\}$ and $S^\prime=S\setminus\{i_{1}\}$. By inductive assumption, there exists a posterior belief $\mu_{S^\prime}$ that $\sum_{i\in S^\prime}\mu_{S^\prime}(\theta_i)=1$ and the receiver is indifferent from $a_{i_2},\cdots, a_{i_m}, a_{i_{m+1}}$. In other words, for all $i_j\in S^\prime$, there exists a constant c that $\sum_{k=1}^n u(a_{i_j},\theta_k)\mu_{S^\prime}(\theta_k)=c.$ 

Consider a posterior belief $\mu_t=t\delta(\theta_{i_1})+(1-t)\mu_{S^\prime}$. We have for all $i_j\in S^\prime$, $\sum_{k=1}^n u(a_{i_j,\theta_k})\mu_t(\theta_k)=t\cdot u(a_{i_j},\theta_{i_1})+(1-t)c=(1-t)c.$
because $i_j>i_1$ and $u$ is upper-triangular. Therefore, for any $t\in[0,1] $, under posterior belief $\mu_t$, the receiver is still indifferent from $a_{i_2},\cdots, a_{i_m}, a_{i_m+1}$. And the utility gap of taking action $a_{i_1}$ and $a_{i_j}$ is 
\[
g(t)=\sum_{k=1}^n (u(a_{i_1,\theta_k})-u(a_{i_j,\theta_k}))\mu_t(\theta_k)=t\cdot u(a_{i_1},\theta_{i_1})+(1-t)(\sum_{l=2}^{m+1}(u(a_{i_1},\theta_{i_l})-(a_{i_j},\theta_{i_l}))).
\]
We have $g(1)=u(a_{i_1},\theta_{i_1})>0$ and  $g(0)=-\sum_{l=2}^{m+1}(u(a_{i_1},\theta_{i_l})-(a_{i_j},\theta_{i_l}))<0$. Because g(t) is a linear function w.r.t $t$, there exist some $t_0$ that $g(t_0)=0$, which means under posterior $\mu_{S}=\mu_{t_0}$, the receiver is indifferent from action $a_{i_1},a_{i_2},\cdots,a_{i_{m+1}}$. At last, notice that $u$ is upper-triangular and the non-zero part of each column is increasing, if there is a belief $\mu$ doesn't have support on some $\theta_k$, $a_k$ is dominated by $a_{k+1}$ under $\mu$. So action $a_k$ must not be the best response. Now $\mu_{S}$ only have support on $\theta_{i_1},\theta_{i_2},\cdots,\theta_{i_{m+1}}$, $a_k$ can be the best response if $k\notin S$. Therefore under belief $\mu_S$, the receiver's best response is $a_k$ that $k\in S$. This completes the induction.
\end{proof}

\noindent\textbf{Proof of \cref{extendBBM}}\label{proof:extendBBM}
\begin{proof}
    According to the original Bayesian persuasion paper by \cite{kamenica2011bayesian}, a signaling scheme $\phi$ is equal to a Bayes-plausible distribution $\tau$ of posteriors. The Bayes-plausibility means that the prior $\mu_0=\sum_{\mu\in\mathrm{supp}(\tau)}\tau(\mu)\mu$. In other words, the prior is a linear combination of the posteriors. In order to prove that the optimal signaling scheme is ex-post IR for all trade-like games, we will construct a linear combination of posteriors for the prior step-by-step and prove that such a signaling scheme is ex-post IR and optimal.

    At step 0, let $S_0=\{1,2,\cdots,n\}$. According to \cref{lem:upper}, there is a posterior $\mu_{S_0}$ that the receiver's best response under it is $a_k$ for all $k\in S_0$. We let the probability of $\mu_{S_0}$ be $p_0=\min_{k=1}^n\frac{\mu_0(\theta_k)}{\mu_{S_0}(\theta_k)}$. In other words, we subtract a certain proportion of $\mu_{S_0}$ from $\mu_0$ until some supports of $\mu_0$ become zero. Let $\mu_1=\mu_0-p_0\mu_{S_0}$. Now suppose $\mu_t$ is computed after step $t-1$ for some $t\ge1$. Consider the step $t$, let $S_t=\{k: \theta_k\in\mathrm{supp}(\mu_t)\}$. We still have a posterior $\mu_{S_t}$ that the receiver's best response under it is $a_k$ for all $k\in S_t$. Let the probability of $\mu_{S_t}$ be $p_t=\min_{k=1}^n\frac{\mu_t(\theta_k)}{\mu_{S_t}(\theta_k)}$. And we have $\mu_{t+1}=\mu_0-p_t\mu_{S_t}$. The whole process stops if $\mu_{t+1}$ is a zero vector. Note that the process will stop in at most $n$ steps because, after each step, at least one component of the vector will become zero. Assume the process stops at some step $m$, we have 
    \begin{equation}\label{eq:divide}
        \mu_0=\sum_{t=0}^mp_t\mu_{S_t}. 
    \end{equation}
    Recall $a^*(\mu_0)=a_{k^*}$. Next, we will prove that $k^*\in S_t$ for all $t\in[m]$. Consider some index $i\in S_m$. For any $t\in[m]$, we have $i\in S_m$. Therefore, for any $t\in[m]$, $a_i$ is one best response under posterior $\mu_{S_t}$. We have $a_i$ is one best response under $\mu_0$ according to \cref{eq:divide}. This means $k^*\in S_m$ and hence $k^*\in S_t$ for all $t\in[m]$. In other words, $a_{k^*}$ is one best action under $\mu_{S_t}$ for all $t\in[m]$

    So far we have constructed a linear combination of the prior, i.e. we have constructed a signaling scheme. Next, we will prove such a signaling scheme is ex-post IR and optimal. Because we have proved that $a_{k^*}$ is one best action for all the posteriors in this signaling scheme and the receiver breaks ties in favor of the sender, the receiver will always choose an action that is better than the best response when no communication. In other words, this signaling scheme is ex-post IR. Now the last thing is to prove that this signaling scheme is optimal. On one side, because the sum of the non-zero part corresponding columns of the sender utility and the receiver utility is a constant $c_k$ for all $k\in[n]$, we have the maximum total utility of the sender and receiver is $\sum_{k=1}^nc_k\mu_0(\theta_k)$. In our signaling scheme, for any $k\in n$, as long as $k\in S_t$ for some $t\in[m]$, we have the best response under $\mu_{S_t}$ is $a_{\min S_t}<a_k$. Therefore, the trade always happens in our signaling scheme and hence the total utility is the optimal utility $\sum_{k=1}^nc_k\mu_0(\theta_k)$. On the other side, the receiver's utility in our signaling scheme equals the utility when there is no communication because $k^*\in S_t$ for all $t\in[m]$. It is well known that such a utility is the worst utility the receiver can obtain in any signaling scheme. Combining two results together, our signaling scheme is optimal for the sender.  
\end{proof}

\section{Credible persuasion}\label{sec:credible}
In this section, we give a brief introduction of credible persuasion. The interaction between the sender and the receiver goes on as follows:
\begin{enumerate}    
    \item sender chooses a signaling scheme $\phi$. Different from the Bayesian persuasion, the receiver cannot observe the signaling scheme;
    \item sender observes a realized state of nature $\theta\sim\mu_0$;
    \item sender draws a signal $s\in S$ according to the distribution $\phi_{\theta}$ and sends it to the receiver;
    \item receiver observes the signal $s$. Different from the Bayesian persuasion, the receiver can also observe the marginal distribution of signals induced by $\phi$. 
    
    \item The receiver chooses an action based on both the signal $s$ and the distribution of signals.
\end{enumerate}

Their notion of credibility is captured by the restriction that the receiver can only see the distribution of the signal induced by a signaling scheme $\phi$ instead of the whole signaling scheme. The receiver can not distinguish the signaling schemes that induce the same distribution of signals as $\phi$. Denote the set of such signaling schemes $D(\phi)$. They require when the sender chooses a signaling scheme $\phi$, it has no profitable deviation in $D(\phi)$.

\section{Missing Proofs from \texorpdfstring{\MakeLowercase{\cref{section:compare}}}{Section 5}}
\textbf{Proof of \cref{thm:compare}}
\begin{proof}
Let us first prove that ex-post IR Bayesian persuasion and credible persuasion are indeed between the Bayesian persuasion and cheap talk. Assume that in cheap talk, the signal/message set is $M$. The sender's strategy set is $\Sigma^S:\Theta\rightarrow\Delta(M)$ and the receiver's strategy set is $\Sigma^R:M\rightarrow \Delta(A)$. 

By definition $V_{\textsc{BP}}\succeq V_{\textsc{credible}}$. To see why $V_{\textsc{credible}}\succeq V_{\textsc{cheap}}$, consider any nash equilibrium of cheap talk $(\Sigma^{S*},\Sigma^{R*})$. It induces a distribution $p^{*}$ of message: $p^{*}(m)=\sum_{\theta}u_0(\theta)\Sigma^{S*}(m|\theta)$. The sender can partially commit $p^{*}$ and the credibility guarantees that the optimal sender utility is better than the sender utility of cheap talk equilibrium. 

By definition $V_{\textsc{BP}}\succeq V_{\textsc{cheap}}$. Now suppose actions are ordered, we will prove $V_{\textsc{expost}}\succeq V_{\textsc{cheap}}$. If for every message $m$, the receiver responds to a (mixed) action $a<a^*(\mu_0)$, then the sender's payoff is worse than no communication, let alone ex-post IR Bayesian persuasion. If there exists message $m$, the receiver response is a (mixed) action $a\ge a^*(\mu_0)$, then any cheap talk equilibrium is ex-post IR. This implies any payoff achieved by cheap talk can be achieved by ex-post IR Bayesian persuasion.

To prove the rest part of \cref{thm:compare}, we need the following lemmas. Through these lemmas, we can construct examples showing that $V_{\textsc{credible}}$ and $V_{\textsc{expost}}$ have no dominant relationship.

\begin{lemma}(\cite{lin2022credible})
    If $v(\theta,a)=f(\theta)+g(a)$ (additively separable), for any prior $\mu_0$, $V_{\textsc{BP}}(\mu_0)=V_{\textsc{credible}}(\mu_0)$.
\end{lemma}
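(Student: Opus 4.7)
The plan is to show both inequalities: $V_{\textsc{credible}}(\mu_0) \le V_{\textsc{BP}}(\mu_0)$ holds trivially from part 1 of \cref{thm:compare} (credibility is a strict restriction of the set of schemes the sender may commit to, so its optimum cannot exceed that of unrestricted Bayesian persuasion). For the reverse direction, I would take an arbitrary optimal Bayesian persuasion scheme $\phi^{*}$ and argue that under additive separability it automatically satisfies the credibility requirement, i.e.\ $\phi^{*}\in D(\phi^{*})$ admits no profitable within-$D(\phi^{*})$ deviation. This would yield $V_{\textsc{credible}}(\mu_0) \ge V_{\textsc{BP}}(\mu_0)$ and hence the claim.

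The core computation is the following. Write the sender's expected payoff under any scheme $\phi$ as
\[
\mathbb{E}_{\phi}[v(a^{*}(\mu_s),\theta)] \;=\; \sum_{\theta}\mu_0(\theta) f(\theta) \;+\; \sum_{s}\phi(s)\, g\big(a^{*}(\mu_s)\big),
\]
where the first summand is a constant depending only on the prior. Now fix any $\phi'\in D(\phi^{*})$; by definition of $D(\phi^{*})$, the marginal $\phi'(s)$ equals $\phi^{*}(s)$ for every $s$. Under credibility, the receiver cannot distinguish $\phi'$ from $\phi^{*}$, so she best responds to the posteriors induced by $\phi^{*}$; i.e.\ the action map $s\mapsto a^{*}(\mu_s)$ is the same one used under $\phi^{*}$. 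Hence the second summand evaluated at $\phi'$ equals the second summand evaluated at $\phi^{*}$, and the first summand depends only on $\mu_0$. Therefore the sender's expected utility is identical under $\phi^{*}$ and under any deviation $\phi'\in D(\phi^{*})$, so $\phi^{*}$ has no profitable deviation and is credible.

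The main thing to be careful about is pinning down exactly what the receiver's induced action map is when the sender deviates within $D(\phi^{*})$: the whole point of credibility is that the receiver's behavior is fixed by what she believes the scheme is, not by what the sender actually does, and this is precisely what kills any gain or loss from a deviation once the action-dependent part of $v$ is averaged against the marginal $\phi^{*}(s)$. Once that is clearly laid out, the additive separability reduces the deviation payoff difference to zero term-by-term, and no further inequality manipulation is needed. The step I would write most carefully is the justification that $a^{*}(\mu_s)$ in the deviation analysis is the posterior best response induced by $\phi^{*}$ rather than by $\phi'$, since this is the only conceptual subtlety and is exactly where additive separability is used.
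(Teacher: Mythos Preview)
Your argument is correct. The paper itself does not supply a proof of this lemma; it simply cites \cite{lin2022credible} and uses the result as a black box inside the proof of \cref{thm:compare}. Your decomposition is exactly the standard one: additive separability splits the sender's expected payoff into a prior-dependent constant $\mathbb{E}_{\mu_0}[f(\theta)]$ and an action-dependent term $\sum_s \phi(s)\,g(\sigma(s))$ that depends on the sender's scheme only through the signal marginal. Since any deviation within $D(\phi^{*})$ fixes both the marginal and the receiver's strategy $\sigma$, neither summand can change, so the optimal Bayesian-persuasion scheme is automatically credible and $V_{\textsc{credible}}(\mu_0)\ge V_{\textsc{BP}}(\mu_0)$. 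The one point you flag as needing care---that the receiver's action map is pinned down by her belief in $\phi^{*}$ rather than by the actual deviation $\phi'$---is indeed the only conceptual subtlety, and you handle it correctly.
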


\begin{lemma}(\cite{lin2022credible})
When $v$ is supermodular and $u$ is submodular, for any prior $\mu_0$, $V_{\textsc{credible}}(\mu_0)$ equals the sender's utility when there is no communication between the sender and the receiver.
\end{lemma}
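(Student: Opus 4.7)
The plan is to use the characterization of credibility from \cite{lin2022credible}: a committed scheme $\phi$ is credible if and only if $\phi$ itself maximizes the sender's expected utility over all joint distributions $\pi'\in\Delta(\Theta\times S)$ whose $\Theta$-marginal equals $\mu_0$ and whose $S$-marginal equals $\phi(\cdot)$, where the receiver's action at each signal $s$ is held fixed at $a_s:=a^*(\mu_s^\phi)$ (since the receiver only sees the marginal over signals). Thus the sender's deviation problem is a classical optimal transport problem with ``reward'' $v(a_s,\theta)$, and credibility forces $\phi$ to be an optimal transport plan.

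The first step is to invoke the rearrangement inequality: because $v$ is supermodular in the ordered pair $(a_s,\theta)$, every optimal transport plan is \emph{monotone}, i.e.\ if $(s,\theta),(s',\theta')\in\operatorname{supp}(\pi')$ with $a_s>a_{s'}$ then $\theta\ge\theta'$. I would derive this by the standard swap argument: swapping mass across a ``crossing'' pair strictly increases the supermodular objective, contradicting optimality. Applied to the credible $\phi$, this monotonicity of supports implies that posteriors are FOSD-ordered by the induced actions: whenever $a_s>a_{s'}$, the posterior $\mu_s$ first-order stochastically dominates $\mu_{s'}$.

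The second step uses submodularity of $u$ together with monotone comparative statics. Submodularity of $u$ means $u(a,\theta)-u(a',\theta)$ is decreasing in $\theta$ for $a>a'$, so higher posteriors (in the FOSD order) induce weakly lower best responses: $\mu_s$ FOSD $\mu_{s'}$ implies $a^*(\mu_s)\le a^*(\mu_{s'})$ (with ties broken in favor of the sender, i.e.\ toward the smaller-index/higher action). Combining with the conclusion of the first step yields a contradiction whenever two signals induce distinct actions: $a_s>a_{s'}$ would force simultaneously $\mu_s\succeq_{\mathrm{FOSD}}\mu_{s'}$ and $a^*(\mu_s)\le a^*(\mu_{s'})=a_{s'}<a_s$. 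Hence every credible scheme induces the \emph{same} action $a$ at every signal. Bayes-plausibility $\sum_s\phi(s)\mu_s=\mu_0$ combined with linearity of the receiver's expected utility then gives $a\in a^*(\mu_0)$, so the sender's payoff is $\sum_\theta v(a,\theta)\mu_0(\theta)$, the no-communication value. Since the no-communication ``fully uninformative'' scheme is itself credible and attains this value, it is optimal credibly.

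The main obstacle is handling tie-breaking and weak vs.\ strict comparisons cleanly. If $v$ is only weakly supermodular the optimal transport plan need not be uniquely monotone, but credibility only requires that \emph{some} monotone arrangement coincides with $\phi$, so one may replace $\phi$ with a monotone version having the same sender payoff and work with it; similarly the submodular-$u$ comparative statics must be stated as a weak inequality on the sender-favored selection from $a^*(\mu_s)$. A secondary subtlety is justifying that one may freely relabel signals by the induced action (merging signals that induce the same action changes neither sender utility nor credibility), so that the FOSD ordering is well-defined across distinct induced actions.
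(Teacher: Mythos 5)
The paper does not actually prove this lemma: it is imported verbatim from \cite{lin2022credible}, so there is no in-paper argument to compare yours against. Your reconstruction --- cast the sender's deviation problem as an optimal transport problem with fixed marginals, use supermodularity of $v$ to force a comonotone (positively assortative) plan, use submodularity of $u$ and monotone comparative statics to force the opposite ordering of best responses across signals, and conclude that every on-path signal induces the same action, which Bayes-plausibility pins down as optimal at the prior --- is exactly the strategy of the original Lin--Liu proof, and the skeleton is sound.

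There is, however, a genuine gap in the step you yourself flag as ``the main obstacle.'' Under merely weak supermodularity, your fix --- replace $\phi$ with a monotone rearrangement having the same sender payoff and ``work with it'' --- does not go through: rearranging the transport plan changes the posteriors $\mu_s$, so the fixed actions $a_s$ need no longer be receiver-best-responses to the rearranged scheme, and your contradiction requires the support-monotonicity (from credibility) and the receiver's comparative statics (from R-IC) to hold for the \emph{same} scheme. Worse, the statement as literally written is false under weak supermodularity alone: a state-independent $v$ is weakly supermodular, and then \emph{every} Bayes-plausible scheme is trivially credible (all plans with the given marginals yield the sender the same payoff), so by the companion lemma $V_{\textsc{credible}}=V_{\textsc{BP}}$, which strictly exceeds the no-communication value whenever persuasion is profitable --- even if $u$ is strictly submodular. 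Hence the strictness hypotheses carried by the cited theorem are not a technicality to be disposed of by rearrangement; they are what make the swap argument in your first step a \emph{strict} improvement and hence a genuine contradiction. Your argument is correct for the strict case, and you should state (and prove) the lemma with the strictness that \cite{lin2022credible} actually assumes.
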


\begin{lemma}
     When $|A|=2$, for any prior $\mu_0$, $V_{\textsc{BP}}(\mu_0)=V_{\textsc{expost}}(\mu_0)$.
\end{lemma}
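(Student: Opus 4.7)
The plan is to split into cases based on the receiver's best response to the prior, which is either $a_1$ or $a_2$ since $|A|=2$. In each case I will argue directly that the optimal Bayesian persuasion scheme already satisfies the ex-post IR constraint, so the constraint is vacuous.

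\textbf{Case 1: $a^*(\mu_0)=a_2$.} I claim every Bayes-plausible signaling scheme is automatically ex-post IR. Indeed, fix any scheme $\phi$ and any realized signal $s$. Since there are only two actions, $a^*(\mu_s)\in\{a_1,a_2\}$. By \cref{sorted} we have $v(a_1,\theta)\ge v(a_2,\theta)$ for every $\theta$, so
\[
v(a^*(\mu_s),\theta)\;\ge\; v(a_2,\theta)\;=\;v(a^*(\mu_0),\theta)\qquad\forall\,\theta\in\Theta.
\]
Hence the ex-post IR condition holds for every signal, and in particular the optimal Bayesian persuasion scheme is ex-post IR, giving $V_{\textsc{BP}}(\mu_0)=V_{\textsc{expost}}(\mu_0)$.

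\textbf{Case 2: $a^*(\mu_0)=a_1$.} Here $a_1$ is simultaneously the receiver's best response and the sender's favorite action (by \cref{sorted}). The no-information scheme induces $a_1$ at every realization and achieves sender payoff $\hat v(\mu_0)=\sum_\theta v(a_1,\theta)\mu_0(\theta)$. Since no scheme can do better than playing $a_1$ in every state (again by \cref{sorted}), this scheme is an optimal Bayesian persuasion scheme, and it is trivially ex-post IR because the induced posterior action coincides with $a^*(\mu_0)$. Thus $V_{\textsc{BP}}(\mu_0)=\hat v(\mu_0)=V_{\textsc{expost}}(\mu_0)$.

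Combining the two cases with the obvious inequality $V_{\textsc{BP}}\ge V_{\textsc{expost}}$ (ex-post IR is a restriction on the feasible set) yields equality for every prior. The argument is short, and I do not anticipate any real obstacle: the essential observation is that with only two actions, any deviation from the prior best response must go \emph{upward} in the sender's preference order, which is exactly what ex-post IR requires.
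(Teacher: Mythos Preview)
Your proof is correct and follows essentially the same two-case argument as the paper: when $a^*(\mu_0)=a_2$ every signaling scheme is ex-post IR, and when $a^*(\mu_0)=a_1$ no disclosure is both optimal and ex-post IR. You have simply fleshed out the details more carefully than the paper's one-line sketch.
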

The proof of the last lemma is straightforward: $A=\{a_1,a_2\},a_1>a_2$. If $a_0=a_2$, then any signaling scheme is ex-post IR. If $a_0=a_1$, no disclosure is optimal and ex-post IR. Next, we present one example that the optimal sender utility of credible persuasion is higher than the optimal sender utility of ex-post IR Bayesian persuasion and another example that is the opposite.

\begin{example}{$V_{\textsc{credible}}(\mu_0)\ge V_{\textsc{expost}}(\mu_0)$}
\begin{table}[h]
\centering
\begin{tabular}{|l|l|l|}
\hline
  & $\theta_1$ & $\theta_2$ \\ \hline
$a_1$ & 0 & 0 \\ \hline
$a_2$ & 1/2 & 1/2 \\ \hline
$a_3$ & 4 & 4 \\ \hline
\end{tabular}
\begin{tabular}{|l|l|l|}
\hline
  & $\theta_1$ & $\theta_2$ \\ \hline
$a_1$ & 0 & -16 \\ \hline
$a_2$ & -4 & -4 \\ \hline
$a_3$ & -16 & 0 \\ \hline
\end{tabular}
\end{table} 
$v$ is additively separable, for any prior $\mu_0$, $V_{\textsc{BP}}=V_{\textsc{credible}}$. By simple computation, we have  $V_{\textsc{BP}}\ge V_{\textsc{expost}}$. Therefore $V_{\textsc{credible}}(\mu_0)\ge V_{\textsc{expost}}(\mu_0)$.
\end{example}

\begin{example}{$V_{\textsc{expost}}(\mu_0)\ge V_{\textsc{credible}}(\mu_0)$}
\begin{table}[h]
\centering
\begin{tabular}{|l|l|l|}
\hline
  & $\theta_1$ & $\theta_2$ \\ \hline
$a_1$ & 1 & 1 \\ \hline
$a_2$ & 2 & 3 \\ \hline
\end{tabular}
\begin{tabular}{|l|l|l|}
\hline
  & $\theta_1$ & $\theta_2$ \\ \hline
$a_1$ & 1 & 1 \\ \hline
$a_2$ & 2 & -1 \\ \hline
\end{tabular}
\end{table}
$v$ is supermodular, $u$ is submodular and $|A|=2$. Ex-post IR Bayesian persuasion has no gap to Bayesian persuasion, but credible persuasion equals to no communication. So for any $\mu_0$, $V_{\textsc{expost}}=V_{\textsc{BP}}>V_{\textsc{credible}}$.
\end{example}

\end{proof}

\end{document}